\documentclass{lmcs}
\pdfoutput=1
\usepackage[utf8]{inputenc}

% LMCS Layouting Macros
\usepackage{lastpage}
\lmcsdoi{22}{1}{22}
\lmcsheading{}{\pageref{LastPage}}{}{}%
{Jun.~07,~2024}{Mar.~10,~2026}{}

\usepackage{amssymb}
\usepackage{amsmath}
\usepackage{mathtools}
\usepackage{color}
\usepackage{stmaryrd}
\usepackage{shuffle}
\usepackage{leftidx}
\usepackage{ifthen}
\usepackage{stackengine}
\usepackage{todonotes}
\usepackage{hyperref}

\keywords{Graph Grammars, Monadic Second Order Logic, Recognizability, Tree-width}

\begin{document}

\title[Definable Context-Free Sets of Graphs]{Characterizations of
  Monadic Second Order\texorpdfstring{\\}{} Definable Context-Free Sets of Graphs}

\author[R.~Iosif]{Radu Iosif\lmcsorcid{0000-0003-3204-3294}}[a]
\author[F.~Zuleger]{Florian Zuleger\lmcsorcid{0000-0003-1468-8398}}[b]

\address{Univ. Grenoble Alpes, CNRS, Grenoble INP, VERIMAG, 38000, France}
\address{Institute of Logic and Computation, Technische Universit\"{a}t Wien, Austria}

\theoremstyle{defC}
\newtheorem{conjC}[thm]{Conjecture}
\theoremstyle{thmC}
\newtheorem{corC}[thm]{Corollary}

\newenvironment{block}[1]{\begin{center}({#1}) \hspace*{2mm}
\begin{minipage}{0.85\textwidth}\em}
{\end{minipage}\end{center}}

% General
\newcommand{\emptyseq}{\varepsilon}
\newcommand{\nat}{\mathbb{N}}
\newcommand{\btw}{\mathsf{\scriptscriptstyle{btw}}}
\newcommand{\arity}{\#}
\newcommand{\arityof}[1]{{\arity{#1}}}
\newcommand{\argsof}[1]{\mathrm{args}({#1})}
\newcommand{\rankof}[1]{\mathrm{rank}({#1})}
\newcommand{\frof}[1]{\mathrm{fr}({#1})}
\newcommand{\lenof}[1]{\mathrm{len}({#1})}
\newcommand{\cardof}[1]{\mathrm{card}({#1})}
\newcommand{\sizeof}[1]{\mathrm{size}({#1})}
\newcommand{\width}[1]{\mathrm{wd}({#1})}
\newcommand{\twd}[1]{\mathrm{twd}({#1})}
\newcommand{\at}[2]{\tuple{{#1}}_{#2}}
\newcommand{\N}{\mathbb{N}}
\newcommand{\univ}{\mathsf{U}}
\newcommand{\denotation}[1]{\llbracket{#1}\rrbracket}
\newcommand{\vars}{\mathbb{X}^{\scriptscriptstyle(1)}}
\newcommand{\Vars}{\mathbb{X}^{\scriptscriptstyle(2)}}
\newcommand{\nonterminals}{\mathbb{U}}
\newcommand{\globports}{\mathbb{P}}
\newcommand{\globstates}{\mathbb{Q}}
\newcommand{\preds}{\mathbb{P}}
\newcommand{\isdef}{\stackrel{\scalebox{0.5}{$\mathsf{def}$}}{=}}
\newcommand{\iffdef}{\stackrel{\scriptscriptstyle{\mathsf{def}}}{\iff}}
\newcommand{\interv}[2]{[{#1},{#2}]}
\newcommand{\tuple}[1]{\langle {#1} \rangle}
\newcommand{\Tuple}[1]{\left\langle {#1} \right\rangle}
\newcommand{\set}[1]{\{ {#1} \}}
\newcommand{\mset}[1]{[\![ {#1} ]\!]}
\newcommand{\Set}[1]{\left\{ {#1} \right\}}
\newcommand{\pow}[1]{\mathrm{pow}({#1})}
\newcommand{\mpow}[1]{\mathrm{mpow}({#1})}
\newcommand{\dom}[1]{\mathrm{dom}({#1})}
\newcommand{\img}[1]{\mathrm{img}({#1})}
\newcommand{\domone}[1]{\mathrm{dom}^1({#1})}
\newcommand{\uindex}{u}
\newcommand{\mleq}{\preceq}
\newcommand{\finsubseteq}{\subseteq_{\mathit{fin}}}
\newcommand{\notof}[1]{\mathrm{not}(#1)}

% Complexity
\newcommand{\bigO}{\mathcal{O}}
\newcommand{\twoexptime}{$2\mathsf{EXP}$}
\newcommand{\threeexptime}{$3\mathsf{EXP}$}
\newcommand{\fourexptime}{$4\mathsf{EXP}$}
\newcommand{\np}{$\mathsf{NP}$}
\newcommand{\ptime}{$\mathsf{PTIME}$}
\newcommand{\conp}{$\mathsf{co}$-$\mathsf{NP}$}
\newcommand{\exptime}{$\mathsf{EXP}$}
\newcommand{\pspace}{$\mathsf{PSPACE}$}

% Decision problems
% \newcommand{\prob}[2]{\mathsf{P}[{#1},{#2}]}
\newcommand{\klprob}[4]{\mathsf{P}^{({#3},{#4})}[{#1},{#2}]}
\newcommand{\sat}[2]{\mathsf{Sat}[{#1},{#2}]}
\newcommand{\klsat}[4]{\mathsf{Sat}^{({#3},{#4})}[{#1},{#2}]}
\newcommand{\tight}[2]{\mathsf{Tight}[{#1},{#2}]}
\newcommand{\kltight}[4]{\mathsf{Tight}^{({#3},{#4})}[{#1},{#2}]}
\newcommand{\loose}[2]{\mathsf{Loose}[{#1},{#2}]}
\newcommand{\klloose}[4]{\mathsf{Loose}^{({#3},{#4})}[{#1},{#2}]}
\newcommand{\bound}[2]{\mathsf{Bnd}[{#1},{#2}]}
\newcommand{\klbound}[4]{\mathsf{Bnd}^{({#3},{#4})}[{#1},{#2}]}
\newcommand{\entl}[3]{\mathsf{Entl}[{#1},{#2},{#3}]}
\newcommand{\klentl}[5]{\mathsf{Entl}^{{#4},{#5}}[{#1},{#2},{#3}]}

% Signature
\newcommand{\signature}{\Sigma}
\newcommand{\fsignature}{\mathcal{F}}
\newcommand{\dsignature}{\mathcal{D}}
\newcommand{\tsignature}{\mathcal{T}}
\newcommand{\hrsignature}{\fsignature_{\mathsf{HR}}}
\newcommand{\treesignature}{\fsignature_{\mathsf{tree}}}
\newcommand{\parsesignature}{\fsignature_{\mathsf{parse}}}
\newcommand{\parsesignatureof}[1]{\fsignature^{#1}_{\mathsf{parse}}}
\newcommand{\fsignatureTwo}{\mathcal{R}}
\newcommand{\relations}{\mathbb{R}}
\newcommand{\constants}{\mathbbl{C}}
\newcommand{\alphabet}{\mathbb{A}}
\newcommand{\alphabetTwo}{\mathbb{B}}
\newcommand{\alphabetParse}{\alphabetTwo_{\mathsf{parse}}}
\newcommand{\alphabetParseOf}[1]{\alphabetTwo^{#1}_{\mathsf{parse}}}
\newcommand{\alphabetDecomp}{\alphabetTwo_{\mathsf{decomp}}}
\newcommand{\relsiggraph}[1]{\relations^{#1}_\mathsf{graph}}
\newcommand{\relsigtree}[1]{\relations^{#1}_\mathsf{tree}}
\newcommand{\relsigdecomp}[1]{\relations^{#1}_\mathsf{decomp}}
\newcommand{\relof}[2]{\relations^{\!\!\scriptscriptstyle{#2}}_{#1}}
\newcommand{\monoid}{\mathcal{M}}
\newcommand{\asucc}{\mathfrak{s}}
\newcommand{\arel}{\mathsf{r}}
\newcommand{\brel}{\mathsf{b}}
\newcommand{\erel}{\mathsf{e}}
\newcommand{\srel}{\mathsf{s}}
\newcommand{\qrel}{\mathsf{q}}
\newcommand{\prel}{\mathsf{p}}
\newcommand{\acst}{\mathsf{c}}
\newcommand{\bcst}{\mathsf{b}}
\newcommand{\ecst}{\mathsf{e}}
\newcommand{\uterm}{\mathsf{u}}
\newcommand{\vterm}{\mathsf{v}}
\newcommand{\nrel}{N}
\newcommand{\ncst}{M}
\newcommand{\nrule}{R}
\newcommand{\npred}{P}

% Transition System
\newcommand{\posfunc}{P}
\newcommand{\profile}[1]{\mathfrak{P}_{\scriptscriptstyle{#1}}}
\newcommand{\beh}{\mathbb{B}}
\newcommand{\behof}[1]{\beh({#1})}
\newcommand{\states}{\mathcal{Q}}
\newcommand{\state}{q}
\newcommand{\initstates}{\mathcal{I}}
\newcommand{\initstate}{\iota}
\newcommand{\finstates}{\mathcal{F}}
\newcommand{\finstate}{q_{\scriptscriptstyle{\mathsf{fin}}}}
\newcommand{\statesof}[1]{\mathit{states}({#1})}
\newcommand{\ports}{\Pi}
\newcommand{\smallstate}{q}
\newcommand{\arrow}[2]{\xrightarrow{{\scriptscriptstyle #1}}_{{\scriptstyle #2}}}
\newcommand{\Arrow}[2]{\xRightarrow{{\scriptscriptstyle #1}}_{\raisebox{6pt}{\!$\scriptstyle{#2}$}}}
\newcommand{\transitionsystem}{T}
\newcommand{\statemachines}{\mathbb{B}}
\newcommand{\arun}{\theta}
\newcommand{\pre}[1]{\leftidx{^\bullet}{\!{#1}}{}}
\newcommand{\post}[1]{\leftidx{}{{#1}}{^\bullet}}
\newcommand{\prepost}[1]{\leftidx{^\bullet}{\!{#1}}{^\bullet}}
\newcommand{\reach}{\leadsto}
\newcommand{\occurs}[1]{\mathsf{Occur}({#1})}
\newcommand{\runeq}[1]{\approx_{#1}}
\newcommand{\runclass}[2]{[{#2}]_{#1}}

\newcommand{\auto}[2]{\mathcal{A}_{
    {#1}
    \ifthenelse{\equal{#2}{}}{}{,{#2}}
}}

\newcommand{\satof}[1]{{#1}^{\scriptscriptstyle\mathsf{sat}}}

\newcommand{\autsat}[2]{\mathcal{A}^{\scriptscriptstyle\mathsf{sat}}_{
    {#1}
    \ifthenelse{\equal{#2}{}}{}{,{#2}}
}}

\newcommand{\satstates}{\states^{\scriptscriptstyle\mathsf{sat}}}
\newcommand{\satinitstates}{\initstates^{\scriptscriptstyle\mathsf{sat}}}
\newcommand{\satfinstates}{\finstates^{\scriptscriptstyle\mathsf{sat}}}
\newcommand{\sattrans}{\trans^{\scriptscriptstyle\mathsf{sat}}}

\newcommand{\cutof}[1]{{#1}^{\scriptscriptstyle\mathsf{cut}}}

\newcommand{\autcut}[2]{\mathcal{A}^{\scriptscriptstyle\mathsf{cut}}_{
    {#1}
    \ifthenelse{\equal{#2}{}}{}{,{#2}}
}}

\newcommand{\cstof}[1]{{#1}^{\scriptscriptstyle\mathsf{cst}}}

\newcommand{\autcst}[2]{\mathcal{A}^{\scriptscriptstyle\mathsf{cst}}_{
    {#1}
    \ifthenelse{\equal{#2}{}}{}{,{#2}}
  }
}

\newcommand{\conststates}{\states^{\scriptscriptstyle\mathsf{cst}}}
\newcommand{\constinitstate}{\initstate^{\scriptscriptstyle\mathsf{cst}}}
\newcommand{\consttrans}{\trans^{\scriptscriptstyle\mathsf{cst}}}
\newcommand{\cutnode}[2]{{#1}^{\scriptscriptstyle{\Uparrow}}_{#2}}

\newcommand{\authpers}[2]{\mathcal{A}^{\scriptscriptstyle\mathsf{pers}}_{{#1},{#2}}}
\newcommand{\persstates}{\states^{\scriptscriptstyle\mathsf{pers}}}
\newcommand{\persinitstates}{\initstates^{\scriptscriptstyle\mathsf{pers}}}
\newcommand{\perstrans}{\trans^{\scriptscriptstyle\mathsf{pers}}}
\newcommand{\authcolor}[2]{\mathcal{A}^{\scriptscriptstyle\mathsf{color}}_{{#1},{#2}}}
\newcommand{\asidcolor}{\asid^{\scriptscriptstyle\mathsf{color}}}
\newcommand{\colorstates}{\states^{\scriptscriptstyle\mathsf{color}}}
\newcommand{\colorinitstates}{\initstates^{\scriptscriptstyle\mathsf{color}}}
\newcommand{\colortrans}{\trans^{\scriptscriptstyle\mathsf{color}}}
\newcommand{\weightof}[1]{\omega({#1})}
\newcommand{\parikh}[1]{\mathrm{Pk}({#1})}
\newcommand{\shuffleof}[1]{{#1}^\shuffle}
\newcommand{\preordof}[1]{\mathrm{pre}({#1})}

\makeatletter
\newcommand*{\da@rightarrow}{\mathchar"0\hexnumber@\symAMSa 4B }
\newcommand*{\da@leftarrow}{\mathchar"0\hexnumber@\symAMSa 4C }
\newcommand*{\xdashrightarrow}[2][]{%
  \mathrel{%
    \mathpalette{\da@xarrow{#1}{#2}{}\da@rightarrow{\,}{}}{}%
  }%
}
\newcommand{\xdashleftarrow}[2][]{%
  \mathrel{%
    \mathpalette{\da@xarrow{#1}{#2}\da@leftarrow{}{}{\,}}{}%
  }%
}
\newcommand*{\da@xarrow}[7]{%
  % #1: below
  % #2: above
  % #3: arrow left
  % #4: arrow right
  % #5: space left
  % #6: space right
  % #7: math style
  \sbox0{$\ifx#7\scriptstyle\scriptscriptstyle\else\scriptstyle\fi#5#1#6\m@th$}%
  \sbox2{$\ifx#7\scriptstyle\scriptscriptstyle\else\scriptstyle\fi#5#2#6\m@th$}%
  \sbox4{$#7\dabar@\m@th$}%
  \dimen@=\wd0 %
  \ifdim\wd2 >\dimen@
    \dimen@=\wd2 %
  \fi
  \count@=2 %
  \def\da@bars{\dabar@\dabar@}%
  \@whiledim\count@\wd4<\dimen@\do{%
    \advance\count@\@ne
    \expandafter\def\expandafter\da@bars\expandafter{%
      \da@bars
      \dabar@
    }%
  }%
  \mathrel{#3}%
  \mathrel{%
    \mathop{\da@bars}\limits
    \ifx\\#1\\%
    \else
      _{\copy0}%
    \fi
    \ifx\\#2\\%
    \else
      ^{\copy2}%
    \fi
  }%
  \mathrel{#4}%
}
\makeatother

\newcommand{\Open}[2]{\xdashrightarrow{{\scriptstyle #1}}_{\raisebox{0pt}{\!$\scriptstyle{#2}$}}}

% \newcommand{\squig}{{\scriptstyle\sim\mkern-3.9mu}}

% Configuration
\newcommand{\store}{\mathfrak{s}}
\newcommand{\struc}{\sigma}
\newcommand{\stutterof}[1]{
        {#1}^{\scriptscriptstyle{\parallel}}
}
\newcommand{\substruc}{\sqsubseteq}
\newcommand{\gstruc}[1]{|\!|{#1}|\!|}
\newcommand{\gstrucall}[1]{\gstruc{#1}^+}
\newcommand{\strucof}[1]{\mathcal{S}({#1})}
\newcommand{\cardconstr}[3]{\mathsf{card}_{{#2},{#3}}({#1})}
\newcommand{\astruc}{\mathsf{S}}
\newcommand{\adomof}[1]{\universeGeneral_{#1}}
\newcommand{\interpof}[1]{\struc_{#1}}
\newcommand{\glue}{\mathit{glue}}
\newcommand{\absglue}{\glue^\sharp}
\newcommand{\glueof}[2]{\glue({#1},{#2})}
\newcommand{\absglueof}[2]{\absglue({#1},{#2})}
\newcommand{\fgcst}[1]{\mathit{fgcst}_{#1}}
\newcommand{\absfgcst}[1]{\mathit{fgcst}^\sharp_{#1}}
\newcommand{\qrof}[1]{\mathrm{qr}({#1})}
\newcommand{\twsid}[1]{\asid({#1})}
\newcommand{\twformsid}[2]{\asid({#1},{#2})}
\newcommand{\typesof}[1]{\mathbb{F}^{#1}_{\scriptscriptstyle{\mathsf{MSO}}}}
\newcommand{\atype}{\tau}
\newcommand{\typeof}[1]{\mathrm{sort}({#1})}
\newcommand{\absof}[1]{{#1}^\sharp}
\newcommand{\tweq}[1]{\mathit{Tw}({#1})}
\newcommand{\abstweq}[1]{\mathit{Tw}^\sharp({#1})}
\newcommand{\Dom}[1]{\mathrm{Dom}({#1})}
\newcommand{\Rel}[1]{\mathrm{Rel}({#1})}
\newcommand{\ballof}[2]{\beta_{#2}({#1})}
\newcommand{\strucset}{\mathcal{S}}
\newcommand{\plusdom}{\mathit{add}_\domsymb}
\newcommand{\minusdom}{\mathit{rem}_\domsymb}
\newcommand{\plustd}[2]{\mathit{add}^{#1}_\domsymb({#2})}

\newcommand{\empstruc}{\varepsilon}
\newcommand{\disj}{\bot}
\newcommand{\ndisj}{\not\!\!\bot}
\newcommand{\statemap}{\varrho}
\newcommand{\config}{(\comps,\interacs,\statemap,\store)}
\newcommand{\aconfig}{\gamma}
\newcommand{\nodesof}[1]{\mathrm{nodes}({#1})}
\newcommand{\configg}[1]{(\struc{#1}, \store{#1}, \statemap{#1})}
\newcommand{\configset}{\mathbf{\Gamma}}
\newcommand{\configgstate}[1]{(\struc, \store, \statemap{#1})}
\newcommand{\errconfigs}{\top}
\newcommand{\ahavoc}{\mathfrak{h}}
\newcommand{\havocrule}{$\mathsf{Havoc}$}
\newcommand{\schange}{\mathfrak{c}}
\newcommand{\achange}[2]{\schange{[{#1},{#2}]}}
\newcommand{\wchange}[1]{\schange[{#1}]}
\newcommand{\degreenode}[2]{\delta_{#1}({#2})}
\newcommand{\degreeof}[1]{\degreenode{}{#1}}

% Separation Algebra
\newcommand{\cia}{\textsf{CIA}}
\newcommand{\sepalg}{\Sigma}
\newcommand{\scomp}{\uplus}
\newcommand{\gcomp}{\bullet}
\newcommand{\comp}{\bullet}
\newcommand{\abscomp}{\comp^\sharp}
\newcommand{\Comp}{\raisebox{-3pt}{\scalebox{1.8}{$\bullet$}}}
\newcommand{\subconfig}{\sqsubseteq}
\newcommand{\bigcomp}{\scalebox{2}{$\comp$}}
\newcommand{\substreq}{\sqsubseteq}
\newcommand{\substr}{\sqsubset}
\newcommand{\isubstreq}{\substreq}
\newcommand{\isubstr}{\substr}
\newcommand{\iequiv}{=}
\newcommand{\unitsepalg}{E}
\newcommand{\algelement}{e}
\newcommand{\lift}[2]{{#1}\!\!\uparrow^{\scriptscriptstyle{#2}}}

\renewcommand{\smallsetminus}{\!\setminus\!}

% Architecture Logic
\newcommand{\cl}{\textsf{CL}}
\newcommand{\predname}[1]{\mathsf{#1}}
\newcommand{\true}{\mathrm{true}}
\newcommand{\apred}{\predname{A}}
\newcommand{\bpred}{\predname{B}}
\newcommand{\ainit}{\apred^{\scriptscriptstyle\mathsf{init}}}
\newcommand{\binit}{\bpred^{\scriptscriptstyle\mathsf{init}}}
\newcommand{\spred}{\predname{S}}
\newcommand{\appred}{{\apred'}}
\newcommand{\bppred}{{\bpred'}}
\newcommand{\psid}{{\widetilde{\asid}}}
\newcommand{\emp}{\predname{emp}}
\newcommand{\predstate}{\predname{state}}
\newcommand{\symconfs}{\mathbb{S}}
\let\Asterisk\undefined
\newcommand{\Asterisk}{\mathop{\scalebox{1.9}{\raisebox{-0.2ex}{$\ast$}}}\hspace*{1pt}}%
\renewcommand{\vec}[1]{\mathbf #1}
\newcommand{\fv}[1]{\mathrm{fv}({#1})}
\newcommand{\spaceform}{\xi}
\newcommand{\compin}[2]{{#1}@{#2}}
\newcommand{\compact}[1]{[{#1}]}
\newcommand{\compactin}[2]{\compin{\compact{#1}}{#2}}
\newcommand{\interacn}[4]{\tuple{{#1}.\mathit{#2}, \ldots, {#3}.\mathit{#4}}}
\newcommand{\interactwo}[4]{\tuple{{#1}.\mathit{#2}, {#3}.\mathit{#4}}}
\newcommand{\formeq}[1]{\approx_{#1}}
\newcommand{\formneq}[1]{{\not\approx}_{#1}}
\newcommand{\closeq}[2]{[{#1}]^{\formeq{#2}}}

% Second Order Logics
\newcommand{\cmso}{$\mathsf{CMSO}$}
\newcommand{\sol}{\textsf{SO}}
\newcommand{\fol}{\textsf{FO}}
\newcommand{\sigsol}[1]{$\mathsf{SO}({#1})$}
\newcommand{\mso}{\textsf{MSO}}
\newcommand{\mscmso}{\textsf{(C)MSO}}
\newcommand{\sigmso}[1]{$\mathsf{MSO}({#1})$}
\newcommand{\wss}{\textsf{WS1S}}
\newcommand{\solmso}{\textsf{(M)SO}}
\newcommand{\Models}{\models}

% Separation Logic
\newcommand{\seplog}{\textsf{SL}}
\newcommand{\csl}{\textsf{CSL}}
\newcommand{\slr}{\textsf{SLR}}
\newcommand{\tb}{\textsf{TWB}}
\newcommand{\tbfo}{$\mathsf{TWB^{\scriptscriptstyle{\mathsf{FOL}}}}$}
\newcommand{\tbsl}{$\mathsf{TWB^{\scriptscriptstyle{\mathsf{SLR}}}}$}
\newcommand{\gl}{\textsf{GL}}
\newcommand{\sigslr}[1]{$\mathsf{SLR}({#1})$}
\newcommand{\finmap}{\rightharpoonup_{\scriptscriptstyle\mathit{fin}}}
\newcommand{\rank}{\mathfrak{K}}
\newcommand{\heap}{\mathsf{h}}
\newcommand{\heapcomp}{\uplus}
\newcommand{\Heapcomp}{\biguplus}
\newcommand{\seq}{\dot{=}}
\newcommand{\sneq}{\dot{\neq}}
\newcommand{\slstore}{\overline{\store}}
\newcommand{\slsid}{{\overline{\asid}}}
\newcommand{\slmodels}{\Vdash}
\newcommand{\slmodelsid}{\slmodels_{\scriptscriptstyle\slsid}}
\newcommand{\dep}[1]{\preceq_{\scriptscriptstyle #1}}
\newcommand{\deps}[2]{\mathrm{dep}_{\scriptscriptstyle #1}({#2})}
\newcommand{\depson}{\leadsto}
\newcommand{\xdepson}{\dot{\leadsto}}
\newcommand{\pureform}{\pi}
\newcommand{\ppureform}{\overline{\pi}}
\newcommand{\abstruc}{\struc^\sharp}

\newcommand{\pos}[3]{\mathrm{pos}({#1},{#2},{#3})}
\newcommand{\ipos}[2]{{\mathrm{inter}}({#1},{#2})}
\newcommand{\xipos}[2]{\mathcal{Z}_{#1}({#2})}
\newcommand{\spos}[1]{{\mathrm{state}}({#1})}
\newcommand{\gaifman}[1]{\mathbb{G}({#1})}
\newcommand{\gaifimg}[1]{\eta({#1})}
\newcommand{\intermap}{\iota}
\newcommand{\annotate}[2]{\overline{#2}^{#1}_{\intermap^{#1}}}
\newcommand{\xannot}[3]{\overline{#3}^{#1}_{#2}}
\newcommand{\xiatoms}[3]{\mathcal{I}_{#1}^{#2}({#3})}
\newcommand{\xituples}[3]{\mathsf{Tuples}_{#1}^{#2}({#3})}
\newcommand{\compstate}[2]{\mathsf{CompStates}_{#1}({#2})}
\newcommand{\interparam}[2]{\mathsf{InterAtoms}_{#1}({#2})}
\newcommand{\poly}[1]{\mathsf{poly}\left({#1}\right)}

% Inductive Definitions
% \newcommand{\unfold}[1]{\Leftarrow_{#1}}
\newcommand{\plunfold}[1]{\unfold{#1}^{\circ}}
\newcommand{\unfoldrule}{\leftarrow}
\newcommand{\asid}{\Delta}
\newcommand{\csid}{\Gamma}
\newcommand{\empsid}{\asid^{\scriptscriptstyle{\mathsf{emp}}}}
\newcommand{\annotsid}{\asid^{\scriptscriptstyle{\mathsf{lab}}}}
\newcommand{\imodels}[3]{~\models^{#1\Downarrow{#3}}_{#2}}
\newcommand{\modelsid}{\models_{\scriptscriptstyle\asid}}
\newcommand{\aasid}{\asid^\sharp}
\newcommand{\arule}{\rho}
\newcommand{\ring}[2]{\predname{ring}_{{#1},{#2}}}
\newcommand{\chain}[2]{\predname{chain}_{{#1},{#2}}}
\newcommand{\chainsymconfs}{\symconfs_{\predname{chain}}}
\newcommand{\size}[1]{\mathrm{size}({#1})}
\newcommand{\instports}[2]{\mathit{prov}({#1},{#2})}
\newcommand{\usedports}[2]{\mathit{req}({#1},{#2})}
\newcommand{\symconfeq}[1]{\simeq_{#1}}
\newcommand{\symconfneq}[1]{\not\simeq_{#1}}
\newcommand{\defn}[2]{\mathrm{def}_{#1}({#2})}
\newcommand{\defs}[2]{\mathrm{def}^*_{#1}({#2})}
\newcommand{\defnof}[1]{\mathrm{def}({#1})}
\newcommand{\stageno}{\mathfrak{n}}
\newcommand{\stagemset}{\mathfrak{m}}

% Programming Language
\newcommand{\acomm}{\predname{R}}
\newcommand{\primcomms}{\mathfrak{P}}
\newcommand{\localcomms}{\mathfrak{L}}
\newcommand{\new}{\predname{new}}
\newcommand{\delete}{\predname{delete}}
\newcommand{\connect}{\predname{connect}}
\newcommand{\disconnect}{\predname{disconnect}}
\newcommand{\predwhen}[2]{\predname{when}\;#1\;\predname{do}\;#2}
\newcommand{\predwith}[2]{\predname{with}\;#1\;\predname{do}\;#2}
\newcommand{\whencomm}[2]{\predname{when}\;#1\;\predname{do}\;#2}
\newcommand{\withcomm}[3]{\predname{with}\;{#1}:{#2}\;\predname{do}\;{#3}\;\predname{od}}
\newcommand{\withcommv}[3]{\begin{array}{l}
\predname{with}\;{#1}:{#2}\;\predname{do} \\ \hspace*{4mm} {#3}\; \predname{od}
\end{array}}
\newcommand{\assume}{\predname{assume}}
\newcommand{\when}{\predname{when}}
\newcommand{\preddo}{\predname{do}}
\newcommand{\havoc}{\predname{havoc}}
\newcommand{\predtrue}{\predname{true}}
\newcommand{\predfalse}{\predname{false}}
\newcommand{\error}{\predname{error}}
\newcommand{\predskip}{\predname{skip}}
\newcommand{\skipcomm}{\predname{skip}}
\renewcommand{\mod}{~\mathrm{mod}~}

\newcommand{\closed}[1]{\arrow{#1}{c}}
\newcommand{\interaction}{\mathcal{A}}

\newcommand{\succj}[3]{{#1}: {#2} \leadsto {#3}}
\newcommand{\abort}[2]{{#1}: {#2} ~\text{\rotatebox[origin=c]{90}{$\leadsto$}}}
\newcommand{\gfp}{\mathrm{gfp}}
\newcommand{\lfp}{\mathrm{lfp}}
\newcommand{\rbr}{{\bf ]\!]}}
\newcommand{\lbr}{{\bf [\![}}
\newcommand{\sem}[1]{{\lbr #1 \rbr}}
\newcommand{\gsem}[2]{\sem{#1}_\domsymb^{{#2}}}
\newcommand{\sidsem}[2]{\sem{{#1}}_{#2}}
\newcommand{\gsidsem}[3]{{\sidsem{#1}{#2}}_\domsymb^{{#3}}}
\newcommand{\csem}[2]{\sem{{#1}}^{\scriptscriptstyle\mathsf{c}}_{#2}}
\newcommand{\rcsem}[2]{\sem{{#1}}^{\scriptscriptstyle\mathsf{r}}_{#2}}
\newcommand{\intfusion}[1]{\mathtt{IF}({#1})}

\newcommand{\sintfusion}[2]{\widetilde{\mathtt{IF}}({#1}\ifthenelse{\equal{#2}{}}{}{,{#2}})}

\newcommand{\extfusion}[2]{\mathtt{EF}({#1},{#2})}
\newcommand{\reachfusion}[1]{\mathtt{EF}^*({#1})}
\newcommand{\ireachfusion}[1]{\mathtt{IEF}^*({#1})}
\newcommand{\minor}[1]{\mathtt{MIN}({#1})}

\newcommand{\lng}{{\langle \! \langle}}
\newcommand{\rng}{{\rangle \! \rangle}}
\newcommand{\semcomm}[1]{\lng #1 \rng}
\newcommand{\step}[1]{\Rightarrow_{\scriptscriptstyle{#1}}}
\newcommand{\diseq}{\mathfrak{d}}

% Hoare Logic
\newcommand{\hoare}[3]{\{ {#1} \} ~\mathsf{#2}~ \{ {#3} \}}
\newcommand{\cond}[1]{\{ {#1} \}}
\newcommand{\modif}[1]{\predname{modif}({#1})}
\newcommand{\hinv}{(\sharp)}

% Havoc Rules
\newcommand{\open}{\mathfrak{o}}
\newcommand{\looserule}{$\mathsf{Loose}$}
\newcommand{\tightrule}{$\mathsf{Tight}$}
\newcommand{\aopen}[2]{\open{[{#1},{#2}]}}
\newcommand{\interof}[1]{\Sigma[{#1}]}
\newcommand{\are}{\predname{L}}
\newcommand{\lang}[1]{\mathcal{L}({#1})}
\newcommand{\langof}[2]{\mathcal{L}_{#1}({#2})}
\newcommand{\treelangof}[2]{\mathcal{T}_{#1}({#2})}
\newcommand{\runsof}[3]{\mathcal{R}^{#1}_{~{#2}}({#3})}
\newcommand{\parcomp}{\bowtie}
\newcommand{\lpar}{\{\!\!\{}
\newcommand{\rpar}{\}\!\!\}}
\newcommand{\aenv}{\eta}
\newcommand{\havoctriple}[4]{{#1}\triangleright\lpar{#2}\rpar~{#3}~\lpar{#4}\rpar}
\newcommand{\semlang}[2]{\lng {#1} \rng({#2})}
\newcommand{\requiv}[1]{\cong_{\scriptstyle{#1}}}
\newcommand{\congrec}[2]{\cong^{\scriptscriptstyle{#1}}_{\scriptscriptstyle{#2}}}
\newcommand{\wopen}[1]{\open{[{#1}]}}
\newcommand{\proj}[2]{{#1}\!\!\downharpoonleft_{\scriptscriptstyle{#2}}}
\newcommand{\projrel}[2]{\pi_{#2}({#1})}
\newcommand{\supp}[1]{\mathrm{supp}({#1})}
\newcommand{\iatoms}[1]{\mathrm{inter}({#1})}
\newcommand{\patoms}[1]{\mathrm{preds}({#1})}
\newcommand{\atoms}[1]{\mathrm{atoms}({#1})}
\newcommand{\disabled}[2]{{#1} \dagger {#2}}
\newcommand{\excluded}[2]{{#1} \ddagger {#2}}
\newcommand{\struceq}{\simeq}
\newcommand{\strucgeq}{\succeq}

\newcommand{\inter}{$\Sigma$}
\newcommand{\ui}{$\cup$}
\newcommand{\ue}{$\subset$}
\newcommand{\congr}{$\requiv{}$}
\newcommand{\lu}{$\mathsf{LU}$}
\newcommand{\conseq}{$\mathsf{C}$}
\newcommand{\disr}{$\dagger$}
\newcommand{\supr}{\ii}
\newcommand{\ii}{$\ainterac-$}
\newcommand{\iidisr}{$\ainterac$\disr}
\newcommand{\ie}{$\ainterac+$}
\newcommand{\parr}{$\parcomp$}
\newcommand{\botr}{$\bot$}
\newcommand{\epsilonr}{$\epsilon$}
\newcommand{\front}[2]{\mathcal{F}({#1},{#2})}

% Havoc Prooftrees
\newcommand{\havocline}[4]{
  \UnaryInfC{$\havoctriple{#1}{#2}{#3}{#4}$} }
\newcommand{\binhavocline}[4]{
  \BinaryInfC{$\havoctriple{#1}{#2}{#3}{#4}$}
}
\newcommand{\trihavocline}[4]{
  \TrinaryInfC{$\havoctriple{#1}{#2}{#3}{#4}$}
}
\newcommand{\quahavocline}[4]{
  \QuaternaryInfC{$\havoctriple{#1}{#2}{#3}{#4}$}
}
\newcommand{\havocblock}[4]{
  \UnaryInfC{${#1}$}
  \noLine
  \UnaryInfC{$\triangleright~\lpar{#2}\rpar~{#3}~\lpar{#4}\rpar$}
}
\newcommand{\binhavocblock}[4]{
  \BinaryInfC{${#1}$}
  \noLine
  \UnaryInfC{$\triangleright~\lpar{#2}\rpar~{#3}~\lpar{#4}\rpar$}
}
\newcommand{\trihavocblock}[4]{
  \TrinaryInfC{${#1}$}
  \noLine
  \UnaryInfC{$\triangleright~\lpar{#2}\rpar~{#3}~\lpar{#4}\rpar$}
}
\newcommand{\havoctwolines}[4]{
  \UnaryInfC{${#1}\triangleright~\lpar{#2}\rpar$}
  \noLine
  \UnaryInfC{${#3}~\lpar{#4}\rpar$}
}
\newcommand{\binhavoctwolines}[4]{
  \BinaryInfC{${#1}\triangleright~\lpar{#2}\rpar$}
  \noLine
  \UnaryInfC{${#3}~\lpar{#4}\rpar$}
}
\newcommand{\trihavoctwolines}[4]{
  \TrinaryInfC{${#1}\triangleright~\lpar{#2}\rpar$}
  \noLine
  \UnaryInfC{${#3}~\lpar{#4}\rpar$}
}
\newcommand{\quahavoctwolines}[4]{
  \QuaternaryInfC{${#1}\triangleright~\lpar{#2}\rpar$}
  \noLine
  \UnaryInfC{${#3}~\lpar{#4}\rpar$}
}
\newcommand{\havocfourlines}[4]{
  \UnaryInfC{${#1}$}
  \noLine
  \UnaryInfC{$\triangleright~\lpar{#2}\rpar$}
  \noLine
  \UnaryInfC{${#3}$}
  \noLine
  \UnaryInfC{$\lpar{#4}\rpar$}
}
\newcommand{\binhavocfourlines}[4]{
  \BinaryInfC{${#1}$}
  \noLine
  \UnaryInfC{$\triangleright~\lpar{#2}\rpar$}
  \noLine
  \UnaryInfC{${#3}$}
  \noLine
  \UnaryInfC{$\lpar{#4}\rpar$}
}
\newcommand{\trihavocfourlines}[4]{
  \TrinaryInfC{${#1}$}
  \noLine
  \UnaryInfC{$\triangleright~\lpar{#2}\rpar$}
  \noLine
  \UnaryInfC{${#3}$}
  \noLine
  \UnaryInfC{$\lpar{#4}\rpar$}
}
\newcommand{\havocthreelines}[4]{
  \UnaryInfC{${#1}~\triangleright~\lpar{#2}\rpar$}
  \noLine
  \UnaryInfC{${#3}$}
  \noLine
  \UnaryInfC{$\lpar{#4}\rpar$}
}

% Havoc Abbreviations Chain
\newcommand{\interofyxch}{{\Sigma_{y,x}^1}}
\newcommand{\interofyxct}{{\Sigma_{y,x}^2}}
\newcommand{\interofzy}{{\Sigma_{z,y}}}
\newcommand{\interofyv}{{\Sigma_{y,v}}}
\newcommand{\interofvxch}{{\Sigma_{v,x}^1}}
\newcommand{\interofvxct}{{\Sigma_{v,x}^2}}

% Havoc Abbreviations Trees
\newcommand{\interoftree}[1]{\interof{ \tree({#1}) }}

% Comments
\newcommand{\mycomment}[2]{\textcolor{blue}{[#1]} \textcolor{red}{#2}}
% \newcommand{\mycomment}[2]{}

% Example Token Ring
\newcommand{\tokcomp}{\mathsf{S}}
\newcommand{\tokinterac}{\mathsf{T}}
\newcommand{\toktoken}{\mathsf{T}}
\newcommand{\toknotok}{\mathsf{H}}

\newcommand{\tokin}{\textit{in}}
\newcommand{\tokout}{\textit{out}}

% Example Tree Network
\newcommand{\treenode}{\mathit{Node}}
\newcommand{\treeleaf}{\mathit{Leaf}}
\newcommand{\treeroot}{\mathit{Root}}
\newcommand{\linterac}{I_\ell}
\newcommand{\rinterac}{I_r}
\newcommand{\stateidle}{\mathit{idle}}
\newcommand{\stateleft}{\mathit{left}}
\newcommand{\stateright}{\mathit{right}}
\newcommand{\statestart}{\mathit{start}}
\newcommand{\stateleafidle}{\mathit{leaf\!\_idle}}
\newcommand{\stateleafbusy}{\mathit{leaf\!\_busy}}
\newcommand{\lrecv}{\mathit{\ell\!\_rcv}}
\newcommand{\rrecv}{\mathit{r\!\_rcv}}
\newcommand{\send}{\mathit{snd}}
\newcommand{\treeseg}{\predname{tseg}}
\newcommand{\treesegrec}{\predname{tsegrec}}
\newcommand{\treeidle}{\tree_{\stateidle}}
\newcommand{\treemaybeidle}{\tree_{(\stateidle)}}
\newcommand{\treenotidle}{\tree_{\neg\stateidle}}
\newcommand{\treestar}{\tree_{\star}}

% Not needed
% \newcommand{\SL}{SL}
% \newcommand{\proglang}{L}
% \newcommand{\axioms}{Ax}

% Auxiliary

\newcommand{\inmathinf}[3]{\{\;#1\;\}\;\,#2\;\,\{\;#3\;\}}
\newcommand{\infformula}[3]{$\{\;#1\;\}\;\,#2\;\,\{\;#3\;\}$}
\newcommand{\codesize}{\footnotesize}
\newcommand{\boxaround}[1]{\ovalbox{${#1}$}}
\newcommand*\circled[1]{\tikz[baseline=(char.base)]{
            \node[shape=circle,draw,inner sep=2pt] (char) {#1};}}
\newcommand{\havoccond}[1]{\ahavoc(\sem{#1}{\asid}) \subseteq \sem{#1}{\asid}}

% Base tuples

\newcommand{\closureof}[1]{\mathrm{cl}({#1})}
\newcommand{\constrof}[1]{\mathrm{dist}({#1})}
\newcommand{\basecomps}{\comps^\sharp}
\newcommand{\pbasecomps}{\overline{\comps}^\sharp}
\newcommand{\baseinteracs}{\interacs^\sharp}
\newcommand{\pbaseinteracs}{\overline{\interacs}^\sharp}
\newcommand{\basetuple}{(\basecomps, \baseinteracs, \pureform)}
\newcommand{\basetuplen}[1]{(\basecomps_{#1}, \baseinteracs_{#1}, \pureform_{#1})}
\newcommand{\pbasetuplen}[1]{(\pbasecomps_{#1}, \pbaseinteracs_{#1}, \ppureform_{#1})}
\newcommand{\satbasetuples}{\mathsf{SatBase}}
\newcommand{\satbasetuplesof}[1]{\satbasetuples_{\scriptscriptstyle\set{#1}}}
\newcommand{\abasetuple}{\mathfrak{t}}
\newcommand{\anbasetuple}{\mathfrak{u}}
\newcommand{\loosetuples}{\mathsf{Loose}}
\newcommand{\basetupleof}[2]{\mathsf{Base}({#1},{#2})}
\newcommand{\basepairof}[1]{\mathsf{Base}({#1})}
\newcommand{\loosetupleof}[2]{\mathsf{Loose}({#1},{#2})}
\newcommand{\absfunc}{\mathcal{F}}
\newcommand{\Absfunc}{\overrightarrow{\absfunc}}
\newcommand{\basecomp}{\otimes}
\newcommand{\loosecomp}{\odot}
\newcommand{\Loosecomp}{\Odot}
\newcommand{\Basecomp}{\bigotimes}
\newcommand{\basesid}{\asid^\sharp}
\newcommand{\loosesid}{\asid^\flat}
\newcommand{\basevartuple}{\overrightarrow{\mathcal{X}}}
\newcommand{\basevarof}[1]{\mathcal{X}({#1})}
\newcommand{\leastbase}{\mu\basevartuple.\basesid}
\newcommand{\loosebase}{\mu\basevartuple.\loosesid}
\newcommand{\leastbaseof}[1]{\leastbase({#1})}
\newcommand{\loosebaseof}[1]{\loosebase({#1})}
\newcommand{\reprof}[3]{{\{\!\!\{{#1}\}\!\!\}}^{\scriptscriptstyle{#2}}_{#3}}
\newcommand{\nordsubsets}[2]{S_{#1,#2}}
\newcommand{\polynomial}{\mathrm{poly}}
\newcommand{\looseof}[1]{\widetilde{#1}}

% squiggles
% \newcommand{\squig}{$\scriptsize$\sim$\normalsize$\!}
\newcommand{\Squig}{$\scriptsize$\approx$\normalsize$\!}
\newcommand{\lsquigend}{$\scriptsize$\lhd\!$\normalsize$}
\newcommand{\rsquigend}{$\scriptsize\rule{.1ex}{0ex}$\rhd$\normalsize$}

\newcounter{index}

\newcommand\squigs[1]{%
  \setcounter{index}{0}%
  \whiledo {\value{index}< #1}
  {\addtocounter{index}{1}\squig}
}

\newcommand\Squigs[1]{%
  \setcounter{index}{0}%
  \whiledo {\value{index}< #1}
  {\addtocounter{index}{1}\Squig}
}

\newcommand\rsquigarrow[2]{$
  \setbox0\hbox{$\squigs{#2}\rsquigend$}%
  \tiny$%
  \!\!\!\!\begin{array}{c}%
  {#1}\\%
  \usebox0%
  \end{array}%
  $\normalsize$\!\!%
}

\newcommand\Rsquigarrow[2]{$
  \setbox0\hbox{$\Squigs{#2}\rsquigend$}%
  \tiny$%
  \!\!\!\!\begin{array}{c}%
  {#1}\\%
  \usebox0%
  \end{array}%
  $\normalsize$\!\!%
}

\newcommand\lsquigarrow[2]{$
  \setbox0\hbox{$\lsquigend\squigs{#2}$}%
  \tiny$%
  \!\!\!\!\begin{array}{c}%
  {#1}\\%
  \usebox0%
  \end{array}%
  $\normalsize$\!\!%
}

\newcommand\Lsquigarrow[2]{$
  \setbox0\hbox{$\lsquigend\Squigs{#2}$}%
  \tiny$%
  \!\!\!\!\begin{array}{c}%
  {#1}\\%
  \usebox0%
  \end{array}%
  $\normalsize$\!\!%
}

\newcommand{\depof}[2]{\raisebox{4pt}{$~\rsquigarrow{({#1},{#2})}{4}~$}}
\newcommand{\Depof}[2]{\raisebox{4pt}{$~\rsquigarrow{({#1},{#2})}{8}~$}}
\newcommand{\ddepof}[2]{\raisebox{4pt}{$~\Rsquigarrow{({#1},{#2})}{4}~$}}
\newcommand{\Ddepof}[2]{\raisebox{4pt}{$~\Rsquigarrow{({#1},{#2})}{8}~$}}
\newcommand{\deppof}[2]{\ddepof{#1}{#2}^{\raisebox{-4pt}{$\bullet$}}}
\newcommand{\Deppof}[2]{\Ddepof{#1}{#2}^{\raisebox{-4pt}{$\bullet$}}}
\newcommand{\graphof}[1]{\mathfrak{G}_{\scriptscriptstyle{#1}}}
\newcommand{\eqgraphof}[1]{\mathcal{G}^\approx({#1})}
\newcommand{\eqof}[1]{\approx_{\scriptscriptstyle{#1}}}
\newcommand{\primegraphof}[1]{\mathcal{G}^\prime({#1})}
\newcommand{\atpos}[2]{{#1}^{\scriptscriptstyle\![{#2}]}}
\newcommand{\tgraphof}[1]{\mathcal{G}^\bullet({#1})}
\newcommand{\unfold}[2]{\Arrow{#1}{#2}}

\newcommand{\degreebound}{\mathfrak{B}}
\newcommand{\interactionvars}[1]{\mathrm{iv}({#1})}

% Graphs, Trees, etc
\newcommand{\graph}{G}
\newcommand{\graphsof}[1]{{\mathcal{G}^{\scriptscriptstyle{#1}}}}
\newcommand{\graphs}{\graphsof{}}
\newcommand{\btwgraphsof}[1]{{\mathcal{G}^{\scriptscriptstyle{#1}}_\red}}
\newcommand{\vertices}{V}
\newcommand{\vertof}[1]{\vertices_{\scriptscriptstyle{#1}}}
\newcommand{\edgeof}[1]{\edges_{\scriptscriptstyle{#1}}}
\newcommand{\labels}{\lambda}
\newcommand{\labof}[1]{\labels_{\scriptscriptstyle{#1}}}
\newcommand{\edgerel}{\upsilon}
\newcommand{\edgerelof}[1]{\edgerel_{\scriptscriptstyle{#1}}}
\newcommand{\sourcelabels}{\mathbb{S}}
\newcommand{\treelabels}{\mathbb{T}}
\newcommand{\slabs}{\tau}
\newcommand{\colorof}[1]{\mathtt{colors}({#1})}
\newcommand{\sources}{\xi}
\newcommand{\sourceof}[1]{\sources_{\scriptscriptstyle{#1}}}
\newcommand{\agop}[1]{\gamma_{#1}}
\newcommand{\algof}[1]{\mathbf{#1}}
\newcommand{\initalg}[1]{{\mathbf{M}({#1})}}
\newcommand{\inituniv}[1]{{\mathcal{M}({#1})}}
\newcommand{\red}{\mathsf{\scriptscriptstyle{gen}}}
\newcommand{\redalgof}[1]{{\algof{#1}_\red}}
\newcommand{\universeOf}[1]{\mathcal{#1}}
\newcommand{\reduniverseOf}[2]{{\universeOf{#1}^{#2}_\red}}
\newcommand{\algebra}{\algof{G}}
\newcommand{\btwalgebraof}[1]{\algebra^{#1}_\red}
\newcommand{\btwalgebra}{\btwalgebraof{}}
\newcommand{\treealgebra}{\algof{T}}
\newcommand{\rootalgebra}{\algebra_\aroot}
\newcommand{\repalgebra}{\algof{P}}
\newcommand{\emptygraph}{\mathbf{0}}
\newcommand{\emptygraphSymb}{\underline{\emptygraph}}
\newcommand{\singlegraph}{\overline{\mathbf{1}}}
\newcommand{\sgraph}[1]{\mathbf{#1}}
\newcommand{\sgraphSymb}[1]{\underline{\sgraph{#1}}}
\newcommand{\tgraph}[1]{\mathfrak{#1}}
\newcommand{\graphsign}{\mathcal{H}}
\newcommand{\aroot}{\mathfrak{r}}
\newcommand{\temp}{\mathit{aux}}
\newcommand{\grammar}{\Gamma}
\newcommand{\acomp}{\circ}
\newcommand{\accomp}{\!\parallel\!}
\newcommand{\ptrees}[2]{\mathcal{T}_{#1}({#2})}
\newcommand{\val}[1]{\overline{\mathrm{val}}({#1})}
\newcommand{\hval}{\mathbf{val}}
\newcommand{\bagof}[2]{B_{#2}(#1)}
\newcommand{\tdsof}[1]{\mathcal{D}({#1})}
\newcommand{\dec}{\mathrm{dec}}
\newcommand{\nonterminal}[1]{\mathsf{nonterm}_{#1}}
\newcommand{\node}{\mathsf{node}}
\newcommand{\nodefunc}{\mathtt{node}}
\newcommand{\graphfunc}{\mathtt{graph}}
\newcommand{\parent}{\mathsf{parent}}
\newcommand{\leftedge}{\mathsf{left}}
\newcommand{\rightedge}{\mathsf{right}}
\newcommand{\nextedge}{\mathsf{next}}
\newcommand{\lex}{\prec_{\mathit{lex}}}
\newcommand{\leaf}{\mathit{leaf}}
\newcommand{\bag}{\mathsf{bag}}
\newcommand{\trans}{\delta}
\newcommand{\scheme}{\Theta}
\newcommand{\defdof}[2]{\mathrm{def}^{{#2}}_{{#1}}}
\newcommand{\defd}[1]{\mathrm{def}_{#1}}
\newcommand{\hr}{$\mathsf{HR}$}

\newcommand{\asgn}{\alpha}
\newcommand{\eqpathof}[2]{\beta_{\scriptscriptstyle{#1}}^{\!{#2}}}
\newcommand{\asgnset}{\mathsf{Asgn}}
\newcommand{\maxarity}{\mathsf{MaxArity}}
\newcommand{\params}{\Pi}
\newcommand{\maxpreds}{\mathsf{MaxPreds}}
\newcommand{\persistof}[1]{\mathsf{Persist}({#1})}
\newcommand{\acolor}{\mathsf{c}}
\newcommand{\iscolored}[2]{\mathsf{C}_{#2}({#1})}
\newcommand{\isasgncolored}[3]{\mathsf{C}^{#2}_{#3}({#1})}
\newcommand{\colors}{\Gamma}
\newcommand{\maxexists}{\mathcal{M}}
\newcommand{\thecolors}{\gamma}
\newcommand{\remcolors}{\eta}

\newcommand{\strucgraph}{\struc_\graph}
\newcommand{\signagraph}{\Gamma}
\newcommand{\encof}[1]{\mathrm{enc}\ifthenelse{\equal{#1}{}}{}{({#1})}}
\newcommand{\posof}[1]{\mathrm{pos}({#1})}
\newcommand{\vertset}{\mathsf{V}}
\newcommand{\tnodes}{\mathcal{N}}
\newcommand{\tedges}{\mathcal{F}}
\newcommand{\nodes}{{N}}
\newcommand{\edges}{{E}}
\newcommand{\alabel}{\lambda}
\newcommand{\tree}{T}
\newcommand{\trees}{\universeOf{T}}
\newcommand{\subtree}[2]{{#1}|_{{#2}}}
\newcommand{\precof}[1]{\prec_{#1}}
\newcommand{\grid}[1]{\mathcal{H}_{#1}}
\newcommand{\clique}[1]{\mathcal{K}_{#1}}
\newcommand{\iso}{\simeq}
\newcommand{\isoof}[1]{[#1]_\iso}
\newcommand{\twof}[1]{\mathrm{tw}({#1})}
\newcommand{\charform}[1]{\Theta({#1})}
\newcommand{\exclof}[1]{{#1}^\exists}
\newcommand{\occin}[1]{\leadsto_{#1}}
%Florian
\newcommand{\domsymb}{\mathfrak{D}}
\newcommand{\vertex}{u}
\newcommand{\vertexSet}{U}
\newcommand{\Rank}{r}
\newcommand{\thof}{\mathit{Th}}
\newcommand{\types}{\mathit{Types}}
\newcommand{\operation}{\mathit{op}}
\newcommand{\fuse}{\mathit{fuse}}
\newcommand{\method}{m}
\newcommand{\domain}{\mathit{dom}}
\newcommand{\encode}{\mathit{encode}}
\newcommand{\rlandf}{\mathit{forget}}
\newcommand{\pop}{\parallel}
\newcommand{\sort}{\sigma}
\newcommand{\sorts}{\Sigma}
\newcommand{\hrsorts}{\sorts_{\mathsf{HR}}}

\newcommand{\restrict}[1]{\mathsf{restrict}_{#1}}
\newcommand{\restrictSymb}[1]{\underline{\mathsf{restrict}}_{#1}}
\newcommand{\rename}[1]{\mathsf{rename}_{#1}}
\newcommand{\renameSymb}[1]{\underline{\mathsf{rename}}_{#1}}

\newcommand{\binlabel}[1]{\mathrm{bin}_{#1}}
\newcommand{\unlabel}[1]{\mathrm{un}_{#1}}
\newcommand{\extend}[1]{\mathsf{append}_{#1}}
\newcommand{\vertexlabel}{\mathsf{vertex}}
\newcommand{\vertexsetlabel}{\mathsf{source}}
\newcommand{\edgelabel}{\mathsf{edge}}
\newcommand{\parsefunc}{\pi}

\begin{abstract}
  We give a characterization of the sets of graphs that are both
  \emph{definable} in Counting Monadic Second Order Logic (CMSO) and
  \emph{context-free}, i.e., least solutions of Hyperedge-Replacement
  (HR) grammars introduced by Courcelle and Engelfriet
  \cite{courcelle_engelfriet_2012}. We prove the equivalence of these
  sets with:
  (a) \emph{recognizable} sets (in the algebra of graphs with
  HR-operations) of bounded tree-width; we refine this condition
  further and show equivalence with recognizability in a finitely
  generated subalgebra of the HR-algebra of graphs;
  (b) \emph{parsable} sets, for which there is a definable transduction from
  graphs to a set of derivation trees labelled by HR operations, such
  that the set of graphs is the image of the set of derivation trees
  under the canonical evaluation of the HR operations;
  (c) images of recognizable unranked sets of trees under a definable
  transduction, whose inverse is also definable.
  We rely on a novel connection between two seminal results, a logical
  characterization of context-free graph languages in terms of
  tree-to-graph definable transductions, by Courcelle and
  Engelfriet~\cite{CourcelleEngelfriet95} and a proof that an
  optimal-width tree decomposition of a graph can be built by an
  definable transduction, by Boja\'{n}czyk and
  Pilipczuk~\cite{10.1145/2933575.2934508,journals/lmcs/BojanczykP22}.
\end{abstract}

\maketitle

\section{Introduction}

Formal language theory studies finite representations of infinite sets
of objects (e.g., words, trees, graphs). These representations can be
\emph{descriptive}, specifying logical properties of their members
(e.g. planar or Hamiltonian graphs), or \emph{constructive},
describing how the members of the set are built. In particular,
constructive representations come with algebras that define sets of
operations. \emph{Context-free} sets arise from the least solutions of
recursive equation systems, which use operations from the considered
algebra, with unknowns ranging over sets. \emph{Recognizable} sets are
defined in terms of congruence relations over the algebra, with a
finite number of equivalence classes; these equivalence classes can be
used to define equivalent notions of recognizability in terms of
automata or homomorphisms to finite algebras (such as monoids for
words).

Monadic Second Order Logic (\mso) is the most prominent descriptive
representation of graphs, and has seen decades of study, see for
example~\cite{courcelle_engelfriet_2012}.  Hyperedge Replacement (\hr)
algebras provide standard constructive representations, using (sorted)
substitutions of a hyperedge in a graph by graph with a tuples of
designated vertices, that matches the sort of the
hyperedge~\cite{courcelle_engelfriet_2012}.  The notion of
context-free \hr\ graph grammar then follows immediately from the
definition of a \hr\ algebra of graphs.  In contrast, the right notion
of recognizability is somewhat less obvious.  This is because words
and trees have a clear beginning (root) and traversal direction
(left-right, top-down or reverse), while graphs do not have either.
Instead of having a congruence relation with finite index, the
proposal put forward in~\cite{courcelle_engelfriet_2012} consists of a
\emph{locally finite} congruence relation, i.e., a congruence relation
with a finite number of equivalence classes for every sort.

The comparison of the expressive powers of different representations
is central to formal language theory.  For words, definability in
\mso{} coincides with recognizability~\cite{Buechi90}, being subsumed
by context-freeness, whereas for ground terms over a finite set of
function symbols definability in \mso, recognizability and
context-freeness coincide\footnote{The context-free tree grammars
mentioned in \cite{DBLP:journals/corr/Engelfriet15} use first-order
parameters in rules, being thus strictly more expressive than tree
automata~\cite{comon:hal-03367725}. Here, by a grammar, we understand
a finite set of recursive equations whose left-hand sides consist of a
single nonterminal of arity zero. The components of the least solution
of such a system are also known as \emph{equational
sets}~\cite{courcelle_engelfriet_2012}.}
\cite{Doner70,MezeiWright67}. For unranked and unordered trees (i.e.,
trees with arbitrarily many children per node, whose order is,
moreover, not important), definability in \cmso{} and recognizability
coincide, where \cmso\ is the extension of \mso\ with modulo
constraints on the cardinality of sets \cite{CourcelleI}. For graphs,
definability in \cmso{} implies recognizability but not vice versa,
whereas context-freeness is incomparable to the two other
notions~\cite{CourcelleI}. The equivalence between recognizability and
definability in \cmso{} can be recovered for graphs of bounded
\emph{tree-width}~\cite{10.1145/2933575.2934508}. Moreover, recognizability of
bounded tree-width sets of graphs (by locally finite congruences) is
equivalent to recognizability by congruences having finitely many
classes~\cite{courcelle_lagergren_1996}.

Finite representations are used in system design and automated
verification tools. Descriptive representations (logics) specify
correctness properties, e.g., sets of safe states or behaviors (traces
of states), whereas constructive representations describe the
implementations of a system, with respect to the low-level details of
state changes. Verification problems, such as conformance with certain
safety criteria, or equivalence of two implementations, amount to
checking inclusion between sets of words, trees or graphs, represented
in different ways. Hence, the interest for classes of representations
having a \emph{decidable inclusion problem}.

We study the intersection between the classes of graphs that are both
context-free and definable (in \cmso). The main motivation is that
inclusion is decidable for the members of this intersection. Let
$\mathcal{L}_1$ and $\mathcal{L}_2$ be sets defined by
\cmso\ formul{\ae} $\phi_1$ and $\phi_2$, respectively. Then,
$\mathcal{L}_1 \subseteq \mathcal{L}_2$ if and only if the formula
$\phi_1 \wedge \neg \phi_2$ is not satisfiable.  If, moreover,
$\mathcal{L}_1$ is context-free, there is an effectively computable
bound on the tree-width of the models of $\phi_1 \wedge \neg \phi_2$,
if any. Since the satisfiability problem for \cmso\ is decidable for
graphs of bounded tree-width, by a seminal result of
Courcelle~\cite[Corollary 4.8 (2)]{CourcelleI}, the problem
$\mathcal{L}_1 \subseteq \mathcal{L}_2$ is decidable.

Our characterization of context-free and definable graph languages
starts from the notion of \emph{strongly context-free} sets,
introduced by Courcelle~\cite{CourcelleV}. These are sets
$\mathcal{L}$ generated by an \hr\ grammar, having an additional
\emph{parsability} property: there exists a binary relation $F$
between graphs and derivation trees, such that
(i) each output tree is defined by a finite tuple of
  \cmso\ formul{\ae} interpreted over the input graph, and
(ii) for each graph $\graph\in\mathcal{L}$, the set $F(\graph)$
  contains a derivation tree that evaluates to $\graph$.
  In this context, Courcelle stated the following conjectures:

\begin{conjC}[{\cite[Conjecture 3]{CourcelleV}}]\label{conj:courcelle3}
  If a set of graphs is context-free and definable, then it is
  strongly context-free.
\end{conjC}

This conjecture leads to the following insight: given a graph grammar,
assume that we want to prove that its language is definable. By
\autoref{conj:courcelle3}, now proved as a consequence of
\autoref{thm:cf-def-v2}, definability is equivalent to the existence
of a definable parsing transduction for the language generated by the
grammar. Hence, for constructing the desired \cmso{} formula it is
always a viable proof strategy to either explicitly or implicitly
build such a parsing function as part of the overall construction.
The contrapositive of this conjecture is proved as \cite[Theorem
  4.8]{CourcelleV}. Moreover, the equivalence with the following
conjecture is also proved in \cite{CourcelleV}:

\begin{conjC}[{\cite[Conjecture 2]{CourcelleV}}]\label{conj:courcelle2}
  For each $k \in \nat$, the set of all graphs of tree-width at most
  $k$ is strongly context-free.
\end{conjC}

\paragraph{Our Contributions}
The main contribution of the paper is a detailed and self-contained
proof of the above conjectures, yielding two characterizations of the
intersection between the context-free and the definable classes of
graphs.

The fine-grained version (Theorem \ref{thm:cf-def}), takes into
account the finite set of sorts occurring in the grammar and proves
the equivalence between:
(1) $\mathcal{L}$ is definable in \cmso{} and generated by a
\hr\ grammar that uses a finite set of sorts $\slabs$,
(2) $\mathcal{L}$ is recognizable in the infinitely-sorted \hr\
algebra and is represented by a set of $\slabs$-sorted terms,
(3) $\mathcal{L}$ is recognizable in the $\slabs$-sorted
\hr\ subalgebra and is represented by a set of $\slabs$-sorted terms,
which
(4) can be extracted from the graphs in $\mathcal{L}$ by a relation
definable in \cmso.
% (up to the associativity and commutativity of the
% \hr{} composition operation).

The equivalence of (2) and (3) provides also a simpler proof of the
equivalence between locally-finite and finite recognizability for
graphs of bounded tree-width, initially proved by Courcelle and
Lagergren~\cite{courcelle_lagergren_1996}. Since we prove later that
recognizability in the infinitely-sorted \hr\ algebra is equivalent to
recognizability in an infinite sequence of finitely-sorted
\hr\ subalgebras (Theorem~\ref{thm:rec-limit}), the equivalence of
points (2) and (3) in Theorem~\ref{thm:cf-def} gives a cut-off result:
a set of tree-width bounded graphs is recognizable if and only if it
is recognizable in a finitely-sorted algebra.

The coarse-grained version (Theorem~\ref{thm:cf-def-v2}) quantifies
existentially over the set of sorts and states the equivalence
between:
(1) $\mathcal{L}$ is definable in \cmso{} and \hr\ context-free,
(2) $\mathcal{L}$ is recognizable and of bounded tree-width,
(3) $\mathcal{L}$ is represented by a set of terms that can be
extracted from $\mathcal{L}$ by a relation definable in \cmso, and
% (up to
% the associativity and commutativity of the \hr{} composition
% operation), and
%
(4) the existence of two definable relations from graphs to trees and
back, whose composition is the identity on $\mathcal{L}$.
It is known that context-free sets have bounded tree-width, but not
the other way around. A consequence of our result is that every
definable set that has bounded tree-width is context-free.

Our results rely on two seminal ingredients. The first is a
characterization of the context-free sets of graphs, as \emph{images
of recognizable ranked sets of trees under definable relations}, by
Courcelle and Engelfriet \cite{CourcelleEngelfriet95}. The second is a
construction of \emph{tree decompositions of optimal width}, by means
of definable relations, by Boja\'{n}czyk and
Pilipczuk~\cite{10.1145/2933575.2934508,journals/lmcs/BojanczykP22}.
We connect the two results using
(i) a generalization of ~\cite{CourcelleEngelfriet95} by considering
\emph{unranked} instead of ranked recognizable sets of trees
(Corollary \ref{cor:cf-unranked}), and
(ii) a definable translation of a tree decomposition into a parse tree
of a \hr\ grammar (Lemma
\ref{lem:mapping-tree-decompositions-back-to-dervation-trees}).

\paragraph{Related Work}
Following its initial
development~\cite{CourcelleV,CourcelleEngelfriet95}, the study of
definable context-free sets of graphs has seen recent interest. Our
work is closely related to~\cite{journals/corr/abs-2305-18039}, where
Boja\'{n}czyk proposes the notions of \emph{logical recognizability}
and \emph{definable tree decompositions}.  The latter notion is used
to formalize the condition (\ref{it4:thm:cf-def-v2}) of
Theorem~\ref{thm:cf-def-v2}.  For his logical recognizability notion,
it becomes immediate from the definitions that the recognizable
subsets of a class that has definable tree decompositions are
definable in \cmso.  Then,~\cite{journals/corr/abs-2305-18039}
concentrates on establishing the equivalence between congruence-based
and logical recognizability on words, trees and graphs of bounded
tree-width.  The equivalences for words and ranked trees are based on
the classical results of B\"uchi~\cite{Buechi90} and
Doner~\cite{Doner70}, respectively, whereas the equivalence for
bounded tree-width graphs uses the same ingredients as our
work~\cite{CourcelleEngelfriet95,10.1145/2933575.2934508,journals/lmcs/BojanczykP22},
considered in more generality and on a higher level of abstraction, as
the goal of~\cite{journals/corr/abs-2305-18039} is to avoid the
introduction of sorts and graph operations as needed for the
definition of context-free sets of graphs in terms of graph grammars.
In summary, Boja\'{n}czyk establishes the equivalences (1)
$\Leftrightarrow$ (2) $\Leftrightarrow$ (4) of our coarse-grained
Theorem~\ref{thm:cf-def-v2}, but the fine-grained characterization of
Theorem~\ref{thm:cf-def} cannot be immediately derived from the
development in~\cite{journals/corr/abs-2305-18039}.

We note that the problem of whether a given context-free grammar
defines a recognizable (and hence definable) language is undecidable
(even for words), according to a result by Greibach~\cite{Greibach68}.
This has motivated the search for regular grammars and regular
expressions over graphs, whose languages are guaranteed to be
recognizable resp.\  definable.  In particular, already Courcelle
in~\cite[Section~5]{CourcelleV} proposed so-called \emph{regular graph
grammars}, built over hyperedge-replacement operations that need to
satisfy some local connectivity requirements.  In recent work, we have
proposed \emph{tree-verifiable graph
grammars}~\cite{LPAR2024:Tree_Verifiable_Graph_Grammars}, which
strictly generalize the regular graph grammars of Courcelle. We note
that these grammars do not capture all recognizable sets of graphs but
only those recognizable sets of graphs of bounded \emph{embeddable}
tree-width, i.e., those sets of graphs for which there is a tree
decomposition whose backbone is a spanning tree of the considered
graph. There has also been recent progress on defining the
recognizable sets of tree-width at most $2$ (a class of graphs that is
orthogonal to the class of graphs of bounded embeddable
treewidth). These sets can be defined equivalently by regular
expressions~\cite{DBLP:conf/icalp/Doumane22} and regular graph
grammars~\cite{LICS2025:Reg_Grammars_TW_2}.

We finally mention a formal comparison between the expressivity of
\mso{} with that of Separation Logic over graphs of bounded
tree-width~\cite{DBLP:conf/concur/IosifZ23}.

\section{Preliminaries}
\label{sec:definitions}

This section introduces the basic notions of the descriptive and
constructive representations of infinite sets. We introduce (Counting)
Monadic Second Order Logic as the main descriptive language for sets
of structures (\autoref{sec:cmso}) and relations between structures
(\autoref{sec:transductions}). The constructive representations we
consider are the standard notions of recognizable
(\autoref{sec:recognizable}) and context-free sets
(\autoref{sec:context-free}) in multi-sorted algebras.

We denote by $\nat$ the set of natural numbers and $\nat_+ \isdef \nat
\setminus \set{0}$. Given $i, j \in \nat$, we write $\interv{i}{j}
\isdef \set{i, i+1, \ldots, j}$, assumed to be empty if $i>j$. The
cardinality of a finite set $A$ is denoted by $\cardof{A}$. By writing
$A \finsubseteq B$ we mean that $A$ is a finite subset of $B$. For a
set $A$, we denote by $\pow{A}$ its powerset, $A^0 \isdef
\set{\epsilon}$, $A^{i+1} \isdef A^i \times A$, for all $i \geq 0$,
$A^* \isdef \bigcup_{i\geq0} A^i$ and $A^+ \isdef \bigcup_{i\geq1}
A^i$, where $\times$ is the Cartesian product and $\emptyseq$ denotes
the empty sequence. Intuitively, $A^*$ (resp.\  $A^+$) denotes the set
of possibly empty (resp.\  nonempty) sequences of elements from $A$. The
length of a sequence $\vec{a} \in A^*$ is denoted as $\lenof{\vec{a}}$
and $\vec{a}_i$ denotes its $i$-th element, for $i \in
\interv{1}{\lenof{\vec{a}}}$.

For a relation $R \subseteq A \times B$, we denote by $\dom{R}$ and
$\img{R}$ the sets consisting of the first and second components of
the pairs in $R$, respectively. We write $R^{-1}$ for the inverse
relation and $R(S)$ for the image of a set $S$ via $R$. Sometimes we
write $R(a)$ instead of $R(\set{a})$, for an element $a \in A$. The
\emph{domain-restriction} $\proj{R}{C}$ restricts the relation $R$ to
the pairs with first element in $C$. A bijective function $f$ is an
\emph{$A$-permutation} if $\set{a \in \dom{f} \mid f(a) \neq a}
\subseteq A \subseteq \dom{f}$. It is a \emph{finite permutation} if
it is an $A$-permutation, for some finite set $A$.

\subsection{Counting Monadic Second Order Logic (\cmso)}
\label{sec:cmso}

A \emph{relational signature} $\relations$ is a finite set of
\emph{relation symbols}, ranged over by $\arel$, of arities
$\arityof{\arel}\geq0$. A relation symbol $\arel$ is a
\emph{constant}, \emph{unary} or \emph{binary} if $\arityof{\arel}=0$,
$1$ or $2$, respectively.

A \emph{$\relations$-structure} is a pair $\astruc = (\univ,\struc)$,
where $\univ$ is a \emph{universe} and $\struc: \relations \rightarrow
\pow{\univ^*}$ is an \emph{interpretation}, that maps each relation
symbol $\arel$ into a subset of $\univ^{\arityof{\arel}}$ of
corresponding arity.  Here we consider only structures with finite
universe, also called \emph{finite structures}.  The set of
$\relations$-structures is denoted by $\strucof{\relations}$.

The \emph{Counting Monadic Second Order Logic} (\cmso) is the set of
formul{\ae} written using a set $\vars = \set{x,y,\ldots}$ of
\emph{first-order variables}, a set $\Vars = \set{X,Y,\ldots}$ of
\emph{second-order variables} and the relation symbols from
$relations$, according to the following syntax:
\[\psi := x=y \mid \arel(x_1, \ldots, x_{\arityof{\arel}}) \mid X(x) \mid
\cardconstr{X}{q}{p} \mid \neg\psi \mid \psi \wedge \psi \mid \exists
x ~.~ \psi \mid \exists X ~.~ \psi\] where $p,q \in \nat$ are
constants, such that $p \in \interv{0}{q-1}$. By \mso\ we denote the
subset of \cmso\ consisting of formul{\ae} that do not contain atomic
propositions of the form $\cardconstr{X}{q}{p}$, also called
\emph{cardinality constraints}. A variable is \emph{free} in a formula
$\phi$ if it does not occur in the scope of a quantifier. A
\emph{sentence} is a formula with no free variables.

The semantics of \cmso\ is given by a satisfaction relation
$(\univ,\struc) \Models^\store \psi$, where the store $\store : \vars
\cup \Vars \rightarrow \univ \cup \pow{\univ}$ maps each variable $x
\in \vars$ to an element of the universe and each variable $X \in
\Vars$ to a subset of $\univ$.  This relation is defined
inductively on the syntactic structure of formul{\ae}:
\[\begin{array}{rclcl}
(\univ,\struc) & \Models^\store & x=y & \iff & \store(x)=\store(y) \\
(\univ,\struc) & \Models^\store & \arel(x_1, \ldots, x_{k}) & \iff &
\tuple{\store(x_1), \ldots, \store(x_{k})} \in \struc(\arel) \\
(\univ,\struc) & \Models^\store & X(x) & \iff & \store(x) \in \store(X) \\
(\univ,\struc) & \Models^\store & \cardconstr{X}{q}{p} & \iff & \cardof{\store(X)} = kq+p \text{, for some } k\in\nat \\
(\univ,\struc) & \Models^\store & \phi \wedge \psi & \iff & (\univ,\struc) \Models^\store \phi \text{ and } (\univ,\struc) \Models^\store \psi \\
(\univ,\struc) & \Models^\store & \neg\phi & \iff & (\univ,\struc) \not\Models^\store \phi \\
(\univ,\struc) & \Models^\store & \exists x ~.~ \psi & \iff & (\univ,\struc) \Models^{\store[x\leftarrow u]}
\text{, for some element } u \in \univ \\
(\univ,\struc) & \Models^\store & \exists X ~.~ \psi & \iff & (\univ,\struc) \Models^{\store[X\leftarrow V]} \psi
 \text{, for some set } V \subseteq \univ
\end{array}\]
If $\phi$ is a sentence, the satisfaction relation does not depend on
the store and we write $(\univ,\struc) \Models \phi$ instead of
$(\univ,\struc) \Models^\store \phi$.  A set $S$ of structures is
\emph{definable} iff $S = \set{(\univ,\struc) \mid (\univ,\struc)
  \Models \phi}$, for some \cmso{} sentence $\phi$, and
\mso-\emph{definable} in case $\phi$ belongs to the \mso{} fragment of
\cmso.
Two structures are \emph{isomorphic} iff they differ only by a
renaming of their elements (a formal definition is given in
\cite[Section A3]{DBLP:books/daglib/0082516}). It is known that the
satisfaction relation of \cmso{} does not distinguish between
isomorphic structures. We note that we only consider finite structures
in this paper, and hence quantification is over finite sets
only\footnote{For infinite structures there is a difference in
expressivity when only quantification over finite set is allowed and
the resulting logic is known as \emph{weak} \mso. However, quantifiers
in finite structures can only capture finite sets and hence we not
need distinguish between \mso\ and \emph{weak} \mso.}.
  
\subsection{Definable Transductions}
\label{sec:transductions}

Let $\relations$ and $\relations'$ be relational signatures. A
relation $\trans$ between $\relations$- and $\relations'$-structures
is a $k$-\emph{copying} $(\relations,\relations')$-\emph{transduction}
iff each output structure $\astruc' \in \trans(\astruc)$ is produced
from $k$ disjoint copies of the input structure $\astruc$, called
\emph{layers}. The transduction is said to be \emph{copyless} if
$k=1$.  The outcome of the transduction also depends on the valuation
of zero or more \emph{set parameters} $X_1, \ldots, X_n \in \Vars$,
that range over the subsets of the input universe. The transduction is
said to be \emph{parameterless} if $n=0$. Formally, we define
$(\relations,\relations')$-transductions using \emph{transduction
schemes}, i.e., finite tuples of \cmso\ formul{\ae}:
\begin{align*}
  \scheme=\tuple{\varphi,\set{\psi_i}_{i \in \interv{1}{k}},
    \set{\theta_{(\qrel,i_1,\ldots,i_{\arityof{\qrel}})}}_{
      \qrel\in\relations',~
      i_1,\ldots,i_{\arityof{\qrel}}\in\interv{1}{k}
  }}
\end{align*}
where:
\begin{itemize}[label=$\triangleright$]
\item $\varphi(X_1,\ldots,X_n)$ selects the input structures
  $(\univ,\struc)$ for which the transduction has an output
  $(\univ',\struc')$, i.e., those structures $(\univ,\struc)$ such
  that $(\univ,\struc) \models^\store \varphi$, for a store $\store$
  that maps each $X_i$ into a set $\store(X_i) \subseteq \univ$,
\item $\psi_i(x_1, X_1, \ldots, X_n)$ defines the elements from the
  $i$-th layer copied in the output universe:
  \[\univ' \isdef \set{(u,i) \in \univ \times \interv{1}{k} \mid
    (\univ,\struc) \models^{\store[x_1 \leftarrow u]} \psi_i}\]
\item
  $\theta_{(\qrel,i_1,\ldots,i_{\arityof{\qrel}})}(x_1, \ldots,
  x_{\arityof{\qrel}}, X_1, \ldots, X_n)$ define the interpretation of
  $\qrel\in\relations'$ in the output:
  \[\struc'(\qrel) \isdef \{\tuple{(u_1,i_1), \ldots, (u_{\arityof{\qrel}}, i_{\arityof{\qrel}})} \mid
  (\univ,\struc) \Models^{\store[x_1\leftarrow u_1, \ldots, x_{\arityof{\qrel}} \leftarrow u_{\arityof{\qrel}}]}
  \theta_{(\qrel,i_1,\ldots,i_{\arityof{\qrel}})}\}\]
\end{itemize}
Note that the store $\store$ that defines the valuations of $X_1,
\ldots, X_n$ is the same everywhere in the above definition of the
output structure $(\univ',\struc')$. The output of the transduction is
denoted by $\defdof{\scheme}{\store} \isdef (\univ',\struc')$. The set
$\defd{\scheme}(\astruc)$ is the closure under isomorphism of the set
$\set{\defdof{\scheme}{\store} \mid \astruc \models^\store \varphi}$,
i.e., the output structures are the structures isomorphic to some
$\defdof{\scheme}{\store}$, whose elements are not necessarily pairs
of the form $(u,i)\in\univ\times\interv{1}{k}$. A transduction
$\trans$ is \emph{definable} iff $\trans=\defd{\scheme}$, for some
transduction scheme $\scheme$.

\begin{exa}\label{ex:tll}
  Let $\alphabet=\set{\leftedge,\rightedge,\nextedge}$ and
  $\alphabetTwo=\set{\leftedge,\rightedge}$ be alphabets of edge
  labels, which we will use in order to encode edge-labeled binary
  trees (for $\alphabetTwo$) and trees with linked leaves (for
  $\alphabet$). As usual, the left (resp.\  right) child of a node in
  the tree will be linked to its parent via a $\leftedge$-labeled
  (resp.\  $\rightedge$-labeled) edge. The $\nextedge$-labeled edges are
  used to link a leaf to its direct successor, in the lexicographic order.

  We use here the \emph{incidence encoding} of trees (more generally,
  graphs) as relational structures, where edges are elements of the
  universe and the graph is described by the incidence relation
  between edges and vertices (see \autoref{sec:cmso-def}). In
  particular, we use relation symbols $\arel_a$ for $a \in
  \set{\leftedge,\rightedge,\nextedge}$ of arity three, where the
  first argument of $\arel_a$ denotes an edge and the second and third
  element denote the source and destination of this edge,
  respectively.

  Then, there is an \mso\ sentence $\varphi_{\mathsf{tree}}$ that
  defines the set of binary trees, where each non-leaf node has
  exactly one $\leftedge$- and one $\rightedge$-child. Moreover, we
  consider the following \mso{} formul{\ae}: \begin{itemize}[label=$\triangleright$]
  \item $x \lex y$ means that $x$ is the direct predecessor of $y$ in
    the lexicographic order of the labels of the paths from the root
    to $x$ and $y$, induced by the $\leftedge < \rightedge$ order,
  \item $\leaf(x)$ states that $x$ has no outgoing $\leftedge$- or
    $\rightedge$-labeled edges, i.e., is a leaf of the tree.
  \end{itemize}

  The $2$-copying parameterless $(\set{\arel_b}_{b \in
    \alphabetTwo},\set{\arel_a}_{a \in \alphabet})$-transduction
  defined by the scheme
  \[\scheme_{\mathsf{tll}}\isdef\tuple{\varphi_{\mathsf{tree}},
    \set{\psi_1,\psi_2},
    \set{\theta_{(\arel_a,i,j,k)}}_{a\in\alphabet,i,j,k\in\interv{1}{2}}}\]
  where
  \begin{align*}
    \psi_1(x_1) \isdef & ~\mathtt{true}
    \\
    \psi_2(x_1) \isdef & ~\leaf(x_1) \wedge \exists y~.~ \leaf(y) \wedge x_1 \lex y
    \\
    \theta_{(\arel_a,i,j,k)}(x_1,x_2,x_3) \isdef & ~\left\{\begin{array}{ll}
    \arel_a(x_1,x_2,x_3) & \text{if } (i,j,k)=(1,1,1) \\
    \mathtt{false} & \text{otherwise}
    \end{array}\right. \text{for all } a \in \set{\leftedge,\rightedge}
    \\
    \theta_{(\arel_\nextedge,i,j,k)}(x_1,x_2,x_3) \isdef & ~\left\{\begin{array}{ll}
    \mathit{leaf}(x_2) \wedge \mathit{leaf}(x_3) \wedge
    x_1=x_2 \wedge x_2 \lex x_3 & \text{if } (i,j,k)=(2,1,1) \\
    \mathtt{false} & \text{otherwise}
    \end{array}\right.
  \end{align*}
  \noindent adds a $\nextedge$ edge between each pair of successive
  leaves in this lexicographic order. Note that the extra
  $\nextedge$-labeled edges are taken from the $2^{\mathit{nd}}$ layer
  and corresponds to the copy of its source node, i.e., the $\psi_2$
  selects all but the right-most leaf from the tree for the
  $2^{\mathit{nd}}$ layer.

  Dually, the copyless parameterless $(\set{\arel_a}_{a \in
    \alphabet},\set{\arel_b}_{b \in \alphabetTwo})$-transduction
  defined by the scheme \(\scheme_{\mathsf{tll}}^{-1} \isdef
  \tuple{\mathtt{true}, \set{\psi_1},
    \set{\theta_{(\arel_b,1,1)}}_{b\in\alphabetTwo}}\), where
  \begin{align*}
    \psi_1(x_1) \isdef & ~\forall y \forall z ~.~ \neg \arel_\nextedge(x_1,y,z)
    \\
    \theta_{(\arel_b,1,1)}(x_1,x_2,x_3) \isdef & ~ \arel_b(x_1,x_2,x_3), \quad \text{ for all } b \in \set{\leftedge,\rightedge}
  \end{align*}
  removes all $\nextedge$ edges from the input structure.
\end{exa}

The main property of definable transductions is the Backwards
Translation Theorem (see e.g., \cite[Theorem
  1.40]{courcelle_engelfriet_2012}):
\begin{thmC}[\cite{courcelle_engelfriet_2012}]\label{thm:back-trans}
  If $\mathcal{S} \subseteq \strucof{\relations'}$ is a definable set
  and $\trans$ is a definable $(\relations,\relations')$-transduction
  then the set $\trans^{-1}(\mathcal{S})$ is definable.
\end{thmC}
The following properties are direct consequences of the above theorem
and the definition of definable transductions:

\begin{prop}\label{prop:comp-restr-trans}
  \hfill
  \begin{enumerate}
  \item\label{it1:comp-restr-trans} The composition of definable
    transductions is definable.
  \item\label{it2:comp-restr-trans} The domain-restriction of a
    definable transduction by a definable set is definable.
  \item\label{it3:comp-restr-trans} The domain of a definable
    transduction is definable.
  \end{enumerate}
\end{prop}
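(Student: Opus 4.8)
The plan is to handle the three claims separately, noting that the domain and the domain-restriction follow almost immediately from the definitions, whereas the composition carries the real content.

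For the domain (claim \ref{it3:comp-restr-trans}), observe that the set $\strucof{\relations'}$ of all $\relations'$-structures is definable, by the sentence $\mathtt{true}$, and that $\dom{\trans} = \trans^{-1}(\strucof{\relations'})$ for every $(\relations,\relations')$-transduction $\trans$. Thus \autoref{thm:back-trans}, applied to the definable set $\strucof{\relations'}$, gives that $\dom{\trans}$ is definable. Equivalently and more explicitly, if $\trans = \defd{\scheme}$ with $\scheme = \tuple{\varphi,\ldots}$ and set parameters $X_1,\ldots,X_n$, then $\dom{\trans}$ is defined by the sentence $\exists X_1 \ldots \exists X_n ~.~ \varphi$, obtained by existentially quantifying the parameters. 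For the domain-restriction (claim \ref{it2:comp-restr-trans}), let $\trans = \defd{\scheme}$ with $\scheme = \tuple{\varphi,\set{\psi_i}_i,\set{\theta_{(\qrel,\ldots)}}}$ and let $C$ be a definable set of $\relations$-structures with defining sentence $\chi$. Then $\proj{\trans}{C}$ is defined by the scheme $\tuple{\varphi \wedge \chi,\set{\psi_i}_i,\set{\theta_{(\qrel,\ldots)}}}$, which differs from $\scheme$ only in its domain formula: conjoining the sentence $\chi$, which has no free variables and constrains only the input, retains exactly the inputs lying in $C$ while leaving each produced output unchanged.

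For the composition (claim \ref{it1:comp-restr-trans}), let $\trans_1 = \defd{\scheme_1}$ be a $k_1$-copying $(\relations,\relations')$-transduction with parameters $X_1,\ldots,X_m$ and $\trans_2 = \defd{\scheme_2}$ a $k_2$-copying $(\relations',\relations'')$-transduction with parameters $Y_1,\ldots,Y_n$. I would build a scheme $\scheme$ for $\trans_2 \circ \trans_1$ with $k_1 k_2$ layers indexed by pairs $(i,j) \in \interv{1}{k_1}\times\interv{1}{k_2}$. The central device is a formula translation that rewrites any \cmso\ formula $\zeta$ over $\relations'$, to be read in the intermediate structure produced by $\scheme_1$, into a \cmso\ formula $\widetilde{\zeta}$ over $\relations$ read in the input structure; this is precisely the inductive construction underlying \autoref{thm:back-trans}. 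Each first-order variable of $\zeta$ is tagged with a layer $i \in \interv{1}{k_1}$; an atom $\qrel(z_1,\ldots,z_r)$ carrying layer tags $i_1,\ldots,i_r$ is replaced by $\theta^{(1)}_{(\qrel,i_1,\ldots,i_r)}$; an equality across two distinct layers becomes $\mathtt{false}$, since the copies are disjoint; an existential $\exists z ~.~ \eta$ becomes the disjunction over $i \in \interv{1}{k_1}$ of $\exists z ~.~ \psi^{(1)}_i \wedge \widetilde{\eta}$, relativizing $z$ to the surviving elements of layer $i$; a parameter $Y_\ell$ of $\scheme_2$, ranging over subsets of the intermediate universe $\univ \times \interv{1}{k_1}$, is encoded by $k_1$ ordinary parameters $Y_{\ell,1},\ldots,Y_{\ell,k_1}$, and accordingly each second-order quantifier is expanded into $k_1$ quantifiers, one per layer; finally a cardinality constraint $\cardconstr{Z}{q}{p}$ becomes the finite disjunction, over residue tuples $(p_1,\ldots,p_{k_1})$ with $\sum_i p_i \equiv p \pmod q$, of $\bigwedge_i \cardconstr{Z_i}{q}{p_i}$, which is expressible because $k_1$ and $q$ are fixed constants. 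Applying this translation to the domain, layer, and relation formulae of $\scheme_2$, tagging each free first-order variable by the first component $i$ of its output index $(i,j)$, and conjoining the domain and layer formulae with the corresponding components $\varphi^{(1)}$ and $\psi^{(1)}_i$ of $\scheme_1$, yields $\scheme$, whose parameters are $X_1,\ldots,X_m$ together with the $n k_1$ encoded parameters $Y_{\ell,i}$.

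The main obstacle is the correctness of this formula translation, namely that $\zeta$ holds in the intermediate structure under a store exactly when $\widetilde{\zeta}$ holds in the input structure under the correspondingly encoded store, together with the bookkeeping of layers and parameters. Since this induction is exactly the argument that proves \autoref{thm:back-trans}, it can be invoked rather than redone; the remaining check that $\defd{\scheme} = \trans_2 \circ \trans_1$ then follows directly from the definition of transduction composition and closure under isomorphism.
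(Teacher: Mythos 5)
Your proposal is correct and takes essentially the approach the paper intends: the paper states this proposition without proof, declaring it a direct consequence of Theorem~\ref{thm:back-trans} and the definition of transduction schemes. Your elaboration---the domain via backwards translation (or existential quantification of the parameters), domain-restriction by conjoining the defining sentence to the domain formula, and composition via the $k_1 k_2$-layer product scheme built on the formula-translation argument underlying Theorem~\ref{thm:back-trans}---is exactly the standard fleshing-out of that remark.
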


\subsection{Recognizable Sets}
\label{sec:recognizable}

Let $\sorts$ be a set of \emph{sorts}, ranged over by $\sort$, and let
$\fsignature = \set{f_1, f_2, \ldots}$ be a \emph{functional signature}. which is a set of \emph{function symbols} $f$.
Each function symbol $f$ has an associated tuple of argument sorts and a value sort, denoted $\argsof{f} = \tuple{\sort_1, \ldots, \sort_n}$ and $\typeof{f}$, respectively.
The arity of $f$ is denoted $\arityof{f}\isdef n$.
A variable is a sorted symbol of arity zero, not part of the signature. The sort of a variable $x$ (resp.\  $X$) is
denoted $\typeof{x}$ (resp.\  $\typeof{X}$). Terms are build from
variables and function symbols of matching sorts, as usual. We write
$t(x_1, \ldots, x_n, X_1, \ldots, X_m)$ if $x_1, \ldots, x_n, X_1,
\ldots, X_m$ are the variables from $t$. A \emph{ground term} is a
term without variables. A \emph{first-order term} does not contain
second-order variables.

An $\fsignature$-\emph{algebra} $\algof{A} =
(\set{\universeOf{A}^\sort}_{\sort\in\sorts},\set{f^{\algof{A}}}_{f
  \in \fsignature})$ consists of a \emph{universe}
$\universeOf{A}^\sort$ for each sort $\sort\in\sorts$ and interprets
each function symbol $f\in\fsignature$ by a function $f^{\algof{A}} :
\universeOf{A}^{\sort_1} \times \ldots \times \universeOf{A}^{\sort_n}
\rightarrow \universeOf{A}^{\sort}$, where $\argsof{f} =
\tuple{\sort_1, \ldots, \sort_n}$ and $\typeof{f}=\sort$. The set
$\universeOf{A}\isdef\bigcup_{\sort\in\sorts} \universeOf{A}^\sort$
denotes the union of all universes of $\algof{A}$. The sort of an
element $a \in \universeOf{A}$ is denoted $\typeof{a}$. The algebra
$\algof{A}$ is \emph{locally finite} iff $\universeOf{A}_\sort$ is
finite, for each $\sort\in\sorts$ and finite iff $\universeOf{A}$ is
finite.

An $\fsignature$-term $t(x_1, \ldots, x_n, X_1, \ldots, X_m)$ is
viewed as a function symbol of arity $n+m$, whose interpretation is
obtained by interpreting the function symbols from $t$ in $\algof{A}$
and lifting the functions from elements to sets of elements of the
same sort. A set $\tsignature$ of first-order $\fsignature$-terms defines a \emph{derived
$\tsignature$-algebra} of $\algof{A}$ with the same set of sorts
$\sorts$ and the same universes $\universeOf{A}^\sort$ for each sort
$\sort\in\sorts$, that interprets each function symbol $t$ as the
function $t^\algof{A}$. A \emph{subalgebra} of $\algof{A}$ is any
algebra obtained by restricting the set of sorts, signature and
universes of $\algof{A}$.

An $\fsignature$-algebra $\algof{A}$ is \emph{term-generated} iff its
universe is the set of interpretations of the ground
$\fsignature$-terms in $\algof{A}$ (we call these elements
term-generated). The term-generated subalgebra $\redalgof{A}$ is the
subalgebra defined as the restriction of $\algof{A}$ to its
term-generated elements.

We denote by $\initalg{\fsignature}$ the \emph{initial algebra} over
the functional signature $\fsignature$. The universes
$\inituniv{\fsignature}_\sort$ of $\initalg{\fsignature}$ are the sets
of ground $\fsignature$-terms having the same sort $\sort$ and the
interpretation of each function symbol $f \in \fsignature$ is
$f^\initalg{\fsignature}(t_1,\ldots,t_{\arityof{f}}) \isdef
f(t_1,\ldots,t_{\arityof{f}})$, for all ground terms
$t_1,\ldots,t_{\arityof{f}}$ of matching sorts.

We recall below the standard notion of recognizability:
\begin{defi}\label{def:recognizability}
An equivalence relation $\congrec{}{}$ on $\universeOf{A}$ is a
\emph{congruence} iff $a \congrec{}{} b$ only if
(1) $\typeof{a} = \typeof{b}$ and
(2) for all $f \in \fsignature$, if $a_i \congrec{}{} b_i$ then
$f^{\algof{A}}(a_1,\ldots,a_{\arityof{f}}) \congrec{}{}
f^{\algof{A}}(b_1,\ldots,b_{\arityof{f}})$.
A congruence is \emph{locally finite} iff it has finitely many
equivalence classes of each sort. A congruence $\congrec{}{}$
\emph{saturates} a set $\mathcal{L} \subseteq \universeOf{A}$ iff
$\mathcal{L}$ is a union of equivalence classes of $\congrec{}{}$. A
set is \emph{recognizable} iff there exists a locally finite
congruence that saturates it.
\end{defi}
Any (not necessarily recognizable) set is saturated by a unique
coarsest congruence:

\begin{defi}\label{def:syntactic-congruence}
  The \emph{syntactic congruence} of a set $\mathcal{L} \subseteq
  \universeOf{A}$ in an $\fsignature$-algebra $\algof{A}$ is the
  relation $a \congrec{\algof{A}}{L} b$ defined as
  $\typeof{a}=\typeof{b}$ and $t^\algof{A}(a,c_1,\ldots,c_k) \in
  \mathcal{L} \Leftrightarrow t^\algof{A}(b,c_1,\ldots,c_k)\in
  \mathcal{L}$, for all first-order $\fsignature$-terms
  $t(x,y_1,\ldots,y_k)$ and all $c_1, \ldots, c_k \in \universeOf{A}$.
\end{defi}
The proof that $\congrec{\mathcal{L}}{\algof{A}}$ is the coarsest
congruence that saturates $\mathcal{L}$ is standard, see e.g.,
\cite[Proposition 3.66]{courcelle_engelfriet_2012}. Hence,
$\mathcal{L}$ is recognizable if and only if
$\congrec{\mathcal{L}}{\algof{A}}$ is locally finite.

For the purpose of several proofs in the paper, we introduce an
equivalent definition of recognizability using homomorphisms into
locally finite algebras. A \emph{homomorphism} between
$\fsignature$-algebras $\algof{A}$ and $\algof{B}$ is a function $h:
\universeOf{A} \rightarrow \universeOf{B}$ such that
(1) $f(\universeOf{A}^\sort) \subseteq \universeOf{B}^\sort$, for
all sorts $\sort\in\sorts$, and
(2) \(h(f^\algof{A}(a_1, \ldots, a_{\arityof{f}})) =
f^\algof{B}(h(a_1), \ldots, h(a_{\arityof{f}}))\), for all function
symbols $f \in \fsignature$ and all elements $a_1, \ldots,
a_{\arityof{f}} \in \universeOf{A}$.

\begin{defi}\label{def:rec}
  A set $\mathcal{L} \subseteq \universeOf{A}$ is \emph{recognizable}
  in $\algof{A}$ iff there exists a locally finite algebra $\algof{B}$
  and a homomorphism $h$ between $\algof{A}$ and $\algof{B}$ such that
  $\mathcal{L} = h^{-1}(\universeOf{C})$, for a set $\universeOf{C}
  \subseteq \universeOf{B}$.
\end{defi}
The equivalence with the notion of recognizability introduced above is
that the equivalence relation $\congrec{}{}$ from Definition
\ref{def:recognizability} is the kernel of the homomorphism $h$ from
Definition \ref{def:rec}, see, e.g.,~\cite[Proposition
  3.64]{courcelle_engelfriet_2012} for a proof of equivalence between
the two notions of recognizability.

Recognizability in an algebra implies recognizability in each derived
(sub)algebra, as shown by the following two lemmas:

\begin{lem}\label{lemma:derived-rec}
  Let $\algof{D}$ be a derived algebra of $\algof{A}$. Then,
  $\mathcal{L} \subseteq \universeOf{A}$ is recognizable in
  $\algof{D}$ if it is recognizable in $\algof{A}$.
\end{lem}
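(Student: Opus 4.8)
The plan is to prove the lemma using the homomorphism characterization of recognizability from Definition~\ref{def:rec}, which is cleaner to manipulate than the congruence definition. Recall that a derived algebra $\algof{D}$ of $\algof{A}$ shares the same sorts $\sorts$ and the same universes $\universeOf{A}^\sort$, but its signature $\tsignature$ consists of (the interpretations of) certain first-order $\fsignature$-terms, each interpreted as the corresponding derived function $t^\algof{A}$. So $\algof{D}$ and $\algof{A}$ have identical carriers and differ only in their stock of operations: every operation of $\algof{D}$ is term-definable from the operations of $\algof{A}$.

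\textbf{The construction.}
Suppose $\mathcal{L} \subseteq \universeOf{A}$ is recognizable in $\algof{A}$. By Definition~\ref{def:rec} there is a locally finite $\fsignature$-algebra $\algof{B}$, a homomorphism $h : \universeOf{A} \rightarrow \universeOf{B}$, and a set $\universeOf{C} \subseteq \universeOf{B}$ with $\mathcal{L} = h^{-1}(\universeOf{C})$. First I would turn $\algof{B}$ into a $\tsignature$-algebra $\algof{B}'$ by keeping the same sorts and universes $\universeOf{B}^\sort$ and interpreting each derived symbol $t \in \tsignature$ by the corresponding derived function $t^{\algof{B}}$ (the interpretation of the term $t$ in $\algof{B}$). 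Since the universes are unchanged, $\algof{B}'$ is still locally finite. The crucial point is then to check that the \emph{same} function $h$ is a homomorphism from $\algof{D}$ to $\algof{B}'$: this follows because homomorphisms commute with term evaluation, i.e.\ $h(t^{\algof{A}}(\vec{a})) = t^{\algof{B}}(h(\vec{a}))$ for every first-order $\fsignature$-term $t$, which is a routine induction on the structure of $t$ using that $h$ is an $\fsignature$-homomorphism and that the sorts are preserved. With this identity in hand, $h$ satisfies condition~(2) of the homomorphism definition for every derived symbol $t^{\algof{A}}$, and condition~(1) (sort preservation) is inherited verbatim. Finally, since the carrier of $h$ is literally the same function, $h^{-1}(\universeOf{C}) = \mathcal{L}$ still holds, so $\mathcal{L}$ is recognizable in $\algof{D}$.

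\textbf{The main obstacle.}
The only technical step requiring care is the inductive verification that $h$ intertwines term evaluation, namely $h(t^{\algof{A}}(a_1,\ldots,a_n)) = t^{\algof{B}}(h(a_1),\ldots,h(a_n))$. The base cases are projections (variables) and nullary symbols, and the inductive step unfolds one outermost function symbol $f \in \fsignature$ and applies the homomorphism property of $h$ together with the induction hypothesis on the immediate subterms. This is entirely mechanical, so I expect no genuine difficulty; the real content of the lemma is simply the observation that recognizability is preserved when one restricts to a set of term-definable operations, because any homomorphism witnessing recognizability in the richer signature automatically witnesses it in the poorer derived signature.
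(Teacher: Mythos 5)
Your proof is correct and follows essentially the same route as the paper's: both use the homomorphism characterization of recognizability, transport the recognizing algebra $\algof{B}$ to its derived algebra with operations $t^{\algof{B}}$, and reuse the same homomorphism $h$ and accepting set. The only difference is that you spell out the (routine) induction showing $h$ commutes with term evaluation, which the paper leaves implicit.
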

\begin{proof}
  Let
  $\algof{D}=(\set{\universeOf{A}^\sort}_{\sort\in\sorts},\set{t^\algof{A}}_{t\in\dsignature})$.
  For any locally finite $\fsignature$-algebra
  $\algof{B}=(\set{\universeOf{B}^\sort}_{\sort\in\sorts},\fsignature^\algof{B})$,
  any homomorphism $h$ between $\algof{A}$ and $\algof{B}$ is also a
  homomorphism between $\algof{D}$ and the derived algebra
  $\algof{D}'=(\set{\universeOf{B}^\sort}_{\sort\in\sorts},\set{t^{\algof{B}}}_{t\in\dsignature})$.
  The homomorphism $h$ and the set
  $\mathcal{C}\subseteq\universeOf{B}$ that witness the
  recognizability of $\mathcal{L}$ in $\algof{A}$ also witness the
  recognizability of $\mathcal{L}$ in $\algof{D}$.
\end{proof}
Note that the converse does not hold, for instance, if we
consider the algebra of words over the alphabet $\set{a,b}$ with
signature consisting of the empty word $\epsilon$ and concatenation. A
derived algebra is obtained by taking the empty word and the derived
operation $x \mapsto axb$. Then $\set{a^n b^n \mid n \in \nat}$ is
recognizable in the derived algebra but not in the original one.

\begin{lem}\label{lemma:sub-rec}
  Let $\algof{B}$ be a subalgebra of $\algof{A}.$ Then, $\mathcal{L}
  \subseteq \universeOf{B}$ is recognizable in $\algof{B}$ if it is
  recognizable in $\algof{A}$.
 \end{lem}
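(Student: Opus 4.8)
The plan is to mirror the proof of Lemma~\ref{lemma:derived-rec}, this time restricting the homomorphism-based witness of recognizability (Definition~\ref{def:rec}) from $\algof{A}$ down to the subalgebra $\algof{B}$. So suppose $\mathcal{L}$ is recognizable in $\algof{A}$: by Definition~\ref{def:rec} there are a locally finite $\fsignature$-algebra $\algof{C}$, a homomorphism $h : \universeOf{A} \to \universeOf{C}$, and a set $\universeOf{K} \subseteq \universeOf{C}$ with $\mathcal{L} = h^{-1}(\universeOf{K})$. Let $\sorts'$, $\fsignature'$ and $\set{\universeOf{B}^\sort}_{\sort\in\sorts'}$ denote the sorts, signature and universes of $\algof{B}$, so that $\universeOf{B}^\sort \subseteq \universeOf{A}^\sort$ for each $\sort \in \sorts'$, and each $f \in \fsignature'$ is interpreted in $\algof{B}$ as the restriction of $f^{\algof{A}}$ to the corresponding $\universeOf{B}$-universes.

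First I would build the witness algebra for $\algof{B}$: take the $\fsignature'$-algebra $\algof{C}'$ whose set of sorts is $\sorts'$, whose universes are $\universeOf{C}^\sort$ for $\sort \in \sorts'$ (so its union-universe is $\universeOf{C}' = \bigcup_{\sort \in \sorts'} \universeOf{C}^\sort$), and which interprets each $f \in \fsignature'$ by the restriction of $f^{\algof{C}}$. Since $\algof{B}$ is itself an algebra over $\sorts'$, the argument and value sorts of every $f \in \fsignature'$ already lie in $\sorts'$; hence the retained universes are closed under the retained operations and $\algof{C}'$ is a well-defined subalgebra of $\algof{C}$, and it is locally finite because $\algof{C}$ is.

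Next I would verify that the domain-restriction $h' \isdef \proj{h}{\universeOf{B}}$ is a homomorphism between $\algof{B}$ and $\algof{C}'$. Sort-preservation is immediate from $\universeOf{B}^\sort \subseteq \universeOf{A}^\sort$ together with the fact that $h$ preserves sorts. For compatibility with the operations, I would use that for $f \in \fsignature'$ both $f^{\algof{B}}$ and $f^{\algof{C}'}$ are the respective restrictions of $f^{\algof{A}}$ and $f^{\algof{C}}$, so the homomorphism equation for $h'$ on arguments drawn from $\universeOf{B}$ is literally the homomorphism equation for $h$ on $\algof{A}$ read off on those arguments. Finally, since $\mathcal{L} = h^{-1}(\universeOf{K}) \subseteq \universeOf{B}$ and $h'(b)=h(b)$ for every $b \in \universeOf{B}$, taking the accepting set $\universeOf{K} \cap \universeOf{C}'$ gives $(h')^{-1}(\universeOf{K} \cap \universeOf{C}') = \universeOf{B} \cap h^{-1}(\universeOf{K}) = \mathcal{L}$, which exhibits $\mathcal{L}$ as recognizable in $\algof{B}$ via Definition~\ref{def:rec}.

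I do not anticipate a real obstacle: the only point requiring any care is checking that $\algof{C}'$ is a legitimate algebra, i.e., that the retained universes are closed under the retained operations. This is, however, forced by the hypothesis that $\algof{B}$ is a subalgebra, which guarantees that every sort mentioned by some $f \in \fsignature'$ belongs to $\sorts'$. Everything else is a routine restriction argument dual to the one used for derived algebras in the proof of Lemma~\ref{lemma:derived-rec}.
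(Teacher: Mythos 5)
Your proof is correct and follows essentially the same route as the paper's: restrict the locally finite recognizer algebra to the sorts and signature of $\algof{B}$ (your $\algof{C}'$ is the paper's $\algof{E}$), restrict the homomorphism to $\universeOf{B}$, and intersect the accepting set with the restricted universe. The extra care you take in checking that the restricted universes are closed under the retained operations is a detail the paper leaves implicit, but the argument is the same.
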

\begin{proof}
   Let $\algof{B} = (\set{\universeOf{B}^\sort}_{\sort\in\sorts'},
   \set{f^\algof{B}}_{f\in\sorts'})$ and $\algof{D} =
   (\set{\universeOf{D}^\sort}_{\sort\in\sorts},
   \fsignature^\algof{D})$ be a locally finite $\fsignature$-algebra
   and $h : \universeOf{A} \rightarrow \universeOf{D}$ be a
   homomorphism, such that $\mathcal{L}=h^{-1}(\mathcal{C})$, for some
   $\mathcal{C} \subseteq \universeOf{D}$. Let $\algof{E}$ be the
   subalgebra of $\algof{D}$ obtained by restricting $\algof{D}$ to
   the sorts $\sorts'$ and the signature $\fsignature'$. Then $h'
   \isdef \proj{h}{\universeOf{B}}$ is a homomorphism between the
   algebras $\algof{B}$ and $\algof{E}$ and $\mathcal{L} =
   h^{-1}(\mathcal{C}\cap\universeOf{E}) \cap \universeOf{B} =
   {h'}^{-1}(\mathcal{C}\cap\universeOf{E})$, which witnesses the
   recognizability of $\mathcal{L}$ in $\algof{B}$.
\end{proof}

\subsection{Context-Free Sets}
\label{sec:context-free}

Let $\fsignature$ be a functional signature and $\nonterminals
\subseteq \Vars$ be a set of second-order variables, called
\emph{nonterminals}, ranged over by $U,V$, etc. A \emph{grammar} is a
finite set of \emph{rules} of the form $U \rightarrow t$, where $U$ is
a nonterminal and $t$ is a $\fsignature$-term with variables from
$\nonterminals$.

A \emph{solution} of $\grammar$ over an $\fsignature$-algebra
$\algof{A}$ is a mapping $\mathcal{S} : \nonterminals \rightarrow
\pow{\universeOf{A}}$ such that $t^{\mathcal{S}} \subseteq
\mathcal{S}(U)$ for each rule $U \rightarrow t \in \grammar$, where
$t^{\mathcal{S}}$ denotes the evaluation of the term with regard to
the sets $\mathcal{S}(U) \subseteq \universeOf{A}$, for each variable
$U \in \nonterminals$. Since the evaluation of terms with set
variables is monotonic with regard to set containment, a least
solution exists and is unique. We denote by $\langof{U}{\grammar}$ the
component corresponding to $U$ within the least solution of
$\grammar$.

\begin{defi}\label{def:context-free}
A set $\mathcal{L}$ is \emph{context-free} iff $\mathcal{L} =
\langof{U}{\grammar}$, for a nonterminal $U\in\nonterminals$ and a
grammar $\grammar$.
\end{defi}

The following theorem generalizes a classical result, namely that the
intersection of a context-free and a regular set of words is
context-free. We restate the result in its algebraic form, also known
as the Filtering Theorem:

\begin{thmC}[{\cite[Theorem 3.88]{courcelle_engelfriet_2012}}]\label{thm:filtering-theorem}
  Let $\mathcal{L}$ be a context-free set and $\mathcal{K}$ be a
  recognizable set. Then, $\mathcal{L} \cap \mathcal{K}$ is
  context-free. Moreover, the grammar for $\mathcal{L} \cap
  \mathcal{K}$ has the same sorts as the one for $\mathcal{L}$.
\end{thmC}
Here, by the sorts of a grammar, we understand the set of sorts that
occur as argument or value sorts in each function symbol occurring in
that grammar.

\section{Graphs}
\label{sec:graphs}

This section introduces hyper-graphs with edges labeled by symbols
from a finite alphabet and distinguished source vertices, the
hyperedge replacement algebra (\autoref{sec:hr}), the subalgebras that
use finitely many source labels (\autoref{sec:sub}) and the notion of
definable sets of graphs, via the encoding of the incidence relation
of a graph by a relational structure (\autoref{sec:cmso-def}).

Let $\sourcelabels$ be a countably infinite set of \emph{source
labels} and $\alphabet$ be an alphabet of \emph{edge labels}, disjoint
from $\sourcelabels$. Each edge label $a\in\alphabet$ has an
associated \emph{arity} $\arityof{a} \geq 1$, i.e., we do not consider
edge labels of arity zero. The sets $\sourcelabels$ and $\alphabet$
are fixed in the rest of the paper.

\begin{defi}\label{def:graphs}
  Let $\slabs \finsubseteq \sourcelabels$ be a finite set of source
  labels. A \emph{concrete graph} of sort $\slabs$ is a tuple $\graph
  = \tuple{\vertof{\graph}, \edgeof{\graph}, \labof{\graph},
    \edgerelof{\graph}, \sourceof{\graph}}$,
  where: \begin{itemize}[label=$\triangleright$]
  \item $\vertof{\graph}$ is a finite set of \emph{vertices},
  \item $\edgeof{\graph}$ is a finite set of \emph{edges}, disjoint
    from $\vertof{\graph}$,
  \item $\labof{\graph} : \edgeof{\graph} \rightarrow \alphabet$ is a
    mapping that defines the labels of the edges,
  \item $\edgerelof{\graph} : \edgeof{\graph} \rightarrow
    \vertof{\graph}^+$ is a mapping that associates each edge a
    nonempty sequence of vertices attached to the edge, such that
    $\arityof{(\labof{\graph}(e))} = \lenof{\edgerelof{\graph}(e)}$,
    for each $e \in \edgeof{\graph}$,
  \item $\sourceof{\graph} : \slabs \rightarrow \vertof{\graph}$ is a
    \emph{one-to-one} mapping that designates the \emph{sources} of
    $\graph$. The vertex $\sourceof{\graph}(s)$ is called the
    $s$-source of $\graph$. Because $\sourceof{\graph}$ is injective,
    a vertex cannot be both an $s$- and $s'$-source, for $s \neq
    s'$. Vertices that are not sources are called \emph{internal}.
  \end{itemize}
  We identify concrete graphs up to isomorphism and define
  \emph{graphs} as isomorphism-equivalence classes of concrete
  graphs. We denote by $\graphs$ the set of graphs.
\end{defi}
\begin{exa}
The leftmost graph in \autoref{fig:graphs} (a) has four vertices of
which three sources labeled $s_1$, $s_2$ and $s_3$ and three edges
labeled $a$, $b$ and $c$. The $a$-labeled edge is attached to three
vertices, whereas the $b$- and $c$-labeled edges are binary. The
arrows on the edges indicate the order of the vertices attached to
them. The middle graph is of sort $\set{s_1,s_2,s_4}$ and the
rightmost one of sort $\set{s_1,s_2,s_3,s_4}$.
\end{exa}

\begin{figure}[t!]
  \centerline{\hspace{5pt}\scalebox{1.1}{\input{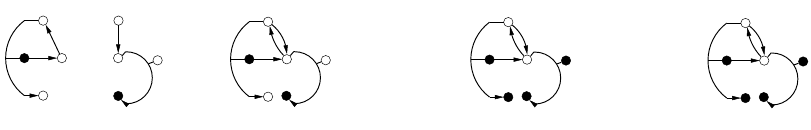_t}}}
  \caption{Composition (a), Restriction (b) and Renaming (c) of
    Graphs. Sources are denoted by hollow and internal vertices by
    solid circles. Arrows indicate the order of vertices attached to the corresponding edges.}
  \label{fig:graphs}
\end{figure}

\subsection{The Hyperedge Replacement Algebra}
\label{sec:hr}
We introduce the \emph{hyperedge replacement} (HR) algebra of
operations on graphs. There are several equivalent definitions of this
algebra in the literature. For instance, the definition from
\cite{CourcelleV} uses operations described by graphs with sources
denoted by $\set{1, \ldots, k}$, having designated edges $e_i$, for $i
\in \interv{1}{n}$, that are deleted and replaced by graphs $\graph_i$
of sorts $\set{1, \ldots, \arityof{\labof{\graph_i}}}$,
respectively. These operations are strongly typed. Instead, we
consider an algebra over a signature of polymorphic operations, having
the same expressivity \cite[Definition
  2.32]{courcelle_engelfriet_2012}. To ensure compliance with the
general definition of multi-sorted algebras from
\autoref{sec:recognizable}, these polymorphic operations can be
understood as families of sorted operations, one for each choice of
the argument sorts.

We fix the set of sorts $\hrsorts$ to be the set of finite subsets of
$\sourcelabels$. The signature $\hrsignature$ consists of the
constants $\emptygraph_\slabs$, for all $\slabs \finsubseteq
\sourcelabels$ and $\sgraph{a}_{(s_1, \ldots, s_{\arityof{a}})}$, for
all $a\in \alphabet$ and $s_1, \ldots, s_{\arityof{a}} \in
\sourcelabels$, the unary function symbols $\restrict{\slabs}$, for
all $\slabs \finsubseteq \sourcelabels$, and $\rename{\alpha}$, for
all finite permutations $\alpha: \sourcelabels \rightarrow
\sourcelabels$ and the binary function symbol $\pop$. The \emph{graph
algebra} $\algebra$ interprets the symbols in $\hrsignature$ as
follows: \begin{enumerate}
\item \textbf{sources only}: the graph $\emptygraph^{\algebra}_\slabs$
  consists of one $s$-source for each $s\in\slabs$ and no edges.
\item \textbf{single edge}: the graph $\sgraph{a}^{\algebra}_{(s_1,
  \ldots, s_{\arityof{a}})}$ consists of an $s_i$-source, for each $i
  \in \interv{1}{\arityof{a}}$, and a single edge labeled with $a$
  attached to the $s_1,\ldots,s_{\arityof{a}}$-sources, in this order.
\item \textbf{restriction}: the unary function
  $\restrict{\slabs}^{\algebra}$ takes as input a graph of sort
  $\slabs'$ and returns the graph of sort $\slabs\cap\slabs'$ obtained
  by removing the source labels in $\slabs' \setminus \slabs$ from
  $\graph$. Note that the vertices from $\img{\slabs}$ are not removed
  in the output graph. Formally, each concrete graph $\graph$ is
  mapped into the concrete graph $\tuple{\vertof{\graph},
    \edgeof{\graph}, \labof{\graph}, \edgerelof{\graph},
    \proj{\sourceof{\graph}}{\slabs}}$ and
  $\restrict{\slabs}^{\algebra}$ is defined as the lifting of this
  operation from concrete graphs to graphs.
\item \textbf{rename:} the unary function $\rename{\alpha}^{\algebra}$
  takes as input a graph of sort $\slabs$ and returns the graph of
  sort $\alpha^{-1}(\slabs)$ obtained by renaming its sources
  according to $\alpha$. Formally, each concrete graph $\graph$ is
  mapped into the concrete graph $\tuple{\vertof{\graph},
    \edgeof{\graph}, \labof{\graph}, \edgerelof{\graph},
    \sourceof{\graph} \circ \alpha}$ and $\rename{\alpha}^{\algebra}$
  is defined as the lifting of this operation from concrete graphs to
  graphs.
\item \textbf{composition:} the binary function $\pop^{\algebra}$
  takes the disjoint union of two graphs of sorts $\slabs_1$ and
  $\slabs_2$ and fuses the vertices labeled by the same source label
  in both. The result is a graph of sort $\slabs_1 \cup
  \slabs_2$. Formally, let $\graph_i$ be concrete graphs of sort
  $\slabs_i$, for $i=1,2$, such that $\vertof{\graph_1} \cap
  \vertof{\graph_2} = \emptyset$ and $\edgeof{\graph_1} \cap
  \edgeof{\graph_2} = \emptyset$. Let $\sim \ \subseteq
  (\vertof{\graph_1} \cup \vertof{\graph_2})^2$ be the least
  equivalence relation such that $u_1 \sim u_2$ if $u_i =
  \sourceof{\graph_i}(s)$, for $i=1,2$ and $s \in \slabs_1 \cap
  \slabs_2$. Then the composition $\graph_{12}$ of $\graph_1$ with
  $\graph_2$ is defined as
  follows: \begin{itemize}[label=$\triangleright$]
    \item $\vertof{\graph_{12}} = \{[u]_\sim \mid u \in
      \vertof{\graph_1} \cup \vertof{\graph_2} \}$,
      $\edgeof{\graph_{12}} = \edgeof{\graph_1} \cup
      \edgeof{\graph_2}$ and
      $\labof{\graph_{12}}\isdef\labof{\graph_1}\cup\labof{\graph_2}$,
    \item $\edgerelof{\graph_{12}}(e) \isdef \tuple{[u_1]_\sim,
      \ldots, [u_k]_\sim}$ for every edge $e \in \edgeof{\graph_i}$,
      such that $\edgerelof{\graph_i}(e) = \tuple{u_1, \ldots, u_k}$,
    \item
      $\sourceof{\graph_{12}}(s) \isdef [\sourceof{\graph_i}(s)]_\sim$
      iff $s \in \slabs_i$, for $i=1,2$.
    \end{itemize}
    where $[u]_\sim$ is the $\sim$-equivalence class of the vertex
    $u$. Then, $\pop^{\algebra}$ is the lifting of this binary
    operation from concrete graphs to graphs.
\end{enumerate}
\begin{exa}
For example, \autoref{fig:graphs} (a) shows the result of the
composition of two graphs, whereas (b) and (c) show the result of
applying restriction and renaming to this composition, respectively.
By $\set{i \leftrightarrow j}$ we denote the finite permutation on
$\nat$ that swaps $i$ with $j$ and maps every $k \in \nat \setminus
\set{i,j}$ to itself.
\end{exa}

\subsection{Hyperedge Replacement Subalgebras}
\label{sec:sub}
For each $\slabs\finsubseteq\sourcelabels$, let $\graphsof{\slabs}
\isdef \{\graph \in \graphs \mid \typeof{\graph}\subseteq\slabs\}$ be
the set of graphs of sort included in $\slabs$ and $\algebra^\slabs$
be the subalgebra of $\algebra$ with the universe $\graphsof{\slabs}$
and the finite signature of operations that use only source labels
from $\slabs$:

\begin{align*}
  \hrsignature^\slabs \isdef &
  \set{\emptygraph_{\slabs'}}_{\slabs' \subseteq \slabs} \cup
  \set{\sgraph{a}_{(s_1, \ldots, s_{\arityof{a}})}}_{\hspace*{-2mm}\begin{array}{l}
      \scriptstyle{a \in \alphabet} \\[-2mm]
      \scriptstyle{s_1, \ldots, s_{\arityof{a}} \in \slabs}
  \end{array}} \hspace*{-8mm} \cup
  \set{\restrict{\slabs'}}_{\slabs' \subseteq \slabs} \cup
  \set{\rename{\alpha}}_{\alpha~ \slabs\text{-permutation}} \cup \set{\pop}
\end{align*}
We denote by $\btwgraphsof{\slabs}$ the set of term-generated elements
of $\algebra^{\slabs}$ and by $\btwalgebraof{\slabs}$ the
term-generated subalgebra of $\algebra^{\slabs}$. Note that, while all
elements of the graph algebra $\algebra$ are term-generated, each
term-generated algebra $\btwalgebraof{\slabs}$ is a strict subalgebra
of $\algebra^{\slabs}$.

\subsection{Definable Sets of Graphs}
\label{sec:cmso-def}

In order to describe sets of graphs using \cmso, we encode graphs as
relational structures over finite relational signatures.  To this end,
we consider the alphabet $\alphabet$ of edge labels to be
finite. Given a sort $\slabs\finsubseteq\sourcelabels$, we define the
relational signature $\relsiggraph{\slabs} \isdef \set{\arel_a \mid a
  \in \alphabet} \cup \set{\arel_s \mid s \in \slabs}$ whose relation
symbols have the arities $\arityof{\arel_a} \isdef \arityof{a}+1$, for
all $a\in\alphabet$, and $\arityof{\arel_s} \isdef 1$, for all
$s\in\slabs$. Note that the signature $\relsiggraph{\slabs}$ is finite
because both $\alphabet$ and $\slabs$ are finite. The \emph{encoding}
of a concrete graph $\graph \in \graphsof{\slabs}$ is the structure
$\gstruc{\graph} = (\vertof{\graph} \cup
\edgeof{\graph},\struc_\graph) \in \strucof{\relsiggraph{\slabs}}$,
where:
\begin{align*}
  \struc_\graph(\arel_a) \isdef & \set{(e,v_1, \ldots, v_{\arityof{a}}) \mid e
  \in \edgeof{\graph},~ \labof{\graph}(e) = a,~ \edgerelof{\graph}(e) = (v_1,
  \ldots, v_{\arityof{a}})} \text{, for all } a \in \alphabet
  \\
  \struc_\graph(\arel_s) \isdef & \set{\sourceof{\graph}(s)} \text{, for all } s \in \slabs
\end{align*}
We note that the encodings of isomorphic concrete graphs are
isomorphic structures. The encoding of a graph denotes the
isomorphism-equivalence class obtained from the encodings of the
concrete graphs in the graph.

This type of encoding, where edges are elements of the universe and
the graph is described by the incidence relation between edges and
vertices, is known as the \emph{incidence encoding}. Another encoding
used in the literature is the \emph{edge encoding}, where the universe
consists of vertices only and edges are tuples from the interpretation
of the relations corresponding to the labels. The expressiveness of
\cmso\ differs in the two encodings, e.g., the existence of a
Hamiltonian cycle can be described in \cmso\ using the incidence but
not the edge encoding~\cite[Proposition
  5.13]{courcelle_engelfriet_2012}. Moreover, key results used in the
paper~\cite[Theorems 1.10 and 2.1]{CourcelleEngelfriet95} hold under
the incidence but not the edge encoding.

\begin{defi}\label{def:definable}
A set of graphs $\mathcal{L} \subseteq \graphsof{\slabs}$ is
\emph{definable} if there exists a \cmso\ formula $\phi$ over
the relational signature $\relsiggraph{\slabs}$ such that
$\gstruc{\mathcal{L}} = \set{\astruc \mid \astruc \Models\phi}$.
\end{defi}
Note that a set of graphs that is not included in $\graphsof{\slabs}$,
for any finite $\slabs\finsubseteq\sourcelabels$, is not definable,
because a \cmso\ formula can only speak of finitely many relation
symbols. We recall the following result~\cite[Theorem
  4.4]{CourcelleI}:

\begin{thmC}[\cite{CourcelleI}]\label{thm:graphs-def-rec}
  Any definable set of graphs is recognizable in $\algebra$.
\end{thmC}

\section{Trees}
\label{sec:trees}

We deviate from the definition of trees used in the classical
literature on tree automata (see, e.g., \cite{comon:hal-03367725} for
a survey) and define trees as the term-generated elements of a suitable
derived subalgebra of graphs. In contrast to the standard definition,
where the number and order of the children of a node in the tree is
determined by the label of that node, our definition encompasses also
trees whose nodes have an unbounded number of children, such that,
moreover, the order of siblings is not important. As explained below,
this definition of trees is strictly more general than the classical
definition of terms over ranked
alphabets~\cite{comon:hal-03367725}. This generality is needed,
because trees are used in the definition of tree decompositions, that
impose no restrictions on the number or the order of siblings.

Let $\alphabetTwo \subseteq \alphabet$ be a set of edge
labels of arities at most two and $\aroot$ be a source label. We use
$c$ and $b$ to denote the symbols of arities one and two of
$\alphabetTwo$, respectively. The $\aroot$-source of a graph is called
its \emph{root}. The signature of trees with $\alphabetTwo$-labeled
edges is the following set of function symbols:
\begin{align*}
  \treesignature(\alphabetTwo) \isdef & \set{\sgraph{c} \mid c \in
    \alphabetTwo, \arityof{c}=1} \cup \set{\extend{b} \mid b \in
    \alphabetTwo, \arityof{b}=2} \cup \set{\pop}
\end{align*}
where $\sgraph{c}$ are constants (we omit specifying the $\aroot$
source label) and $\extend{b}$ are unary. The constants $\sgraph{c}$
are interpreted as trees consisting of a single root vertex, whereas
$\extend{b}$ and $\pop$ are interpreted as in \autoref{fig:trees}.
Formally, $\extend{b}(x) \isdef \rename{\aroot \leftrightarrow
  \temp}(\restrict{\{\temp\}}(\sgraph{b}_{(\temp,\aroot)} \pop x))$,
where $\temp$ is an auxiliary source label used only here and $\aroot
\leftrightarrow \temp$ is the permutation that switches $\aroot$ with
$\temp$. We denote by $\treealgebra(\alphabetTwo)$ the term-generated
algebra whose universe $\trees(\alphabetTwo)$ (of a single sort
$\set{\aroot}$) is the set of interpretations of the ground
$\treesignature(\alphabetTwo)$-terms in $\algebra$.

The standard terminology for trees is immediately retrieved from the
above definition. The vertices of a tree $\tree$ are called
\emph{nodes}. For a binary edge $e \in \edgeof{\tree}$, we say that
$\edgerelof{\tree}(e)_1$ is the \emph{parent} of
$\edgerelof{\tree}(e)_2$ and $\edgerelof{\tree}(e)_2$ is a
\emph{child} of $\edgerelof{\tree}(e)_1$. A node with no children is
called a \emph{leaf}. Since trees are interpretations of ground terms,
each leaf corresponds to the interpretation of at least some constant
$\sgraph{c}$, thus is attached to a unary edge labeled with $c$.  The
\emph{rank} of a tree is the maximum number of children of a node. A
set of trees is \emph{ranked} if the corresponding set of ranks is
finite and \emph{unranked}, otherwise. In particular, the set
$\trees(\alphabetTwo)$ is unranked.

\begin{figure}[t!]
  \centerline{\scalebox{1.1}{\input{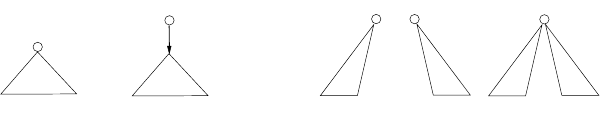_t}}}
  \caption{Append (a) and Composition (b) of Trees. The $\aroot$-sources are denoted by hollow circles. }
  \label{fig:trees}
\end{figure}

\subsection{Definable Sets of Trees}
Because trees are graphs, the encoding of trees is no different from
that of graphs. We consider a finite set $\alphabetTwo$ of edge labels
and define the relational signature $\relsigtree{}(\alphabetTwo)
\isdef \set{\arel_b \mid b \in \alphabetTwo} \cup \set{\arel_\aroot}$,
where $\aroot$ is the singleton source label associated with the
root. Interestingly, the expressiveness of \cmso\ using incidence and
the edge encodings coincide, when trees are considered, instead of
graphs (a consequence of the Sparseness Theorem~\cite[Theorem
  1.44]{courcelle_engelfriet_2012}).

A classical result of Courcelle~\cite{CourcelleI} is the equivalence
between definability and recognizability for (possibly unranked) sets
of trees:

\begin{thmC}[\cite{CourcelleI}]\label{thm:unranked-trees-rec-def}
  For any finite alphabet $\alphabetTwo$, a set of trees $\mathcal{K}
  \subseteq \trees(\alphabetTwo)$ is recognizable in
  $\treealgebra(\alphabetTwo)$ iff $\mathcal{K}$ is definable.
\end{thmC}

As a consequence of Theorem \ref{thm:unranked-trees-rec-def}, we
obtain the equivalence of the recognizability of a set of trees in the
graph and tree algebras:

\begin{cor}\label{cor:trees-graph-rec}
  For each finite alphabet $\alphabetTwo\subseteq\alphabet$, a set of
  trees is recognizable in $\treealgebra(\alphabetTwo)$ iff it is
  recognizable in $\algebra$.
\end{cor}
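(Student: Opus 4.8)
The plan is to prove the two implications separately, using the algebraic transfer lemmas for the easy direction and the two definability characterizations, Theorems \ref{thm:unranked-trees-rec-def} and \ref{thm:graphs-def-rec}, for the hard one. The key observation is that $\treealgebra(\alphabetTwo)$ is a term-generated subalgebra of a derived algebra of $\algebra$: by definition each symbol of $\treesignature(\alphabetTwo)$ is interpreted by a term over $\hrsignature$ (explicitly so for $\extend{b}$, and directly for $\sgraph{c}$ and $\pop$), all of these derived operations are typed $\set{\aroot}\to\set{\aroot}$ (the auxiliary label $\temp$ occurs only inside the defining term of $\extend{b}$), and their term-generated elements are exactly $\trees(\alphabetTwo)$.

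First I would handle the implication ``recognizable in $\algebra$ $\Rightarrow$ recognizable in $\treealgebra(\alphabetTwo)$''. Let $\algof{D}$ be the derived algebra of $\algebra$ whose signature is $\treesignature(\alphabetTwo)$, each symbol being interpreted by its defining $\hrsignature$-term. If $\mathcal{K}\subseteq\trees(\alphabetTwo)$ is recognizable in $\algebra$, then by Lemma \ref{lemma:derived-rec} it is recognizable in $\algof{D}$. Since the term-generated subalgebra $\redalgof{D}$ of $\algof{D}$ coincides with $\treealgebra(\alphabetTwo)$, whose universe is $\trees(\alphabetTwo)\supseteq\mathcal{K}$, Lemma \ref{lemma:sub-rec} yields that $\mathcal{K}$ is recognizable in $\treealgebra(\alphabetTwo)$.

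For the converse, suppose $\mathcal{K}$ is recognizable in $\treealgebra(\alphabetTwo)$. By Theorem \ref{thm:unranked-trees-rec-def} (which handles unranked sets, as needed here) $\mathcal{K}$ is definable, i.e., there is a \cmso{} formula $\phi$ over $\relsigtree{}(\alphabetTwo)$ whose models are exactly the encodings of the trees in $\mathcal{K}$. I would then transfer this to graph-definability over $\relsiggraph{\set{\aroot}}$: since $\alphabetTwo\subseteq\alphabet$ we have $\relsigtree{}(\alphabetTwo)\subseteq\relsiggraph{\set{\aroot}}$, and on the encoding of any $\tree\in\trees(\alphabetTwo)$ the additional relations $\arel_a$ with $a\in\alphabet\setminus\alphabetTwo$ are empty, while the $\relsigtree{}(\alphabetTwo)$-reduct is exactly the tree encoding. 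Hence the graph formula $\phi\wedge\psi$, where $\psi$ is a \mso{} sentence defining $\gstruc{\trees(\alphabetTwo)}$ as a set of graphs over $\relsiggraph{\set{\aroot}}$, has as models precisely the graph encodings of the trees in $\mathcal{K}$, so $\mathcal{K}$ is a definable set of graphs in the sense of Definition \ref{def:definable}. Theorem \ref{thm:graphs-def-rec} then gives that $\mathcal{K}$ is recognizable in $\algebra$.

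The only real obstacle is the signature-matching step in the converse, namely producing the sentence $\psi$ that isolates $\gstruc{\trees(\alphabetTwo)}$ among all $\relsiggraph{\set{\aroot}}$-structures. This amounts to checking that a structure is a well-formed graph encoding, that every edge carries a label in $\alphabetTwo$, and that the binary edges organize the vertices into a tree rooted at the $\aroot$-source (each non-root vertex has a unique incoming binary edge, the root none, and the incidence structure is connected and acyclic), together with the residual conditions on the unary edges; all of these are expressible in \mso{} and generalize the sentence $\varphi_{\mathsf{tree}}$ from Example \ref{ex:tll}. Everything else reduces directly to the cited lemmas and theorems.
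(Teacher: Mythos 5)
Your proof is correct and takes essentially the same route as the paper: the hard direction via Theorem \ref{thm:unranked-trees-rec-def} followed by Theorem \ref{thm:graphs-def-rec}, and the easy direction via Lemmas \ref{lemma:derived-rec} and \ref{lemma:sub-rec}, using that $\treealgebra(\alphabetTwo)$ is a derived subalgebra of $\algebra$. The only difference is that you make explicit the signature-matching step (the sentence $\psi$ isolating $\gstruc{\trees(\alphabetTwo)}$ among $\relsiggraph{\set{\aroot}}$-structures), a detail the paper elides by identifying definability of trees with definability of graphs under the common incidence encoding.
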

\begin{proof}
  ``$\Rightarrow$'' By Theorem \ref{thm:unranked-trees-rec-def}, if a
  set of trees is recognizable in $\treealgebra(\alphabetTwo)$ then it
  is definable. By Theorem \ref{thm:graphs-def-rec}, any definable set
  of graphs is recognizable in $\algebra$.  ``$\Leftarrow$'' By Lemmas
  \ref{lemma:derived-rec} and \ref{lemma:sub-rec}, because
  $\treealgebra(\alphabetTwo)$ is a derived subalgebra of $\algebra$.
\end{proof}

As a remark, each definable ranked set of trees can be described using
the \mso{} fragment of \cmso, by~\cite[Proposition
  6.11]{CourcelleI}. The distinction between unranked and ranked sets
of trees is formally established by the fact that there exists sets of
trees definable in \cmso{} that are not \mso-definable~\cite[Corollary
  6.6]{CourcelleI}. For instance, the set of trees of height one
having an even number of leaves is \cmso-definable but not
\mso-definable.
In other words, \cmso{} is strictly more expressive that \mso.

\subsection{Tree Decompositions}
\label{sec:tree-decompositions}

A \emph{tree decomposition} is another way of encoding a graph as a
tree whose edges are all labeled by a binary label $\parent$.
%% A result from this paper is that a tree decomposition can be
%% translated into a parse tree that encodes the same graph (Lemma
%% \ref{lem:mapping-tree-decompositions-back-to-dervation-trees}).
A set of nodes $C \subseteq \vertof{\tree}$ is \emph{connected} in
$\tree$ iff between any two nodes in $C$ there exists an undirected
path of edges from $\edgeof{\tree}$ that traverses only nodes from
$C$. Tree decompositions are used to formalize the notion of
\emph{tree-width}:

\begin{defi}\label{def:tree-decomposition}
    A \emph{tree decomposition} of a concrete graph $\graph$ of sort $\slabs$ is a pair
    $(\tree,\beta)$, where $\tree \in \trees(\parent)$ and $\beta :
    \vertof{\tree} \rightarrow \pow{\vertof{\graph}}$ is a mapping,
    such that:~\begin{enumerate}
  \item\label{it1:tree-decomposition} for each edge $e \in
    \edgeof{\graph}$ there exists a node $n \in \vertof{\tree}$, such
    that $\edgerelof{\graph}(e)_i \in \beta(n)$, for all $1 \le i \le
    \arityof{\labof{\graph}(e)}$,
  \item\label{it2:tree-decomposition} for each vertex $v \in
    \vertof{\graph}$, the set $\bagof{v}{\tree} \isdef \set{n \in
      \vertof{\tree} \mid v \in \beta(n)}$ is nonempty and connected
    in $\tree$.
  \item\label{it3:tree-decomposition} the bag of the root $\aroot$ of $\tree$ contains all sources of $\graph$, i.e., we have $\sourceof{\graph}(s) \in \beta(\sourceof{\tree}(\aroot))$ for all sources $s \in \slabs$.
  \end{enumerate}
%  $\sourceof{\graph} : \slabs \rightarrow \vertof{\graph}$ is a
%    \emph{one-to-one} mapping that designates the \emph{sources} of
%    $\graph$. The vertex $\sourceof{\graph}(s)$ is called the
%    $s$-source of $\graph$. Because $\sourceof{\graph}$ is injective,
%    a vertex cannot be both an $s$- and $s'$-source, for $s \neq
%    s'$. Vertices that are not sources are called \emph{internal}.
  The \emph{width} of the tree decomposition is $\width{\tree,\beta}
  \isdef \max\set{\cardof{\beta(n)} \mid n \in \vertof{\tree}}-1$ and
  the \emph{tree-width} of $\graph$ is $\twd{\graph} \isdef \min
  \set{\width{\tree,\beta} \mid (\tree,\beta) \text { is a tree
      decomposition of } \graph}$. The tree-width of a graph is the
  tree-width of any concrete graph from the isomorphism equivalence
  class (isomorphic concrete graphs have the same tree-width). We
  denote by $\graphsof{\le k}$ the set of graphs $\graph$ such that
  $\twd{\graph} \le k$.
\end{defi}

At this point, we must stress the importance of considering unranked
sets of trees, when reasoning about the tree decompositions of a set
of graphs. In particular, the tree decompositions extracted by the
definable transduction whose existence is stated in~\cite[Theorem
  2.4]{10.1145/2933575.2934508} have no bound on the number of
children of a node. Since our results crucially depend on the
existence of such a transduction, we must consider the unranked sets
of trees, obtained from these tree decompositions, in their full
generality.

We assume basic acquaintance with the notion of grid and the fact that
an $n\times n$ square grid has tree-width $n$
\cite{DBLP:journals/tcs/Bodlaender98}. To see the difference between
the sets $\graphsof{\slabs}$ and $\btwgraphsof{\slabs}$, note that a
$n\times n$ square grid with no sources belongs to $\graphsof{\slabs}$
but not to $\btwgraphsof{\slabs}$, for any
$\slabs\finsubseteq\sourcelabels$ such that $\cardof{\slabs} < n+1$.
The following result shows the fundamental difference between graphs
and \emph{term-generated} graphs with sources from a given finite
set. We restate it here using our notation, for self-containment:

\begin{thmC}[{\cite[Theorem 2.83]{courcelle_engelfriet_2012}}]\label{lemma:cf-twb}
  Let $\slabs \finsubseteq \sourcelabels$ be a sort.
  For each graph $\graph \in \btwgraphsof{\slabs}$, we have $\twd{\graph} \leq
  \cardof{\slabs} - 1$.
\end{thmC}

The encoding of tree decompositions as relational structures uses the
relational signature $\relsigdecomp{\slabs} \isdef
\relsiggraph{\slabs} \cup \set{\node,\parent,\bag}$, where $\node$ is
a unary relation symbol and $\parent$, $\bag$ are binary relation
symbols, respectively.
We encode triples $(\graph,\tree,\beta)$, where $G \in \graphsof{\slabs}$ is a concrete graph and $(\tree,\beta)$ is a tree decomposition of $\graph$, such that $\tree$ is a tree having one
binary edge label $\parent$\footnote{In addition to unary labels of
leafs, that are not important here.}.
We encode $\graph$ by a
structure $\gstruc{\graph} = (\univ_\graph,\struc_\graph)$ over the relational
signature $\relsiggraph{\slabs}$. The tree decomposition is encoded by
an extended structure $\gstruc{\graph,\tree,\beta} \isdef (\univ_\graph \cup
\vertof{\tree},\struc'_\graph)$, where $\struc'_\graph$ agrees with $\struc_\graph$ over
$\relsiggraph{\slabs}$ and the unary and binary relation symbols
$\node$, $\parent$ and $\bag$ are interpreted as follows:
\begin{align*}
  \struc'(\node) \isdef & ~\vertof{\tree} \\
  \struc'(\bag) \isdef & ~\set{(v,n) \in \vertof{\graph} \times \vertof{\tree} \mid v \in \beta(n)} \\
  \struc'(\parent) \isdef & ~\set{(n,m) \in \vertof{\tree} \times \vertof{\tree} \mid
    \exists e \in \edgeof{\tree} ~.~ \edgerelof{\tree}(e)=(n,m)}
\end{align*}
Note that using the edge encoding to represent trees is without loss
of generality, because each edge of a tree with binary edges is
represented by its unique target node.

\section{Context-Free Sets of Graphs}
\label{sec:context-free-graphs}

As one expects, a \emph{context-free set of graphs} is a component of
the least solution of a grammar written using the operations from the
signature $\hrsignature$. A prominent member of this class is the set
of graphs that are values of $\hrsignature$-terms having finitely many
sorts:

\begin{prop}\label{prop:btw-graphs-are-context-free}
  For each sort $\slabs\finsubseteq\sourcelabels$, the set of graphs
  $\btwgraphsof{\slabs}$ is context-free.
\end{prop}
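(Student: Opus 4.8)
The plan is to exhibit, for the fixed finite sort $\slabs$, an explicit \hr-grammar $\grammar$ with a single nonterminal $U$ whose least solution is exactly $\btwgraphsof{\slabs}$. The starting observation is that, since $\slabs$ is finite and the alphabet $\alphabet$ is finite, the signature $\hrsignature^\slabs$ is finite: it has finitely many constants $\emptygraph_{\slabs'}$ (one per $\slabs' \subseteq \slabs$) and $\sgraph{a}_{(s_1,\ldots,s_{\arityof{a}})}$ (one per $a \in \alphabet$ and tuple of labels in $\slabs$), finitely many unary symbols $\restrict{\slabs'}$ and $\rename{\alpha}$, and the single binary symbol $\pop$. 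Hence one may take one production per symbol of $\hrsignature^\slabs$ and still obtain a finite grammar, as required by Definition~\ref{def:context-free}.

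Concretely, I would let $\grammar$ consist of the productions
\[
U \rightarrow \emptygraph_{\slabs'}, \quad
U \rightarrow \sgraph{a}_{(s_1,\ldots,s_{\arityof{a}})}, \quad
U \rightarrow \restrict{\slabs'}(U), \quad
U \rightarrow \rename{\alpha}(U), \quad
U \rightarrow U \pop U,
\]
ranging over all $\slabs' \subseteq \slabs$, all $a \in \alphabet$ with $s_1,\ldots,s_{\arityof{a}} \in \slabs$, and all $\slabs$-permutations $\alpha$. Here it is essential that a solution maps $U$ to an arbitrary subset of the whole universe $\graphsof{\slabs}$ of $\algebra^\slabs$, so that the single component $\langof{U}{\grammar}$ may contain graphs of every sort $\slabs' \subseteq \slabs$ simultaneously; since each \hr-operation is a total polymorphic map, every production expresses either that $\mathcal{S}(U)$ contains a constant value or that it is closed under one operation.

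For correctness I would prove the two inclusions separately. For $\langof{U}{\grammar} \subseteq \btwgraphsof{\slabs}$ it suffices to observe that the assignment $\mathcal{S}(U) \isdef \btwgraphsof{\slabs}$ is itself a solution of $\grammar$: it contains the value of every constant of $\hrsignature^\slabs$, and it is closed under $\restrict{\slabs'}^\algebra$, $\rename{\alpha}^\algebra$ and $\pop^\algebra$, because applying an operation of $\hrsignature^\slabs$ to term-generated graphs yields again a term-generated graph. As $\langof{U}{\grammar}$ is the least solution, it is contained in this one. For the reverse inclusion I would argue by induction on the structure of a ground $\hrsignature^\slabs$-term $t$ that its value $t^\algebra$ belongs to $\langof{U}{\grammar}$: the base case uses the constant productions, and the inductive step applies the production matching the outermost operation of $t$ to the values of its immediate subterms, which lie in $\langof{U}{\grammar}$ by the induction hypothesis. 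Since every element of $\btwgraphsof{\slabs}$ is by definition the value of such a term, this yields $\btwgraphsof{\slabs} \subseteq \langof{U}{\grammar}$, and hence equality.

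The construction is essentially routine; the only points that need care are (i) confirming that $\hrsignature^\slabs$ is genuinely finite, which is where finiteness of $\alphabet$ enters, and (ii) the observation that a single, sort-agnostic nonterminal already suffices to generate the \emph{multi-sorted} set $\btwgraphsof{\slabs}$, which hinges on solutions ranging over $\pow{\graphsof{\slabs}}$ rather than over a single sort. I do not anticipate a genuine obstacle, since $\btwgraphsof{\slabs}$ is by definition the closure of the constants of $\hrsignature^\slabs$ under its operations, which is exactly what the least solution of the grammar computes.
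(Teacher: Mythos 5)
Your proposal is correct and takes essentially the same approach as the paper: the paper's proof exhibits exactly this grammar (single nonterminal, one rule per constant of $\hrsignature^\slabs$, plus rules for $\restrict{\slabs'}$, $\rename{\alpha}$ and $\pop$) and establishes the same two inclusions, merely stating them more tersely where you spell out the least-solution and term-induction arguments.
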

\begin{proof}
  We fix a sort $\slabs\finsubseteq\sourcelabels$.  Let $\grammar$ be
  the grammar having a single non-terminal $X$ and the rules:
  \begin{align*}
    X \rightarrow & ~\emptygraph_{\slabs'} \text{, for each } \slabs' \subseteq \slabs \\
    X \rightarrow & ~\sgraph{a}_{(s_1, \ldots, s_{\arityof{a}})} \text{, for each } a\in \alphabet \text{ and } s_1,\ldots,s_{\arityof{a}} \in \slabs \\
    X \rightarrow & ~\restrict{\slabs'}(X) \text{, for each } \slabs' \subseteq \slabs \\
    X \rightarrow & ~\rename{\alpha}(X) \text{, for each $\slabs$-permutation $\alpha$} \\
    X \rightarrow & ~X \pop X
  \end{align*}
  As all rules of $\grammar$ only use
  $\hrsignature^\slabs$-operations, we clearly have
  $\langof{X}{\grammar} \subseteq \btwgraphsof{\slabs}$. On the other
  hand, every ground term of $\btwgraphsof{\slabs}$ can be constructed
  by the rules of this grammar.  Hence, $\langof{X}{\grammar}
  \supseteq \btwgraphsof{\slabs}$.
\end{proof}

In the rest of this section, we recall results relating a graph
produced by a grammar to the trees that describe the partial order in
which the grammar rules are applied in order to produce that particular
graph. These notions mirror standard concepts used in word grammars,
such as \emph{parse trees} and \emph{yields} (i.e., words obtained by
reading the symbols of the leaves of a parse tree in the lexicographic
order).

\subsection{Parse Trees}
\label{sec:parse-trees}

We define parse trees by fixing an alphabet $\alphabetParse$ of unary
and binary edge labels and the functional signature $\parsesignature$
to be the following sets:
\begin{align*}
  \alphabetParse \isdef & ~\set{\emptygraphSymb_\slabs}_{\slabs \finsubseteq \sourcelabels}
  \cup \set{\sgraphSymb{a}_{(s_1, \ldots, s_{\arityof{a}})}}_{\hspace*{-2mm}\begin{array}{l}
      \scriptstyle{a\in\alphabet} \\[-2mm]
      \scriptstyle{s_1,\ldots,s_{\arityof{a}} \in \sourcelabels}
  \end{array}} \hspace*{-8mm}
  \cup \set{\restrictSymb{\slabs}}_{\slabs \finsubseteq \sourcelabels}
  \cup \set{\renameSymb{\alpha}}_{% \hspace*{-2mm} \begin{array}{l}
      \scriptstyle{\alpha \text{ finite permutation}} % \\[-2mm]
      % \scriptstyle{\text{permutation}}
      % \end{array}
  }
  \\[-2mm]
  \parsesignature \isdef & ~\treesignature(\alphabetParse)
  %% \set{\emptygraph_\slabs}_{\slabs \finsubseteq \sourcelabels}
  %% \cup \set{\sgraph{a}_{(s_1, \ldots, s_{\arityof{a}})}}_{\hspace*{-2mm}\begin{array}{l}
  %%     \scriptstyle{a\in\alphabet} \\[-2mm]
  %%     \scriptstyle{s_1,\ldots,s_{\arityof{a}} \in \sourcelabels}
  %% \end{array}} \hspace*{-8mm}
  %% \cup \set{\extend{\restrictSymb{\slabs}}}_{\slabs \finsubseteq \sourcelabels}
  %% \cup \set{\extend{\renameSymb{\alpha}}}_{\hspace*{-2mm}\begin{array}{l}
  %%     \scriptstyle{\alpha~\text{finite}} \\[-2mm]
  %%     \scriptstyle{\text{permutation}}
  %% \end{array}} \hspace*{-8mm}
  %% \cup \set{\pop}
\end{align*}
The edge labels $\emptygraphSymb_\slabs$ and $\sgraphSymb{a}_{(s_1,
  \ldots, s_{\arityof{a}})}$ are unary, whereas
$\restrictSymb{\slabs}$ and $\renameSymb{\alpha}$ are binary. We
further assume that $\alphabetParse$ is disjoint from the alphabet
$\alphabet$ of edge labels for graphs\footnote{The initial assumption
$\alphabetTwo \subseteq \alphabet$ does not work for $\alphabetParse$
because then $\alphabet$ would need to contain the symbols
$\sgraphSymb{a}_{(s_1, \ldots, s_{\arityof{a}})}$, for all $a \in
\alphabet$, etc. To avoid this confusion, we chose to keep
$\alphabetParse$ and $\alphabet$ disjoint.}.

Note that $\alphabetParse$ and $\parsesignature$ are infinite, because
$\sourcelabels$ is infinite. In cases where finite restrictions of
these sets are necessary, we denote by $\alphabetParseOf{\slabs}$
(resp.\  $\parsesignatureof{\slabs}$) the finite restriction of
$\alphabetParse$ (resp.\  $\parsesignature$) to symbols that use only
source labels from the finite set $\slabs \finsubseteq \sourcelabels$.

The \emph{algebra of parse trees} $\repalgebra$ is a single-sorted
$\hrsignature$-algebra whose universe is the set of trees
$\trees(\alphabetParse)$, i.e., with edge labels from
$\alphabetParse$, and the function symbols from $\hrsignature$ are
interpreted over trees instead of graphs, as follows:
\begin{align*}
  \restrict{\slabs}^\repalgebra(\tree) \isdef & ~\extend{\restrictSymb{\slabs}}^{\treealgebra(\alphabetParse)}(\tree) \\[1mm]
  \rename{\alpha}^\repalgebra(\tree) \isdef & ~\extend{\renameSymb{\alpha}}^{\treealgebra(\alphabetParse)}(\tree)
  \text{, for all } \tree\in\trees(\alphabetParse)
\end{align*}
The rest of the function symbols from $\hrsignature$ are interpreted
in $\repalgebra$ the same as in $\algebra$.

\begin{exa}
\autoref{fig:val} shows a $\hrsignature$-term (a) and its parse tree
(b). Intuitively, a parse tree is obtained from a $\hrsignature$-term
by (i) moving the node labels to the incident edges and (ii)
collapsing all of the adjacent $\pop$-nodes in the term to one
node. 
\end{exa}

\begin{figure}[t!]
  \centerline{\hspace{20pt}\scalebox{1.1}{\input{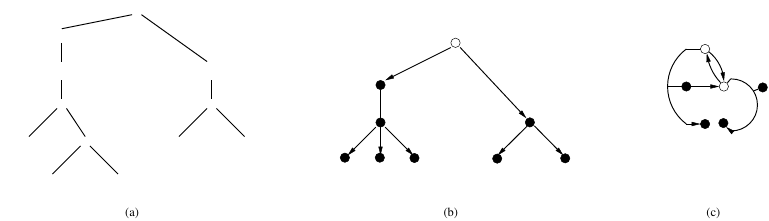_t}}}
  \caption{A $\hrsignature$-term (a), its corresponding parse tree (b)
    and the canonical evaluation (c). Sources are denoted by hollow
    and internal vertices by solid circles.}
  \label{fig:val}
\end{figure}

Since the tree algebra $\treealgebra(\alphabetParse)$ is
term-generated, each tree is the value of a ground $\hrsignature$-term
obtained from a ground $\parsesignature$-term, by replacing each
constant $\underline{\sgraph{c}}$ with $\sgraph{c}$ and each unary
function symbol $\extend{\underline{b}}$ with $b$. Then,
recognizability is equivalent in the tree and the parse tree algebras:

\begin{lem}\label{lemma:red-rec}
  A set $\mathcal{K} \subseteq \trees(\alphabetParse)$ is recognizable
  in $\treealgebra(\alphabetParse)$ iff it is recognizable in
  $\repalgebra$.
\end{lem}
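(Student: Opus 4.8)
The plan is to observe that $\treealgebra(\alphabetParse)$ and $\repalgebra$ are, up to a bijective renaming of function symbols, \emph{the same algebra}, and in particular that each is a derived algebra of the other; the equivalence then follows by applying Lemma~\ref{lemma:derived-rec} in both directions. Both algebras are single-sorted with the common universe $\trees(\alphabetParse)$, so the only thing to check is that the interpretations of $\hrsignature$ in $\repalgebra$ and of $\treesignature(\alphabetParse)$ in $\treealgebra(\alphabetParse)$ range over the same set of functions on $\trees(\alphabetParse)$. Since recognizability depends only on the universe and the operations \emph{as functions}, and not on the names of the function symbols, it transfers immediately between the two once this match is established.

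First I would read off the signature correspondence directly from the definition of $\repalgebra$. The binary symbol $\pop$ is common to both signatures and is interpreted as tree composition in either case: in $\repalgebra$ it is $\pop^{\algebra}$, whose restriction to two $\set{\aroot}$-sorted trees fuses their roots and hence again yields an element of $\trees(\alphabetParse)$, coinciding with $\pop^{\treealgebra(\alphabetParse)}$. The unary symbols $\restrict{\slabs}$ and $\rename{\alpha}$ of $\repalgebra$ are, by definition, the operations $\extend{\restrictSymb{\slabs}}^{\treealgebra(\alphabetParse)}$ and $\extend{\renameSymb{\alpha}}^{\treealgebra(\alphabetParse)}$. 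Finally, the constants $\emptygraph_{\slabs}$ and $\sgraph{a}_{(s_1,\ldots,s_{\arityof{a}})}$ of $\repalgebra$ denote the single-node trees $\sgraph{\emptygraphSymb_{\slabs}}$ and $\sgraph{\sgraphSymb{a}_{(s_1,\ldots,s_{\arityof{a}})}}$, which are exactly the constants of $\treesignature(\alphabetParse)$; this is the ``un-underlining'' bijection between ground $\hrsignature$-terms and ground $\parsesignature$-terms recorded just before the statement. Thus every basic operation of $\repalgebra$ is a basic (one-symbol) operation of $\treealgebra(\alphabetParse)$, and symmetrically.

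It follows that $\repalgebra$ is a derived algebra of $\treealgebra(\alphabetParse)$ and that $\treealgebra(\alphabetParse)$ is a derived algebra of $\repalgebra$, each with the trivial derived terms consisting of single function symbols. Applying Lemma~\ref{lemma:derived-rec} to the first inclusion yields the ``$\Rightarrow$'' direction (recognizability in $\treealgebra(\alphabetParse)$ implies recognizability in $\repalgebra$), and applying it to the second yields ``$\Leftarrow$''. There is no genuinely hard step here; the only points I would take care to verify are that the interpretation of $\pop$ really does agree on trees (graph composition of two $\set{\aroot}$-sorted trees is again a tree and is computed as tree composition) and that the constants of $\repalgebra$ are the single-node trees rather than the source-only and single-edge graphs of $\algebra$---this being precisely what makes the common universe $\trees(\alphabetParse)$ well-defined and the two signatures match up.
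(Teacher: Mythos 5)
Your proof is correct, and it rests on the same key observation as the paper's own proof: under the evident arity-preserving bijection between $\hrsignature$ and $\parsesignature$ (un-underlining constants and unary symbols, fixing $\pop$), the algebras $\repalgebra$ and $\treealgebra(\alphabetParse)$ have the same universe $\trees(\alphabetParse)$ and literally the same operations; in particular, your reading of the constants of $\repalgebra$ as single-node trees, rather than as the source-only and single-edge graphs of $\algebra$, is the intended one. The only difference is the transfer mechanism. The paper transfers recognizers by hand: given a locally finite $\parsesignature$-algebra $\algof{B}$ and a homomorphism $h$ with $\mathcal{K}=h^{-1}(\mathcal{C})$, it re-reads $\algof{B}$ as an $\hrsignature$-algebra $\algof{D}$ via the same symbol bijection ($\sgraph{c}^{\algof{D}}\isdef\sgraph{c}^{\algof{B}}$, $b^{\algof{D}}\isdef\extend{\underline{b}}^{\algof{B}}$, $\pop^{\algof{D}}\isdef\pop^{\algof{B}}$), so that $h$ itself is a homomorphism from $\repalgebra$ to $\algof{D}$, and the converse direction is symmetric. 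You instead note that each algebra is a derived algebra of the other (all derived terms being single function symbols) and apply Lemma~\ref{lemma:derived-rec} twice. Unfolding the proof of Lemma~\ref{lemma:derived-rec} recovers exactly the paper's construction, so the two arguments coincide in substance; yours is more modular, and its only extra obligation is the point you explicitly flag: since a derived algebra's function symbols are officially terms over the base signature, $\repalgebra$ is a derived algebra of $\treealgebra(\alphabetParse)$ only up to a bijective renaming of function symbols, so one needs the (easy, but worth stating) fact that recognizability is invariant under such renamings — homomorphisms and congruences depend only on the operations, not on their names.
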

\begin{proof}
  ``$\Rightarrow$'' Assume that $\mathcal{K} \subseteq \trees(\alphabetParse)$ is
  recognizable in $\treealgebra(\alphabetParse)$ and let
  $\algof{B}=(\set{\universeOf{B}},\parsesignature^{\algof{B}})$ be a
  locally finite $\parsesignature$-algebra and $h : \trees(\alphabetParse) \rightarrow
  \universeOf{B}$ be a homomorphism between $\treealgebra(\alphabetParse)$ and
  $\algof{B}$, such that $\mathcal{K}=h^{-1}(\mathcal{C})$, for some
  set $\mathcal{C}\subseteq\universeOf{B}$. We consider the locally
  finite $\hrsignature$-algebra
  $\algof{D}=(\set{\universeOf{B}},\hrsignature^{\algof{D}})$,
  where: \begin{align*}
    \sgraph{c}^\algof{D} \isdef & ~\sgraph{c}^\algof{B} \text{, for all constants } \sgraph{c}\in\hrsignature \\
    b^\algof{D} \isdef & ~\extend{\underline{b}}^\algof{B} \text{, for all unary function symbols } b\in\hrsignature \\
    \pop^\algof{D} ~\isdef & ~\pop^\algof{B}
  \end{align*}
  Then, $h$ is a homomorphism between $\repalgebra$ and $\algof{D}$ and
  $\mathcal{K}=h^{-1}(\mathcal{C})$ witnesses the recognizability of
  $\mathcal{K}$ in $\repalgebra$. ``$\Leftarrow$'' This direction uses
  a symmetric reasoning.
\end{proof}

\subsection{Canonical Evaluation of Parse Trees}
\label{sec:canonical-evaluation}
Parse trees are representations of graphs. The function $\hval :
\trees(\alphabetParse) \rightarrow \graphs$ yields the graph
represented by a parse tree. $\hval$ is defined inductively on the
structure of the parse tree, by interpreting the edge labels as
operation in the graph algebra, in the obvious way. In particular, if
the root of the tree has two or more children, the graph is the
composition of the graphs obtained by the evaluation of the subtrees
rooted in the children. Note that, because the composition is
associative and commutative, the result of $\hval$ does not depend on
the order of the children in the parse tree. Henceforth, we refer to
$\hval$ as the \emph{canonical evaluation} function. For example,
\autoref{fig:val} (c) shows the result of the canonical evaluation of
the parse tree (b).

Formally, the function $\hval$ yielding the graph represented by a
parse tree $t^\repalgebra$, for a ground $\hrsignature$-term $t$, is
defined inductively on the structure of the underlying term as the
unique function that satisfies $\hval(t^\repalgebra) = t^\algebra$.
%% \begin{itemize}[label=$\triangleright$]
%%   %
%% \item if $t = \emptygraph_\slabs$ or $t = \sgraph{a}_{(s_1, \ldots,
%%   s_n)}$ then $\hval(t^\repalgebra) \isdef t^\algebra$,
%%   %
%% \item if $t = \restrict{\slabs}(u)$ then $\hval(t^\repalgebra) \isdef
%%   \restrict{\slabs}^\algebra(\hval(u^\repalgebra))$,
%%   %
%% \item if $t = \rename{\alpha}(u)$ then $\hval(t^\repalgebra) \isdef
%%   \rename{f}^\algebra(\hval(u^\repalgebra))$,
%%   %
%% \item if $t = t_1 \pop t_2$ then $\hval(t^\repalgebra) \isdef
%%   \hval(t_1^\repalgebra) \pop^\algebra \hval(t_2^\repalgebra)$.
%% \end{itemize}
To show that the function $\hval$ exists and is unique, we note that
the interpretation $\pop^\algebra$ of the composition operation is
associative and commutative, hence its value on a given parse tree
does not depend on the particular order of the subterms composed via
$\pop$ in the $\hrsignature$-term whose value that tree is. The
following proposition formalizes this statement, for trees in general:

\begin{prop}\label{prop:unique-representations}
Every tree $\tree$ of height $n\geq0$ is the value in
$\treealgebra(\alphabetTwo)$ of a $\treesignature(\alphabetTwo)$-term
of the form $t=\big(\pop_{i\in I} \sgraph{c}_i \big) \pop \big(
\pop_{j \in J} \extend{b_j}(t_j) \big)$, such that $I=\emptyset$ implies
$J\neq\emptyset$ and $t_j^{\treealgebra(\alphabetTwo)}$ are trees of
height strictly less than $n$, for all $j \in J$. Moreover, all
$\treesignature(\alphabetTwo)$-terms that represent the same tree are
equal up to the commutativity and associativity of composition.
\end{prop}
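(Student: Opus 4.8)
The plan is to prove both claims simultaneously by induction on the height $n$ of $\tree$, analysing the edges incident to the root $r = \sourceof{\tree}(\aroot)$. First I would establish the governing semantic fact about the operations of $\treealgebra(\alphabetTwo)$. Unfolding the definition $\extend{b}(x) = \rename{\aroot \leftrightarrow \temp}(\restrict{\{\temp\}}(\sgraph{b}_{(\temp,\aroot)} \pop x))$ shows that $\extend{b}(x)$ produces a fresh root carrying a single $b$-labelled edge to the (now internal) root of $x$, while $\sgraph{c}$ produces a single root carrying one $c$-labelled unary edge. Since the only source label occurring in a tree is $\aroot$, and $\aroot$ labels exactly the root, the composition $\pop$ fuses precisely the roots of its two arguments and leaves every other vertex — in particular every child created by $\extend{b}$ — distinct. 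Consequently, evaluating $(\pop_{i\in I}\sgraph{c}_i) \pop (\pop_{j\in J}\extend{b_j}(t_j))$ yields a graph whose root carries exactly the unary edges $c_i$ for $i \in I$ and has exactly the children given by the roots of the $t_j^{\treealgebra(\alphabetTwo)}$, attached by $b_j$-labelled edges for $j \in J$, with the subtree below the $j$-th child being $t_j^{\treealgebra(\alphabetTwo)}$.

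For existence I would read this decomposition directly off the graph $\tree$: let $I$ index the unary edges incident to $r$, with labels $c_i$; let $J$ index the children $v_j$ of $r$, with $b_j$ the label of the edge from $r$ to $v_j$; and let $t_j$ be a term representing the subtree of $\tree$ rooted at $v_j$, which has height strictly less than $n$ and hence admits a canonical term by the induction hypothesis. The semantic fact above then gives $\tree = \big((\pop_{i\in I}\sgraph{c}_i) \pop (\pop_{j\in J}\extend{b_j}(t_j))\big)^{\treealgebra(\alphabetTwo)}$. The side condition $I=\emptyset \Rightarrow J\neq\emptyset$ holds automatically, because edges are never removed once created and the root always receives an incident edge when it is generated: in the base case $n=0$ the single node is built purely from constants $\sgraph{c}$ (since $\extend{b}$ strictly raises the height), so it carries at least one unary edge and $I\neq\emptyset$; for $n\geq 1$ the root has a child and $J\neq\emptyset$.

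For uniqueness I would work with the congruence generated by associativity and commutativity of $\pop$. Modulo this congruence every ground $\treesignature(\alphabetTwo)$-term flattens to a multiset of $\pop$-factors, each factor being either a constant $\sgraph{c}$ or of the form $\extend{b}(s')$. By the semantic fact, for any term $s$ with $s^{\treealgebra(\alphabetTwo)} = \tree$ the $\sgraph{c}$-factors are in label-preserving bijection with the unary edges at $r$, and the $\extend{b}(s')$-factors are in bijection with the children of $r$, each $\extend{b}(s')$ being matched to a child reached by a $b$-edge whose subtree equals $s'^{\treealgebra(\alphabetTwo)}$. Both the multiset of unary-edge labels and the multiset of (edge-label, child-subtree) pairs are determined by $\tree$ alone, and applying the induction hypothesis to each child-subtree shows that each $s'$ is itself unique up to the AC-congruence. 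Hence the whole multiset of factors is determined up to this congruence, which is exactly the assertion that any two terms representing $\tree$ are equal up to associativity and commutativity of composition.

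The main obstacle is the governing semantic fact of the first paragraph: one must verify carefully, from the definitions of $\restrict{}$, $\rename{}$ and $\pop$, that $\pop$ never fuses two distinct children and never merges two unary edges, so that the correspondence between $\pop$-factors and root-incident edges is an exact bijection, respecting the multiplicities that arise when several children carry the same label and isomorphic subtrees. Tracking which vertices remain $\aroot$-sources through the unfolding of $\extend{b}$ — only the roots do — is what makes this bijection clean; once it is in place, both existence and uniqueness reduce to routine inductions on the height.
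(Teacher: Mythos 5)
Your proposal is correct and follows essentially the same route as the paper's proof: induction on the height of $\tree$, decomposing the tree at its root into unary edges (giving the constant factors $\sgraph{c}_i$) and children (giving the factors $\extend{b_j}(t_j)$), applying the inductive hypothesis to the subtrees, and concluding uniqueness up to permutation of the $\pop$-factors. The only difference is one of explicitness — you spell out the semantic unfolding of $\extend{b}$ and the AC-flattening of terms into multisets of factors, which the paper treats as evident — but this is elaboration of the same argument, not a different one.
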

\begin{proof}
  By induction on the height $n\geq0$ of $\tree$. For the base case
  $n=0$, the tree consists of a single root node attached to one or
  more edges labeled with unary symbols $c_i \in \alphabetTwo$, for $i
  \in I$. Then there exists a term $\pop_{i\in I} \sgraph{c}_i$ that
  represents $\tree$ and this term is unique modulo the commutativity
  and associativity of composition. For the induction step $n\geq1$,
  the root of $\tree$ has children $\tree_j$ of height strictly less
  than $n$, for $j \in J$, where $J$ is a nonempty set. By the
  inductive hypothesis, there exist terms $t_j$ such that
  $t_j^{\treealgebra(\alphabetTwo)}=\tree_j$, for all $j \in
  J$. Moreover, these terms are unique modulo commutativity and
  associativity of the composition. Let $c_i \in \alphabetTwo$, $i \in
  I$ be the unary labels of the root of $\tree$ and $b_j \in
  \alphabetTwo$, $j \in J$ be the binary labels of the edges to which
  the root of $\tree$ is attached. We consider the term $t \isdef
  \left(\pop_{i\in I} \sgraph{c}_i \right) \pop \big( \pop_{j \in J}
  \extend{b_j}(t_j) \big)$. Then,
  $t^{\treealgebra(\alphabetTwo)}=\tree$ and, moreover, any other term
  $u$ such that ${u}^{\treealgebra(\alphabetTwo)}=\tree$ differs from
  $t$ by a permutation of $c_i$ and $t_j$.
\end{proof}

We prove below that recognizability of a set of graphs in a
finite-sorted subalgebra $\algebra^\slabs$ is preserved under inverse
canonical evaluations. Note that this is not a consequence of the
standard closure of recognizable sets under inverse homomorphisms,
because $\hval$ is not a homomorphism between $\repalgebra$ and
$\algebra$. In fact, no such homomorphism exists because $\repalgebra$
has one sort $\set{\aroot}$, whereas $\algebra$ has infinitely many
sorts, i.e., all the finite subsets of $\sourcelabels$.

\begin{lem}\label{lemma:finite-sort-inverse-homomorphism}
  Let $\slabs \finsubseteq \sourcelabels$ be a sort. For each set
  $\mathcal{L}$ of graphs, $\hval^{-1}(\mathcal{L})$ is recognizable
  in $\repalgebra$ if $\mathcal{L}$ is recognizable in
  $\algebra^\slabs$.
\end{lem}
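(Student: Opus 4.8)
The plan is to establish recognizability directly via Definition~\ref{def:rec}: starting from a locally finite algebra $\algof{B}$ and a homomorphism $h$ witnessing recognizability of $\mathcal{L}$ in $\algebra^\slabs$, I would construct a single-sorted locally finite $\hrsignature$-algebra $\algof{D}$, a homomorphism $g\colon\repalgebra\to\algof{D}$, and a set $\mathcal{C}\subseteq\universeOf{D}$ with $\hval^{-1}(\mathcal{L})=g^{-1}(\mathcal{C})$. Note first that, since $\algebra^\slabs$ has only the finitely many sorts $\slabs'\subseteq\slabs$ and $\algof{B}$ is locally finite, its universe $\universeOf{B}$ is \emph{finite}. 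The guiding observation is the one stressed before the lemma: although $\hval$ fails to be a homomorphism only because its domain $\repalgebra$ is single-sorted while $\algebra$ is not, it does commute with every symbol of $\hrsignature$ (by its defining equation $\hval(t^\repalgebra)=t^\algebra$) and it is surjective onto $\graphs$. Consequently, the syntactic congruence (Definition~\ref{def:syntactic-congruence}) of $\hval^{-1}(\mathcal{L})$ in $\repalgebra$ is exactly the pullback along $\hval$ of the coarsest congruence on $\graphs$ saturating $\mathcal{L}$ when $\graphs$ is read as a single-sorted $\hrsignature$-algebra. Thus the whole problem reduces to a statement purely about graphs: that $\mathcal{L}$ has finitely many \emph{residuals} under arbitrary first-order $\hrsignature$-term contexts.

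To bound these residuals I would abstract a graph $G$ of sort $\tau$ by its behaviour relative to $\slabs$: for each $\slabs'\subseteq\slabs$, each finite permutation $\alpha$, and each $K\in\graphsof{\slabs}$, record the element $h(\restrict{\slabs'}(\rename{\alpha}(G)\pop K))$ whenever the argument has sort $\subseteq\slabs$, and a failure token $\bot$ otherwise. Each recorded value lies in the finite set $\universeOf{B}\cup\set{\bot}$, so the abstraction is a map into a fixed finite codomain; the element of $\algof{D}$ assigned to $G$ is its class under ``same recorded behaviour'', and I would set $g(t)$ to be the class of $\hval(t)$. The operations of $\algof{D}$ are then forced by requiring $g$ to commute with $\hrsignature$: $\rename{\alpha}^{\algof{D}}$ and $\restrict{\slabs'}^{\algof{D}}$ re-index the sort and the behaviour, while $\pop^{\algof{D}}$ fuses the external labels shared by the two graphs and consults $\pop^{\algof{B}}$ on the $\slabs$-parts. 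Membership is read off by specialising to $\alpha=\mathrm{id}$, $K=\emptygraph_\slabs$ and $\slabs'=\slabs$, so that $\mathcal{C}$ is the set of classes of graphs with sort $\subseteq\slabs$ whose $h$-value lies in the accepting set. Checking that $g$ is a homomorphism and that $g^{-1}(\mathcal{C})=\hval^{-1}(\mathcal{L})$ is then a routine bottom-up induction on parse trees, using Proposition~\ref{prop:unique-representations} to reduce to the base constants, the unary appends $\extend{\restrictSymb{\slabs'}},\extend{\renameSymb{\alpha}}$, and composition.

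The hard part — and exactly the reason this is not an instance of the standard closure of recognizable sets under inverse homomorphisms — is proving that $\algof{D}$ is \emph{finite}. A parse tree may use unboundedly many source labels outside $\slabs$, so a priori both the recorded behaviour and the bookkeeping that makes $\pop^{\algof{D}}$ well defined on shared external labels could demand unbounded information. I expect the crux to be showing that the finiteness of $\algof{B}$ collapses this: source labels outside $\slabs$ can only act as temporary fusion points that must eventually be restricted into $\slabs$, so any two graphs inducing the same $\algof{B}$-value on every $\slabs$-closure are interchangeable under all further operations. Making this precise amounts to a renaming-invariance and pumping argument driven by the finite index of $h$, which bounds the relevant external interface to finitely many types and simultaneously shows that $\pop^{\algof{D}}$ depends only on the two abstract classes. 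Once finiteness is secured, $g$ witnesses recognizability of $\hval^{-1}(\mathcal{L})$ in $\repalgebra$, which by Lemma~\ref{lemma:red-rec} is equivalent to recognizability in $\treealgebra(\alphabetParse)$.
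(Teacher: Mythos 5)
Your proposal takes a completely different route from the paper, and the route cannot be completed: the step you defer as ``the hard part'' is not merely hard, its key claim is false.

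For comparison, the paper's proof is a two-line composition argument. Since $\algebra^{\slabs}$ has only the finitely many sorts $\slabs'\subseteq\slabs$ and $\algof{B}$ is locally finite, the union $\universeOf{B}$ of the universes $\universeOf{B}^{\slabs'}$ is finite (an observation you make but do not exploit); the paper collapses $\algof{B}$ into the single-sorted algebra $\algof{B}'$ on this finite universe and checks that $h\circ\hval$ is a homomorphism from $\repalgebra$ to $\algof{B}'$ with $\hval^{-1}(\mathcal{L})=(h\circ\hval)^{-1}(\mathcal{C})$. No syntactic congruence, no behavioural abstraction and no new recognizer are constructed: the recognizer of $\mathcal{L}$ itself, with its sorts forgotten, does the job. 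Implicitly, this composition is evaluated only on parse trees whose value lies in $\graphsof{\slabs}$, i.e., on trees over the finite alphabet $\alphabetParseOf{\slabs}$, which is exactly the setting in which the lemma is later applied (Lemma~\ref{lemma:parsable-closure}).

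The gap in your argument is the claim that ``any two graphs inducing the same $\algof{B}$-value on every $\slabs$-closure are interchangeable under all further operations'', which is what would make $\pop^{\algof{D}}$ well defined on your abstraction classes and $g$ a homomorphism. This is false. Take $\slabs=\emptyset$ and let $\mathcal{L}\subseteq\graphsof{\emptyset}$ be the set of sourceless graphs all of whose connected components are single binary $a$-labelled edges; $\mathcal{L}$ is recognizable in $\algebra^{\emptyset}$, because the two-block partition $\set{\mathcal{L},\graphsof{\emptyset}\setminus\mathcal{L}}$ is a locally finite congruence for disjoint union (the only nontrivial operation of $\hrsignature^{\emptyset}$). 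Consider $G_n\isdef\emptygraph^{\algebra}_{\set{s_1,\ldots,s_n}}$, the graph of $n$ isolated source vertices. Your abstraction identifies all $G_n$ with $n\geq 1$: every $K\in\graphsof{\emptyset}$ is sourceless, so $\rename{\alpha}(G_n)\pop K$ fuses nothing, and after $\restrict{\emptyset}$ the result contains an isolated vertex, hence is rejected by $h$, for every $\alpha$ and $K$. Yet the $G_n$ are pairwise non-interchangeable: letting $H_n$ be the graph with sources $s_1,\ldots,s_n$, each attached by an $a$-edge to a fresh internal vertex, one has $\restrict{\emptyset}(G_m\pop H_n)\in\mathcal{L}$ iff $m\leq n$. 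So composition is not well defined on your classes, and no renaming-invariance or pumping argument can repair this, because the reduction in your first paragraph (sound in itself) targets a false statement: the sort-free residual equivalence of $\mathcal{L}$ over \emph{all} graphs has infinite index, since contexts may fuse and restrict on source labels outside $\slabs$, about which recognizability in $\algebra^{\slabs}$ carries no information whatsoever (already for $\mathcal{L}=\graphsof{\emptyset}$ the contexts $\restrict{\set{s}}$ recover the sort of their argument, yielding infinitely many residuals). This is precisely why the paper never passes through the syntactic congruence of $\hval^{-1}(\mathcal{L})$ over arbitrary parse trees in $\trees(\alphabetParse)$, but instead composes $h$ with $\hval$ directly, thereby confining the whole argument to trees that evaluate into $\graphsof{\slabs}$; your insistence on handling all parse trees, including those using source labels outside $\slabs$, puts the goal out of reach.
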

\begin{proof}
  Let $\algof{B} = (\set{\universeOf{B}^{\slabs'}}_{\slabs' \subseteq
    \slabs}, \set{f^\algof{B}}_{f \in \hrsignature})$ be a locally
  finite algebra and $h : \graphsof{\slabs} \rightarrow
  \universeOf{B}$ be a homomorphism between $\algebra^\slabs$ and
  $\algof{B}$, such that $\mathcal{L}=h^{-1}(\mathcal{C})$, for a set
  $\mathcal{C} \subseteq \universeOf{B}$. Let $\algof{B}' =
  (\set{\universeOf{B}},\set{f^\algof{B}}_{f \in \hrsignature})$ be
  the algebra with a single sort $\set{\aroot}$ and finite
  $\set{\aroot}$-universe consisting of the union of all
  ${\universeOf{B}}^{\slabs'}$, for $\slabs' \subseteq \slabs$.  Then,
  $h\circ\hval$ is a homomorphism between $\repalgebra$ and
  $\algof{B}'$ and, moreover, $\hval^{-1}(\mathcal{L}) = (h \circ
  \hval)^{-1}(\mathcal{C})$.
\end{proof}

\subsection{Definable Transductions Define Context-Free Sets of Graphs}
\label{sec:context-free-transduction}
A classical result is that each context-free word language is obtained
from a recognizable ranked set of trees by reading the word on the
frontier of each tree from left to right. This result has been
generalized to context-free sets of graphs, by noticing that the
\emph{yield} that produces the word corresponding to the frontier of a
tree is a definable transduction. The result of Courcelle and
Engelfriet~\cite[Theorems 1.10 and 2.1]{CourcelleEngelfriet95} is that
\hr\ context-free sets of graphs are the images of recognizable sets
of ground terms via definable transductions. We extend this result
further, by removing the restriction on having an input set of ground
terms. In particular, having unranked sets of trees is crucial to
obtain the final characterization result in
Theorem~\ref{thm:cf-def-v2}. We also restate some of the known results
in our notation, in which parse trees are trees over a finite alphabet
$\alphabetParseOf{\slabs}$ of operations, that use only source labels
from a given set $\slabs \finsubseteq \sourcelabels$.

We start by restating a known result, namely that
$\proj{\hval}{\trees(\alphabetParseOf{\slabs})}$ is a definable
transduction between parse trees and graphs, for every sort
$\slabs$. Note that the restriction to the alphabet
$\alphabetParseOf{\slabs}$ of edge labels is necessary to ensure the
finiteness of the formulae defining the transduction. Given a tree
$\tree\in\trees(\alphabetParseOf{\slabs})$, a source label $s$ is said
to be \emph{present} in $\tree$ if $\hval(\tree)$ has an
$s$-source. The following lemma shows that the presence of a source
label in a tree is a definable property:

\begin{lem}\label{lemma:present}
  For each sort $\slabs \finsubseteq \sourcelabels$ and each
  $s\in\sourcelabels$, one can build an \mso\ sentence $\phi$ such
  that $\gstruc{\tree} \models \phi$ iff $s$ is present in $\tree$,
  for each tree $\tree\in\trees(\alphabetParseOf{\slabs})$.
\end{lem}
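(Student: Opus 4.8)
The plan is to compute the set of \emph{present} source labels of $\hval(\tree)$ by a single bottom-up pass over the parse tree, and to express the existence of this (necessarily unique) computation in \mso{} by guessing, for each node, which subset of $\slabs$ is present at it. First I would record that, since every symbol of $\alphabetParseOf{\slabs}$ mentions only source labels from $\slabs$, the set of present labels of the value of any subtree is a subset of $\slabs$; hence if $s\notin\slabs$ the label $s$ is never present and $\phi\isdef\mathtt{false}$ suffices, so assume $s\in\slabs$. Writing $\mathrm{pr}(x)\subseteq\slabs$ for the present set of the graph obtained by evaluating the subtree rooted at node $x$, the definition of $\hval$ gives the recurrence that $\mathrm{pr}(x)$ is the union of (i) the labels contributed locally by the unary edges incident to $x$ — namely $\slabs'$ for each incident $\emptygraphSymb_{\slabs'}$ and $\set{s_1,\ldots,s_{\arityof{a}}}$ for each incident $\sgraphSymb{a}_{(s_1,\ldots,s_{\arityof{a}})}$ — and (ii) for each child $y$ reached by a binary edge with label $op$, the transformed set $op(\mathrm{pr}(y))$, where $\restrictSymb{\slabs'}$ turns $P''$ into $P''\cap\slabs'$ and $\renameSymb{\alpha}$ turns it into $\alpha^{-1}(P'')$. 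Crucially this recurrence uses only union, intersection with a fixed set, and a bijective relabelling, and never the multiplicity of the children.

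Next I would build $\phi$. As $\slabs$ is finite there are finitely many candidate values $P\subseteq\slabs$, so I introduce one second-order variable $Y_P$ per subset $P\subseteq\slabs$, intended to collect exactly the nodes $x$ with $\mathrm{pr}(x)=P$, and take $\phi$ to be the existential closure over the tuple $\set{Y_P}_{P\subseteq\slabs}$ of the conjunction of three \mso{} conditions: that the $Y_P$ partition the set of tree nodes (the nodes form an \mso-definable set, being the vertex-elements of the incidence encoding, i.e.\ the elements incident to some edge, which every node is by Proposition~\ref{prop:unique-representations}); that the root, identified by $\arel_\aroot$, belongs to some $Y_P$ with $s\in P$; and the \emph{consistency} condition stating that the guessed labelling obeys the recurrence at every node.

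The consistency condition is first-order once the markers $Y_P$ are fixed. At a node $x\in Y_P$ it asserts, for each of the finitely many $s'\in\slabs$, that $s'\in P$ holds if and only if either some incident unary edge contributes $s'$ locally — a finite disjunction over labels $c$ of $\exists e.\,\arel_c(e,x)$, taken over those $c=\emptygraphSymb_{\slabs'}$ with $s'\in\slabs'$ and those $c=\sgraphSymb{a}_{(s_1,\ldots,s_{\arityof{a}})}$ with $s'\in\set{s_1,\ldots,s_{\arityof{a}}}$ — or some child contributes it, i.e.\ there is a binary edge from $x$ to a node $y\in Y_{P''}$ whose label $op$ satisfies $s'\in op(P'')$ (for $\restrictSymb{\slabs'}$ this means $s'\in P''$ and $s'\in\slabs'$, and for $\renameSymb{\alpha}$ it means $\alpha(s')\in P''$). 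Because $\alphabetParseOf{\slabs}$ and $\pow{\slabs}$ are finite, each clause is a finite Boolean combination of existentials over nodes, and the fact that a node may carry several unary labels and several binary-edge children is exactly the shape guaranteed by Proposition~\ref{prop:unique-representations}.

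For correctness, the recurrence determines $\mathrm{pr}$ uniquely by induction on the height of the subtree, so the consistency constraints admit exactly one solution, the true present-set labelling; hence the existential sentence holds precisely when $s\in\mathrm{pr}(\aroot)$, i.e.\ when $s$ is present in $\tree$. I expect the only real obstacle to be the unbounded branching at a node, which forces $\phi$ to express a union over arbitrarily many children. This is precisely where the idempotence of union is essential: membership of $s'$ in the union is witnessed by the existence of a single contributing child, and its failure by the universal absence of such a child, so no counting is needed. Consequently $\phi$ never uses a cardinality constraint $\cardconstr{X}{q}{p}$ and is a genuine \mso{} sentence rather than merely a \cmso{} one.
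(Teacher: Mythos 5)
Your proof is correct, but it takes a genuinely different route from the paper. The paper characterizes presence of $s$ top-down, by the existence of a single root-to-node path $n_0,\ldots,n_m$ together with a sequence of labels $s_0,\ldots,s_m\in\slabs$ that starts at $s_0=s$, is transformed step by step across each $\restrictSymb{\slabs'}$- or $\renameSymb{\alpha}$-labelled edge, and ends at a unary edge that locally contributes $s_m$; this path characterization is stated as a Fact, proved by induction via Proposition~\ref{prop:unique-representations}, and the \mso{} sentence then merely asserts the existence of such a labelled path. You instead compute, bottom-up, the entire present-set $\mathrm{pr}(x)\subseteq\slabs$ of every subtree, guessing it with one second-order variable $Y_P$ per subset $P\subseteq\slabs$ and pinning it down by a local consistency condition; correctness follows because the recurrence has a unique solution by induction on height. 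The two constructions agree on the essentials: your step rules for $\restrictSymb{\slabs'}$ (intersection with $\slabs'$) and $\renameSymb{\alpha}$ (inverse image under $\alpha$) match the paper's edge condition, your dispatch of the case $s\notin\slabs$ by $\phi=\mathtt{false}$ is the implicit degenerate case of the paper's Fact, and both yield genuine \mso{} sentences. The trade-off: the paper's witness is more economical (one quantified path with one tracked label), whereas your labelling uses $2^{\cardof{\slabs}}$ set variables but makes correctness nearly automatic (unique fixpoint of the recurrence), determines presence of all $s\in\slabs$ simultaneously from a single labelling, and — as you rightly emphasize — absorbs the unbounded branching of nodes purely through idempotence of union, which is exactly why no cardinality constraints arise in either approach.
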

\begin{proof}
  Let $\tree \in \trees(\alphabetParseOf{\slabs})$ be a tree. The
  construction of $\phi$ relies on the following equivalent condition,
  that can be easily expressed by a \mso\ sentence:

  \begin{fact}
    $s$ is present in $\tree$ iff there are $n_0, \ldots, n_m \in
    \vertof{\tree}$ and $s_0, \ldots, s_m \in \slabs$, such
    that: \begin{enumerate}
    \item\label{it1:fact:present} $n_0$ is the root of $\tree$ and $s_0
      = s$,
    \item\label{it2:fact:present} $n_{i+1}$ is a child of $n_i$ in
      $\tree$, for all $i \in \interv{0}{m-1}$,
    \item\label{it3:fact:present} $n_m$ is attached to an edge labeled
      by a unary label, which is either $\emptygraphSymb_{\slabs'}$
      and $s_m \in \slabs'$, for some $\slabs' \subseteq \slabs$, or
      $\sgraphSymb{a}_{(s'_1, \ldots, s'_k)}$ and $s_m \in
      \set{s'_1,\ldots,s'_k}$,
    \item\label{it4:fact:present} the edge between $n_i$ and $n_{i+1}$
      is labeled by a binary label, either $\restrictSymb{\slabs'}$
      and $s_i = s_{i+1} \in \slabs'$, or $\renameSymb{\alpha}$ and
      $\alpha(s_i) = s_{i+1}$.
    \end{enumerate}
  \end{fact}
  \proof{ ``$\Rightarrow$'' By Proposition
    \ref{prop:unique-representations}, we have $\tree =
    t^{\treealgebra(\alphabetParse)}$, for some term $t = \big(\pop_{i\in I}
    \sgraph{c}_i \big) \pop \big( \pop_{j \in J} \extend{b_j}(t_j)
    \big)$. Then $s$ is present in $\tree$ because
    either: \begin{itemize}[label=$\triangleright$]
    \item $\sgraph{c}_i = \emptygraphSymb_\slabs'$, for some $i \in I$,
      such that $s \in \slabs'$,
    \item $\sgraph{c}_i = \sgraphSymb{a}_{(s'_1, \ldots, s'_k)}$, for
      some $i \in I$, such that $s \in \set{s'_1, \ldots, s'_k}$,
    \item $b_j = \restrict{\slabs'}$, for some $j \in J$, such that $s
      \in \slabs'$ and $s$ is present in $t_j^\treealgebra$,
    \item $b_j = \rename{\alpha}$, for some $j \in J$, such that $s =
      \alpha(s')$ and $s'$ is present in $t_j^\treealgebra$,
    \end{itemize}
    In the first two cases, we set $m=0$, $n_0$ the root of $\tree$
    and $s_0 \isdef s$. In the last two cases, we set $n_0$ as the
    root of $\tree$, $s_0\isdef s$ and continue building $n_1, \ldots,
    n_m$ and $s_1, \ldots, s_m$ from $t_j$. It is easy to check that
    the conditions (\ref{it1:fact:present}-\ref{it4:fact:present}) are
    satisfied by the sequences $n_0,\ldots,n_m$ and $s_0, \ldots, s_m$
    built as described above.

    \noindent``$\Leftarrow$'' Let $\tree_0, \ldots, \tree_m$ be the
    subtrees of $\tree$ rooted in $n_0, \ldots, n_m$, respectively.
    By condition (\ref{it2:fact:present}) $\tree_{i+1}$ is a subtree
    of $\tree_i$, for each $i \in \interv{0}{m-1}$. By induction on
    $m-i$, one shows that $s_i$ is present in $\tree_i$, for all $i
    \in \interv{0}{m}$. The base case $i=m$ follows from condition
    (\ref{it3:fact:present}). The inductive case $i < m$ follows from
    condition (\ref{it4:fact:present}). Since $s_0 = s$ and $n_0$ is
    the root of $\tree$, by condition (\ref{it1:fact:present}), $s$ is
    present in $\tree$. \qed}

  Back to the construction of $\phi$, the existence of a path starting
  in the root can be described by an \mso\ formula $\psi(X)$ in the
  relational signature
  $\relsigtree{}(\alphabetParseOf{\slabs})$. Further, the local
  conditions (\ref{it3:fact:present}) and (\ref{it4:fact:present}) can
  be encoded by formul{\ae} $\eta_1(x)$ and $\eta_2(x,y)$,
  respectively. Note that $\slabs$ being a given finite set is crucial
  in encoding the conditions such as $s \in \slabs'$ and $s =
  \alpha(s')$, for $\slabs' \subseteq \slabs$ and $\alpha$ a
  $\slabs$-permutation by finite formul{\ae}. Finally, we define:
  \[\phi \isdef \exists X ~.~ \psi(X) \wedge
  \big(\exists x ~.~ X(x) \wedge \eta_1(x)\big) \wedge
  \big(\forall x \forall y ~.~ X(x) \wedge X(y)
  \wedge \bigvee_{
    % \begin{array}{ll}
    \scriptscriptstyle{b \in \alphabetParseOf{\slabs}}
    % \scriptscriptstyle{\arityof{b}=2}
    % \end{array}
  } \exists z ~.~ \arel_b(z,x,y) \rightarrow \eta_2(x,y)\big)\qedhere\]
\end{proof}

A statement similar to the next lemma is proved in~\cite[Proposition
  7.48]{courcelle_engelfriet_2012}. For reasons of self-containment,
we give a proof using our notation, that makes the set of used source
labels explicit:
\begin{lem}\label{lemma:evaluation-homomorphism-transduction}
  For each $\slabs \finsubseteq \sourcelabels$,
  $\proj{\hval}{\trees(\alphabetParseOf{\slabs})}$ is a definable
  $(\relsigtree{}(\alphabetParseOf{\slabs}),~\relsiggraph{\slabs})$-transduction.
\end{lem}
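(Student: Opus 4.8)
The plan is to exhibit an explicit transduction scheme $\scheme$ and prove that $\defd{\scheme}$ agrees with $\proj{\hval}{\trees(\alphabetParseOf{\slabs})}$ up to isomorphism. I work with the incidence encoding of both the input and the output, so that each tree element is either a node or a labeled (unary or binary) edge; an output graph edge will then correspond to an input leaf-edge labeled $\sgraphSymb{a}$, while the output vertices are recovered by tracking, through the bottom-up evaluation, how source vertices are created, renamed, restricted and fused. Using the incidence encoding on the input side is convenient because it lets a single node carry several unary edges of the same label, matching the shape of terms from Proposition~\ref{prop:unique-representations}.

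The key structural fact, proved by induction on the parse tree following the case analysis of Proposition~\ref{prop:unique-representations}, is a canonical naming of the vertices of $\hval(\tree)$. I claim that every vertex $v$ of $\hval(\tree)$ has a unique representative pair $(n,s)$ with $n\in\vertof{\tree}$ and $s\in\slabs$, characterized as follows: $s$ is present in $\subtree{\tree}{n}$ in the sense of Lemma~\ref{lemma:present} relativized to the subtree rooted at $n$, $v$ is the $s$-source of $\hval(\subtree{\tree}{n})$, and either $n$ is the root, or the edge from the parent of $n$ to $n$ is a restriction $\restrictSymb{\slabs'}$ with $s\notin\slabs'$. Intuitively $(n,s)$ records the topmost point at which $v$ still carries a source label: a vertex surviving as a source up to the root is named there, and every internal vertex is named at the unique restriction edge that forgets it. Edges are named directly by the leaf-edge that creates them. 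Uniqueness is the heart of the matter: since composition fuses only sources and never internal vertices, once a vertex becomes internal it can no longer merge, which forces each internal vertex to have exactly one forgetting point; the formal statement follows by structural induction through the constant, renaming, restriction and composition cases.

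Given this naming, the scheme uses $\cardof{\slabs}+1$ layers: one layer per source label $s\in\slabs$ for vertices, and one layer for edges. In the vertex layer for $s$, the domain formula $\psi_s(x)$ selects exactly the nodes $x$ for which $(x,s)$ is a representative; this is definable because ``$s$ is present in $\subtree{\tree}{x}$'' is a relativization of the \mso{} sentence of Lemma~\ref{lemma:present} (quantifying over a downward path from $x$ and the finitely many source labels carried along it), and the condition on the parent edge is a finite disjunction over the symbols of $\alphabetParseOf{\slabs}$. In the edge layer, $\psi$ selects the leaf-edges carrying some label $\sgraphSymb{a}$. For the output relation $\arel_s$, the $s$-source of $\hval(\tree)$ is the representative $(\aroot,s)$, i.e. the root taken in layer $s$, guarded by presence of $s$ at the root. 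For the incidence relation $\arel_a$, I attach an output edge, created at a leaf-edge labeled $\sgraphSymb{a}_{(s_1,\ldots,s_{\arityof{a}})}$ at node $n$, to the representatives of the vertices that are the $s_1,\ldots,s_{\arityof{a}}$-sources of $\hval(\subtree{\tree}{n})$. The needed predicate $\mathsf{Trace}(n,s,y,s')$, stating that the vertex which is the $s$-source at $n$ has representative $(y,s')$, is definable by the same path-with-labels technique as Lemma~\ref{lemma:present}: it asserts an upward path from $n$ to $y$ whose carried label starts at $s$, is transformed by $\alpha^{-1}$ across each $\renameSymb{\alpha}$ edge traversed upward and preserved across each surviving $\restrictSymb{\slabs'}$ edge, and ends at $y$ with label $s'$ at the topmost point. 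Finiteness of $\slabs$ and $\alphabetParseOf{\slabs}$ makes all these formulae finite.

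It then remains to check that $\defdof{\scheme}{}$ is isomorphic to $\gstruc{\hval(\tree)}$ for every $\tree\in\trees(\alphabetParseOf{\slabs})$, with the domain formula $\varphi$ being the (definable) property of encoding a well-formed tree over $\alphabetParseOf{\slabs}$. This amounts to verifying, via the representative bijection and the correctness of $\mathsf{Trace}$ (both established by the induction above), that the layers and relations realize exactly the vertices, sources and incidences of $\hval(\tree)$. I expect the main obstacle to be precisely the uniqueness of representatives together with the correctness of $\mathsf{Trace}$ under composition, where several occurrences of one source label coming from different children fuse into a single vertex; once these are in place, the remainder is a routine, if bookkeeping-heavy, unfolding of the bottom-up evaluation defining $\hval$.
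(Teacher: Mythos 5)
Your construction is correct, but it takes a genuinely different route from the paper's. Both proofs use one layer per source label of $\slabs$ plus one layer for edges, and both rely on the presence predicate of Lemma~\ref{lemma:present} relativized to subtrees; the difference is in how the many tree-level occurrences of a single graph vertex are collapsed. The paper keeps \emph{every} pair $(n,s)$ with $s$ present at the subtree rooted at $n$, defines a purely local auxiliary relation $\equiv$ on parent--child pairs (label preserved across a surviving $\restrictSymb{\slabs'}$-edge, translated by $\alpha$ across a $\renameSymb{\alpha}$-edge), and then obtains the output in a second step by quotienting under the least equivalence containing $\equiv$, invoking the fact that quotients by \mso-definable equivalences are definable transductions (\cite[Lemma 2.4]{CourcelleV}). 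You avoid the quotient altogether by selecting, inside each such class, the unique topmost pair --- the forgetting point, or the root --- as a canonical representative, and by routing edge attachments to representatives via the path-based $\mathsf{Trace}$ predicate. The two constructions are dual: your representative is exactly the maximal element of the paper's $\equiv$-class, and your key combinatorial claim (once a vertex becomes internal it can never be fused again, hence each vertex has exactly one forgetting point) is precisely what makes that maximal element exist and be unique. What your approach buys is self-containment --- a single transduction scheme, with no appeal to the external quotient lemma; what it costs is that the incidence formulae become non-local (upward label-carrying paths rather than parent--child conditions) and that existence/uniqueness of representatives and correctness of $\mathsf{Trace}$ must be established by the structural induction you indicate. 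Both obligations do hold (injectivity of the embeddings under composition, restriction and renaming gives uniqueness of preimages, and internal vertices never re-merge), so the proposal is sound.
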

\begin{proof}
We define the desired transduction in two steps.  In the first step,
we expand each node of the input tree into at most $\cardof{\slabs}+1$
many nodes, one for each source that is present in the respective
subtree plus one extra node that represents an edge. In the second
step, we merge the nodes that are fused by the composition
operations. The first step uses an extra binary relation symbol
$\mathord{\equiv}$ that keeps track of the nodes which are to be
merged in the second step. This relation symbol is interpreted over
different layers by formul{\ae} in the relational signature
$\relsiggraph{\slabs}$.

We now describe the first step.  We use a transduction that creates
$\cardof{\slabs}+1$ copies of the input structure.  We will
use the sources in $\slabs$ and an additional source label
$\square\in\sourcelabels\setminus\slabs$ to index the
copies of the input structure. Formally, we define a parameterless transduction
scheme $\scheme = \tuple{\varphi,\set{\psi_s}_{s \in \slabs
    \cup
    \set{\square}},\set{\theta_{(a,s_1,\ldots,s_{\arityof{a}})}}_{a
    \in \alphabet, s_1,\ldots,s_{\arityof{a}} \in \slabs} \cup
  \set{\theta_{(\equiv,s,t)}}_{s,t \in \slabs}}$, as
follows: \begin{itemize}[label=$\triangleright$]
\item $\varphi$ specifies the domain of the transduction, i.e.,
  $\varphi$ expresses that the input structure is the encoding
  $\gstruc{\tree}$ of some tree
  $\tree\in\trees(\alphabetParseOf{\slabs})$. It is easy to verify
  that such an \mso-formula can be built.
\item $\psi_s$ defines the universe of the $s$-th layer of the result,
  for each $s \in \slabs$. Namely, $\psi_s$ holds for an
  element of the universe of the input structure iff this element is a
  vertex and if $s$ is present at the subtree rooted at this vertex
  (these elements will represent the vertices of the output structure).
  Such a formula can be built according to
  Lemma~\ref{lemma:present}. Moreover, $\psi_\square$ holds for all
  elements of the input structure that are edges labeled by unary symbols
  $\sgraphSymb{a}_{(s_1, \ldots, s_n)}$ (these elements will represent
  the edges of the output structure).
\item each $\theta_{(a,s_1,\ldots,s_{\arityof{a}})}$ has free
  variables $x_0,x_1, \ldots,x_{\arityof{a}}$ and defines the
  interpretation of $(\arel_a,s_1,\ldots,s_{\arityof{a}})$ in the
  result, for all $a \in \alphabet$. We define
  $\theta_{(a,s_1,\ldots,s_{\arityof{a}})}$ to hold for tuples
  $((u_0,\square),(u_1,s_1),\ldots,(u_{\arityof{a}},s_{\arityof{a}}))$
  iff $u_0$ represents a graph edge labeled by the unary symbol
  $\underline{\sgraph{a}}(s_1, \ldots, s_{\arityof{a}})$, such that
  $u_1 = \ldots = u_{\arityof{a}}$ is a tree node incident to
  $u_0$. It is easy to build an \mso-formula
  $\theta_{(a,s_1,\ldots,s_{\arityof{a}})}$ defining these properties.
\item each $\theta_{(\equiv,s,t)}$, for $s,t \in \slabs$, has free
  variables $x_1$ and $x_2$, and defines the interpretation of
  $(\equiv,s,t)$ in the result. For a tuple $((u_1,s),(u_2,t))$, we
  define $\theta_{(s,\equiv,t)}$ to hold iff $u_1$ and $u_2$ are tree
  nodes, such that $u_2$ is the child of $u_1$, for some tree edge
  labeled by one of the following symbols: \begin{itemize}[label=$-$]
  \item $\restrictSymb{\slabs'}$, such that $s \in \slabs'$ and $s = t$, or
  \item $\renameSymb{\alpha}$, such that $\alpha(s) = t$.
  \end{itemize}
  It is easy to build an \mso\ formula $\theta_{(\equiv,s,t)}$
  defining these properties.
\end{itemize}
The second step of the construction is a transduction that takes the
least equivalence relation that subsumes the relation defined by the
\mso{} formula $\theta_\equiv \isdef \bigvee_{s,t \in \slabs}
\theta_{(\equiv,s,t)}$ and constructs its quotient structure.  It is
well known that the quotient structure with regard to an equivalence
relation definable in \mso{} can be expressed as a definable
transduction, e.g., see~\cite[Lemma 2.4]{CourcelleV}. It is now
routine to verify that the composition of the two transductions above
has the desired properties. Moreover, by Proposition
\ref{prop:comp-restr-trans} (\ref{it2:comp-restr-trans}), the
composition of definable transductions is definable.
\end{proof}

Following a standard
convention~\cite{comon:hal-03367725,CourcelleEngelfriet95}, we
identify a ground term $t$ over a finite functional signature
$\fsignature$ with a tree, such that each node $n \in \vertof{t}$ is
attached to exactly one unary edge labeled with a function symbol $f
\in \fsignature$ and is, moreover, the source of $\arityof{f}$ binary
edges labeled $1, \ldots, \arityof{f}$. The intuition is that the
destination of the edge labeled $i$ is the $i$-th child of $n$. More
precisely, a ground term $t \in \inituniv{\fsignature}$ is the tree $t
\in \trees(\alphabetTwo_\fsignature)$, where $\alphabetTwo_\fsignature
\isdef \fsignature \cup \interv{1}{\max_{f \in \fsignature}
  \arityof{f}}$. Since each ground term is a graph, we write
$\gstruc{t}$ for the relational structure encoding the term $t$
(\autoref{sec:cmso-def}).

With these conventions in mind, we recall a characterization of
context-free sets of graph as images of recognizable set of terms via
definable transductions, found by Courcelle and
Engelfriet~\cite[Theorems 1.10 and 2.1]{CourcelleEngelfriet95}. The
statement below is given according to our definitions:

\begin{thmC}[\cite{CourcelleEngelfriet95}]\label{thm:cf-ranked}
  A set of graphs $\mathcal{L} \subseteq \graphsof{\emptyset}$ is context-free iff there
  exists: \begin{enumerate}
  \item a finite functional signature $\fsignature$,
  \item a set $\mathcal{K} \subseteq \inituniv{\fsignature}$ of ground
    terms recognizable in $\initalg{\fsignature}$, and
    %% a ranked set $\mathcal{K} \subseteq \trees(\alphabetTwo)$ of
    %% trees recognizable in $\treealgebra(\alphabetTwo)$ and
    %
  \item a definable $(\fsignature,~\relsiggraph{\emptyset})$-transduction $F$,
    %% a definable
    %% $(\relsigtree{}(\alphabetTwo),~\relsiggraph{\emptyset})$-transduction
    %% $F$,
    %
  \end{enumerate}
  such that $\gstruc{\mathcal{L}}=F(\gstruc{\mathcal{K}})$.
\end{thmC}

We extend below the result of Theorem \ref{thm:cf-ranked} from terms
to unranked trees. The main difficulty here is the difference between
the initial algebra $\initalg{\fsignature}$ over a finite functional
signature $\fsignature$ and the tree algebra
$\treealgebra(\alphabetTwo)$ over a finite edge label alphabet
$\alphabetTwo$. Note that a tree is built in $\initalg{\fsignature}$
by taking all the subtrees starting at the children of the root at
once, whereas in $\treealgebra(\alphabetTwo)$, building the same tree
requires a series of $\extend{b}$ and $\pop$ operations.

\begin{cor}\label{cor:cf-unranked}
  A set of graphs $\mathcal{L} \subseteq \graphsof{\emptyset}$ is context-free iff there
  exists:~\begin{enumerate}
  \item a finite alphabet $\alphabetTwo$ of edge labels,
  \item a (possibly unranked) set $\mathcal{K} \subseteq
    \trees(\alphabetTwo)$ of trees recognizable in
    $\treealgebra(\alphabetTwo)$, and
  \item a definable
    $(\relsigtree{}(\alphabetTwo),~\relsiggraph{\emptyset})$-transduction
    $F$,
  \end{enumerate}
  such that $\gstruc{\mathcal{L}}=F(\gstruc{\mathcal{K}})$.
\end{cor}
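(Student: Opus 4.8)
The plan is to reduce both implications to the ranked version, Theorem~\ref{thm:cf-ranked}, using two observations. First, the tree signature $\treesignature(\alphabetTwo)$ is itself a finite ranked functional signature, since its symbols $\sgraph{c}$, $\extend{b}$, $\pop$ have arities $0$, $1$, $2$. Second, by the convention stated just before Theorem~\ref{thm:cf-ranked}, every ground term over a finite functional signature $\fsignature$ is already a tree in $\trees(\alphabetTwo_\fsignature)$, so ranked terms are a special case of (unranked) trees. The bridge between recognizability \emph{in the term algebra} and recognizability \emph{in the tree algebra} will be provided by definability, via Theorem~\ref{thm:unranked-trees-rec-def} together with the classical equivalence of recognizability and definability for ranked terms~\cite{Doner70,MezeiWright67}.

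For the ``only if'' direction, suppose $\mathcal{L}$ is context-free. Theorem~\ref{thm:cf-ranked} yields a finite functional signature $\fsignature$, a set $\mathcal{K}' \subseteq \inituniv{\fsignature}$ recognizable in $\initalg{\fsignature}$, and a definable transduction $F$ with $\gstruc{\mathcal{L}} = F(\gstruc{\mathcal{K}'})$. I take $\alphabetTwo \isdef \alphabetTwo_\fsignature$ and $\mathcal{K} \isdef \mathcal{K}'$, now viewed as a subset of $\trees(\alphabetTwo)$, leaving $F$ and the identity $\gstruc{\mathcal{L}} = F(\gstruc{\mathcal{K}})$ unchanged. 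It then remains only to upgrade recognizability in $\initalg{\fsignature}$ to recognizability in $\treealgebra(\alphabetTwo)$. By the classical characterization of recognizable sets of ranked terms, $\mathcal{K}'$ is definable; since the incidence and edge encodings of trees have the same \cmso-expressiveness and the set of trees in $\trees(\alphabetTwo_\fsignature)$ encoding valid $\fsignature$-terms is \mso-definable, $\mathcal{K}'$ is definable \emph{as a set of trees}, and Theorem~\ref{thm:unranked-trees-rec-def} delivers recognizability of $\mathcal{K}$ in $\treealgebra(\alphabetTwo)$.

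For the ``if'' direction I start from a finite $\alphabetTwo$, a set $\mathcal{K} \subseteq \trees(\alphabetTwo)$ recognizable in $\treealgebra(\alphabetTwo)$, and a definable $(\relsigtree{}(\alphabetTwo),\relsiggraph{\emptyset})$-transduction $F$ with $\gstruc{\mathcal{L}} = F(\gstruc{\mathcal{K}})$, and I manufacture the data required by Theorem~\ref{thm:cf-ranked} for the finite signature $\treesignature(\alphabetTwo)$. Let $\mathrm{ev} : \initalg{\treesignature(\alphabetTwo)} \to \treealgebra(\alphabetTwo)$ be the unique evaluation homomorphism $t \mapsto t^{\treealgebra(\alphabetTwo)}$, which is surjective because $\treealgebra(\alphabetTwo)$ is term-generated, and set $\mathcal{K}'' \isdef \mathrm{ev}^{-1}(\mathcal{K})$. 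If $h$ is a homomorphism from $\treealgebra(\alphabetTwo)$ into a locally finite algebra with $\mathcal{K} = h^{-1}(\mathcal{C})$, then $h \circ \mathrm{ev}$ witnesses that $\mathcal{K}'' = (h \circ \mathrm{ev})^{-1}(\mathcal{C})$ is recognizable in $\initalg{\treesignature(\alphabetTwo)}$. Next I realize $\mathrm{ev}$ on relational structures as a definable transduction $E$ sending $\gstruc{t}$ to $\gstruc{\mathrm{ev}(t)}$; this is the evaluation of a term whose only operations are $\sgraph{c}$, $\extend{b}$ and $\pop$, definable by the construction of Lemma~\ref{lemma:evaluation-homomorphism-transduction} (merging the nodes fused by adjacent compositions into one vertex via a definable equivalence relation and relabelling the surviving edges), in a simpler single-sorted form. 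Finally I put $F' \isdef F \circ E$, definable by Proposition~\ref{prop:comp-restr-trans}(\ref{it1:comp-restr-trans}); since $\mathrm{ev}(\mathcal{K}'') = \mathcal{K}$ by surjectivity, we obtain $F'(\gstruc{\mathcal{K}''}) = F(\gstruc{\mathrm{ev}(\mathcal{K}'')}) = F(\gstruc{\mathcal{K}}) = \gstruc{\mathcal{L}}$, and Theorem~\ref{thm:cf-ranked} applied to $\treesignature(\alphabetTwo)$, $\mathcal{K}''$ and $F'$ shows that $\mathcal{L}$ is context-free.

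I expect the main obstacle to be the construction of the evaluation transduction $E$ in the ``if'' direction: one must verify that collapsing chains of $\pop$-nodes---which converts the \emph{incremental} $\extend{b}$/$\pop$ presentation of a tree into the \emph{all-children-at-once} tree it denotes---is \mso-definable on the relational encodings. This is precisely the analogue, for the tree signature, of the graph-evaluation result of Lemma~\ref{lemma:evaluation-homomorphism-transduction}; all the remaining steps (closure of recognizability under inverse homomorphisms, the definability bridge between the two algebras, and composition of definable transductions) are routine once $E$ is in hand.
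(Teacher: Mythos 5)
Your proof is correct, and the skeleton matches the paper's: both directions reduce to Theorem~\ref{thm:cf-ranked}, and your ``only if'' direction is essentially identical to the paper's (take $\alphabetTwo \isdef \alphabetTwo_\fsignature$, use Doner/Mezei--Wright to get definability of $\gstruc{\mathcal{K}}$, then Theorem~\ref{thm:unranked-trees-rec-def} to get recognizability in $\treealgebra(\alphabetTwo)$). The ``if'' direction, however, genuinely diverges in how it handles the two key steps. For the recognizability of the pulled-back set of ground terms, the paper takes a definability round-trip: it first proves the evaluation map $\overline{\hval}$ definable, then applies the Backwards Translation Theorem to get $\gstruc{\mathcal{T}}$ definable from the definable $\gstruc{\mathcal{K}}$, and finally invokes Doner/Thatcher--Wright to convert back to recognizability. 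You instead observe that $\mathrm{ev}$ is the unique evaluation homomorphism from the initial algebra into $\treealgebra(\alphabetTwo)$, so $h \circ \mathrm{ev}$ directly witnesses recognizability of $\mathcal{K}'' = \mathrm{ev}^{-1}(\mathcal{K})$; this is cleaner, purely algebraic, and decouples the recognizability step from the transduction construction (in the paper, definability of $\overline{\hval}$ is needed for both). For the evaluation transduction itself, the paper factors $\overline{\hval}$ as a (claimed-easy) term-to-parse-tree transduction composed with the restriction of $\hval$ from Lemma~\ref{lemma:evaluation-homomorphism-transduction}, whereas you propose a direct single-sorted adaptation of that lemma's construction (collapsing chains of $\pop$-nodes via a definable equivalence and quotienting). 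Both variants rest on the same crux---that fusing adjacent compositions is an \mso-definable quotient, as in the second step of Lemma~\ref{lemma:evaluation-homomorphism-transduction}---and your sketch of it is at the same level of detail as the paper's remark that its transduction $\trans$ is ``easy to see definable,'' so this is not a gap. In short: your route buys a shorter, homomorphism-based recognizability transfer at the cost of constructing the evaluation transduction from scratch rather than by composition with the already-proved lemma.
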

\begin{proof}
  ``$\Rightarrow$'' By the left to right direction of
  Theorem~\ref{thm:cf-ranked}, there exists a finite functional
  signature $\fsignature$, a recognizable set $\mathcal{K}$ of ground
  terms over $\fsignature$ and a definable
  $(\fsignature,~\relsiggraph{\emptyset})$-transduction $F$, such that
  $\gstruc{\mathcal{L}}=F(\gstruc{\mathcal{K}})$. Note that the
  alphabet $\alphabetTwo_\fsignature$ of unary and binary edge labels
  is finite, because $\fsignature$ is finite. We take $\alphabetTwo$
  to be $\alphabetTwo_\fsignature$, in the following. Since
  $\mathcal{K}$ is recognizable in $\initalg{\fsignature}$, the set
  $\gstruc{\mathcal{K}}$ is \mso-definable, by the classical
  equivalence between recognizability in $\initalg{\fsignature}$ and
  \mso-definability of sets of ground terms over a finite functional
  signature $\fsignature$~\cite{Doner70,MezeiWright67}. Since the set
  of trees $\mathcal{K} \in \trees(\alphabetTwo)$ is
  \mso-definable, it is recognizable in $\treealgebra(\alphabetTwo)$,
  by Theorem \ref{thm:unranked-trees-rec-def}.

  ``$\Leftarrow$'' Let $\mathcal{K}$ be a set of trees recognizable in
  $\treealgebra(\alphabetTwo)$ and let $\mathcal{T} \isdef \set{t \in
    \inituniv{\treesignature(\alphabetTwo)} \!\mid\!
    t^{\treealgebra(\alphabetTwo)}\! \in \mathcal{K}}$ be the set of
  ground $\treesignature(\alphabetTwo)$-terms that evaluate to a tree
  in $\mathcal{K}$. By Theorem \ref{thm:unranked-trees-rec-def}, the
  set $\gstruc{\mathcal{K}}$ is definable.  We denote by
  $\overline{\hval}$ the function that maps $\gstruc{t}$ into
  $\gstruc{t^{\treealgebra(\alphabetTwo)}}$, for each $t \in
  \inituniv{\treesignature(\alphabetTwo)}$. Assume that
  $\overline{\hval}$ is a definable transduction. Then,
  $\gstruc{\mathcal{T}} = \overline{\hval}^{-1}(\gstruc{\mathcal{K}})$
  is definable, by Theorem \ref{thm:back-trans}. By the classical
  result of \cite{Doner70,ThatcherWright68}, the set $\mathcal{T}$ of
  ground $\treesignature(\alphabetTwo)$-terms is recognizable in
  $\inituniv{\treesignature(\alphabetTwo)}$. Since $\overline{\hval}$
  was assumed to be definable, the transduction $\overline{\hval}
  \circ F$ is definable, by Proposition \ref{prop:comp-restr-trans}
  (\ref{it1:comp-restr-trans}), thus $\mathcal{L}$ is context-free, by
  Theorem \ref{thm:cf-ranked}. It remains to show that
  $\overline{\hval}$ is a definable transduction. Let $\aroot$ be the
  (only) source label necessary to build any parse tree (over the
  alphabet $\alphabetParseOf{\set{\aroot}}$). By Lemma
  \ref{lemma:evaluation-homomorphism-transduction},
  $\proj{\hval}{\trees(\alphabetParseOf{\set{\aroot}})}$ is a definable
  transduction. Then, $\overline{\hval} =
  \proj{\hval}{\trees(\alphabetParseOf{\set{\aroot}})} \circ\ \trans$, where
  $\trans$ is the transduction that turns ground terms into parse
  trees. It is easy to see that $\trans$ is definable, hence
  $\overline{\hval}$ is definable, by Proposition
  \ref{prop:comp-restr-trans} (\ref{it1:comp-restr-trans}).
\end{proof}

\section{Parsable Sets of Graphs}
\label{sec:parsable}

This section gives the main result of the paper, i.e., a
characterization of the class of definable context-free sets of
graphs. Essentially, we show that these are exactly the
\emph{parsable} sets, for which the parse trees of the grammar can be
extracted from each graph, by means of a definable transduction.

Parsable sets are closely related to the notion of \emph{strongly
context-free} sets of graphs introduced by Courcelle~\cite[Definition
  4.2]{CourcelleV}. According to the original definition, a strongly
context-free set $\mathcal{L}$ is the image of a set $\mathcal{K}$ of
parse trees over a finite signature of hyperedge-replacement
operations (i.e., graphs $\graph$ having special nonterminal edges
$e\in\edgeof{\graph}$, that can be substituted by any graph $H$ of
sort $\slabs$ with $\cardof{\slabs} = \arityof{\labof{\graph}(e)}$ via
the canonical evaluation function $\hval$), such that, moreover, there
exists a definable transduction $\parsefunc \subseteq
(\proj{\hval}{\mathcal{K}})^{-1}$ such that $\dom{\parsefunc} =
\mathcal{L}$. Here, by \emph{parsable} set, we denote the existence of
the inverse transduction between graphs and parse trees over
$\hrsignature$:

\begin{defi}\label{def:parsable}
  Let $\slabs\finsubseteq\sourcelabels$ be a sort. A set of graphs
  $\mathcal{L} \subseteq \graphs$ is $\slabs$-\emph{parsable} iff
  there exists a definable $(\relsiggraph{\slabs},
  \relsigtree{}(\alphabetParseOf{\slabs}))$-transduction $\parsefunc$
  such that:~\begin{enumerate}
  \item\label{it1:parsable} $\gstruc{\mathcal{L}} = \dom{\parsefunc}$,
    and
  \item\label{it2:parsable} if $(\gstruc{\graph},\gstruc{\tree}) \in
    \parsefunc$ then $\hval(\tree)=\graph$.
  \end{enumerate}
  We call a set of graphs $\mathcal{L} \subseteq \graphs$
  \emph{parsable}, if $\mathcal{L}$ is $\slabs$-\emph{parsable} for
  some $\slabs\finsubseteq\sourcelabels$.
\end{defi}
Choosing $\mathcal{K}$ such that
$\gstruc{\mathcal{K}}=\parsefunc(\gstruc{\mathcal{L}})$ for a parsable
set of graphs $\mathcal{L}$, we obtain that
$\mathcal{L}=\hval(\mathcal{K})$, i.e., $\mathcal{L}$ is strongly
context-free in the sense of~\cite[Definition 4.2]{CourcelleV} (noting
that the $\hrsignature$ operations can be expressed as
hyperedge-replacement operations).  On the other hand, the original
notion of strongly context-free set may appear, at first sight, to be
more general, because it refers to any signature of
hyperedge-replacement operations, whereas our notion of parsable set
fixes the signature to $\hrsignature$. However, every
hyperedge-replacement operation can be expressed by a $\hrsignature$
term and it is straightforward to define transductions that replace
parse trees of hyperedge-replacement operations by $\hrsignature$
parse trees that encode these operations.

A first result of this section
(\autoref{sec:parsing-tree-decompositions}) is that the set of graphs
whose tree-widths are bounded by a constant is parsable (Theorem
\ref{thm:bounded-tree-width-is-strongly-context-free}), thus proving
Conjecture \ref{conj:courcelle2}. As stated in the introduction, this
also establishes Conjecture \ref{conj:courcelle3}, which states that
the definable and context-free sets of graphs are exactly the parsable
ones. The first step in order to prove Theorem
\ref{thm:bounded-tree-width-is-strongly-context-free} is the ability
of extracting an optimal-width tree decomposition from a graph by
means of a definable transduction (for this we use~\cite[Theorem
  2.4]{10.1145/2933575.2934508} and~\cite[Theorem
  2.1]{journals/lmcs/BojanczykP22}). For the second step, we then
prove that each tree decomposition can be further translated into a
parse tree by a definable transduction.

Finally, we give two theorems that characterize the definable
context-free sets of graphs using four equivalent conditions
(\autoref{sec:characterization}). The first theorem (Theorem
\ref{thm:cf-def}) is explicit about the finite set of sources used in
the grammar. The second theorem (Theorem \ref{thm:cf-def-v2})
quantifies this set existentially in each condition and provides, in
addition to Theorem \ref{thm:cf-def}, a purely logical condition, in
terms of a pair of definable transductions that act as encoding
(graphs to trees) and decoding (trees to graphs).

We start by proving that the recognizable subsets of a parsable set
are also parsable:

\begin{lem}\label{lemma:parsable-closure}
  Let $\slabs\subseteq\sourcelabels$ be a sort, $\mathcal{L}$ be a
  $\slabs$-parsable set of graphs and $\mathcal{L}' \subseteq
  \mathcal{L}$ be a set recognizable in $\algebra^\slabs$.  Then,
  $\mathcal{L}'$ is $\slabs$-parsable.
\end{lem}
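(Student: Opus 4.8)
The plan is to avoid restricting the domain of the given parsing transduction to $\gstruc{\mathcal{L}'}$ directly: that would require $\gstruc{\mathcal{L}'}$ to be definable, which need not hold, since $\mathcal{L}'$ is only assumed recognizable and recognizability does not imply definability for graphs (Theorem~\ref{thm:graphs-def-rec} gives only the converse). Instead I would work on the \emph{output} (tree) side of the transduction, where recognizability and definability do coincide, and restrict the \emph{image} of the parsing transduction to a definable set of parse trees. Let $\parsefunc$ be a definable $(\relsiggraph{\slabs}, \relsigtree{}(\alphabetParseOf{\slabs}))$-transduction witnessing that $\mathcal{L}$ is $\slabs$-parsable.

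Set $\mathcal{K}' \isdef \hval^{-1}(\mathcal{L}') \cap \trees(\alphabetParseOf{\slabs})$, the parse trees over the finite alphabet $\alphabetParseOf{\slabs}$ whose canonical evaluation lands in $\mathcal{L}'$. First I would show that $\gstruc{\mathcal{K}'}$ is definable. Since $\mathcal{L}'$ is recognizable in $\algebra^\slabs$, Lemma~\ref{lemma:finite-sort-inverse-homomorphism} gives that $\hval^{-1}(\mathcal{L}')$ is recognizable in $\repalgebra$, hence in $\treealgebra(\alphabetParse)$ by Lemma~\ref{lemma:red-rec}. The restriction to $\trees(\alphabetParseOf{\slabs})$ of a locally finite congruence saturating $\hval^{-1}(\mathcal{L}')$ is a locally finite congruence of the subalgebra $\treealgebra(\alphabetParseOf{\slabs})$ — it has no more classes, and the subalgebra operations are restrictions — and it saturates $\mathcal{K}'$; thus $\mathcal{K}'$ is recognizable in $\treealgebra(\alphabetParseOf{\slabs})$. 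As $\alphabetParseOf{\slabs}$ is finite, Theorem~\ref{thm:unranked-trees-rec-def} then yields that $\gstruc{\mathcal{K}'}$ is definable.

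With $\gstruc{\mathcal{K}'}$ definable, I would take $\parsefunc'$ to be the subrelation of $\parsefunc$ consisting of the pairs $(\gstruc{\graph}, \gstruc{\tree})$ with $\tree \in \mathcal{K}'$, obtained by composing $\parsefunc$ with the domain-restriction of the identity transduction on $\relsigtree{}(\alphabetParseOf{\slabs})$-structures to $\gstruc{\mathcal{K}'}$; it is definable by Proposition~\ref{prop:comp-restr-trans}, parts~(\ref{it2:comp-restr-trans}) and~(\ref{it1:comp-restr-trans}). Condition~(\ref{it2:parsable}) of Definition~\ref{def:parsable} is inherited since $\parsefunc' \subseteq \parsefunc$. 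For condition~(\ref{it1:parsable}): if $\gstruc{\graph}\in\dom{\parsefunc'}$ then some $\tree\in\mathcal{K}'$ has $(\gstruc{\graph},\gstruc{\tree})\in\parsefunc$, so $\hval(\tree)=\graph$ by~(\ref{it2:parsable}) and $\hval(\tree)\in\mathcal{L}'$, giving $\graph\in\mathcal{L}'$; conversely, if $\graph\in\mathcal{L}'\subseteq\mathcal{L}$, then parsability of $\mathcal{L}$ provides $\tree\in\trees(\alphabetParseOf{\slabs})$ with $(\gstruc{\graph},\gstruc{\tree})\in\parsefunc$ and $\hval(\tree)=\graph\in\mathcal{L}'$, whence $\tree\in\mathcal{K}'$ and the pair survives in $\parsefunc'$. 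Hence $\dom{\parsefunc'}=\gstruc{\mathcal{L}'}$, so $\mathcal{L}'$ is $\slabs$-parsable.

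The crux of the argument — and its only genuinely nontrivial step — is the passage from a merely recognizable $\mathcal{L}'$ to a definable object: one cannot filter the graphs directly, so the recognizability of $\mathcal{L}'$ must be transported through $\hval^{-1}$ into recognizability, and thence (by Theorem~\ref{thm:unranked-trees-rec-def}) definability, of a set of parse trees over the \emph{finite} alphabet $\alphabetParseOf{\slabs}$. The restriction to a finite alphabet is essential here, as Theorem~\ref{thm:unranked-trees-rec-def} is stated only for finite alphabets; everything else is routine bookkeeping with definable transductions.
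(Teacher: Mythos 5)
Your proof is correct, and its core is the same as the paper's: transport the recognizability of $\mathcal{L}'$ to the tree side by forming $\hval^{-1}(\mathcal{L}')$, apply Lemma~\ref{lemma:finite-sort-inverse-homomorphism} and Lemma~\ref{lemma:red-rec} to get recognizability of that set of parse trees, invoke Theorem~\ref{thm:unranked-trees-rec-def} to get definability, and then carve the witnessing sub-transduction out of $\parsefunc$ using Proposition~\ref{prop:comp-restr-trans}. You diverge only in the last step, and your framing of why slightly misrepresents the situation: the paper \emph{does} take the domain-restriction route you declare impossible. It applies Theorem~\ref{thm:back-trans} to conclude that $\gstruc{\mathcal{L}'}=\parsefunc^{-1}(\gstruc{\mathcal{K}})$ is definable --- so while recognizability alone indeed does not give definability of a set of graphs, recognizability \emph{plus} being a subset of an $\slabs$-parsable set does, via backward translation through $\parsefunc$ --- and then restricts the domain of $\parsefunc$ to this definable set. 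Your alternative, restricting the image of $\parsefunc$ to the definable tree set $\gstruc{\mathcal{K}'}$, avoids Theorem~\ref{thm:back-trans} altogether, and by condition~(\ref{it2:parsable}) of Definition~\ref{def:parsable} the two restricted transductions are in fact the same relation: a pair of $\parsefunc$ has its graph in $\mathcal{L}'$ if and only if its tree lies in $\mathcal{K}'$. One genuine merit of your write-up is the explicit intersection with $\trees(\alphabetParseOf{\slabs})$ together with the congruence-restriction argument for recognizability in the subalgebra $\treealgebra(\alphabetParseOf{\slabs})$: the paper applies Theorem~\ref{thm:unranked-trees-rec-def} directly to $\treealgebra(\alphabetParse)$, whose alphabet $\alphabetParse$ is infinite, whereas the theorem is stated only for finite alphabets; your version supplies the missing reduction to the finite-alphabet case.
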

\begin{proof}
  Let $\parsefunc$ be the definable
  $(\relsiggraph{\slabs},~\relsigtree{}(\alphabetParseOf{\slabs}))$-transduction
  that witnesses the parsability of $\mathcal{L}$ as in Definition
  \ref{def:parsable}.  Then $\mathcal{K} \isdef
  \hval^{-1}(\mathcal{L}')$ is a set of parse trees for
  $\mathcal{L}'$. By Lemma~\ref{lemma:finite-sort-inverse-homomorphism}, $\mathcal{K}$ is
  recognizable in $\repalgebra$. By Lemma~\ref{lemma:red-rec},
  $\mathcal{K}$ is also recognizable in
  $\treealgebra(\alphabetParse)$, hence $\mathcal{K}$ is definable, by
  Theorem \ref{thm:unranked-trees-rec-def}.  By Theorem
  \ref{thm:back-trans}, we obtain that
  $\mathcal{L}'=\parsefunc^{-1}(\mathcal{K})$ is definable, hence the
  domain-restriction of $\parsefunc$ to $\mathcal{L}'$ is definable,
  by Proposition \ref{prop:comp-restr-trans}
  (\ref{it1:comp-restr-trans}).
\end{proof}

The ``only if'' direction of Conjecture \ref{conj:courcelle3} is
proved next. The ``if'' direction will be proved as part of Theorem
\ref{thm:cf-def}.

\begin{prop}\label{prop:parsable}
  Any $\slabs$-parsable set of graphs is both definable and
  context-free, defined by a grammar over the signature
  $\hrsignature^\slabs$.
\end{prop}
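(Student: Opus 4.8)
The plan is to establish the two assertions—definability and context-freeness over $\hrsignature^\slabs$—separately, using the inclusion $\mathcal{L}\subseteq\btwgraphsof{\slabs}$ as the bridge that lets me invoke the Filtering Theorem.

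For definability I would argue directly from the transduction $\parsefunc$ witnessing $\slabs$-parsability. Since $\gstruc{\mathcal{L}}=\dom{\parsefunc}$ by condition (\ref{it1:parsable}) and $\parsefunc$ is definable, Proposition~\ref{prop:comp-restr-trans}(\ref{it3:comp-restr-trans}) gives that $\dom{\parsefunc}$, hence $\gstruc{\mathcal{L}}$, is a definable set of structures; thus $\mathcal{L}$ is definable. By Theorem~\ref{thm:graphs-def-rec} this immediately yields that $\mathcal{L}$ is recognizable in $\algebra$, which is precisely the ingredient the Filtering Theorem will consume.

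For context-freeness the key observation is that every graph of $\mathcal{L}$ is the canonical evaluation of a parse tree over $\alphabetParseOf{\slabs}$. Indeed, for $\graph\in\mathcal{L}$ condition (\ref{it1:parsable}) supplies a tree $\tree$ with $(\gstruc{\graph},\gstruc{\tree})\in\parsefunc$, and condition (\ref{it2:parsable}) gives $\hval(\tree)=\graph$. As $\tree\in\trees(\alphabetParseOf{\slabs})$ corresponds to a ground $\hrsignature^\slabs$-term $t$ with $\hval(\tree)=t^\algebra$, the graph $\graph$ is a term-generated element of $\algebra^\slabs$, i.e., $\graph\in\btwgraphsof{\slabs}$. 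Hence $\mathcal{L}\subseteq\btwgraphsof{\slabs}$ and so $\mathcal{L}=\btwgraphsof{\slabs}\cap\mathcal{L}$. Now $\btwgraphsof{\slabs}$ is context-free by Proposition~\ref{prop:btw-graphs-are-context-free}, via a grammar all of whose operations belong to $\hrsignature^\slabs$, while $\mathcal{L}$ is recognizable in $\algebra$ by the previous paragraph. Applying the Filtering Theorem (Theorem~\ref{thm:filtering-theorem}) to the context-free set $\btwgraphsof{\slabs}$ and the recognizable set $\mathcal{L}$ yields that $\mathcal{L}=\btwgraphsof{\slabs}\cap\mathcal{L}$ is context-free, with a grammar having the same sorts as the one for $\btwgraphsof{\slabs}$.

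Finally I would verify that ``same sorts'' really delivers a grammar over $\hrsignature^\slabs$: every sort occurring in the grammar of Proposition~\ref{prop:btw-graphs-are-context-free} is a subset of $\slabs$, so every function symbol of the filtered grammar is an $\hrsignature$-operation whose argument and value sorts are contained in $\slabs$, and these are exactly the operations of $\hrsignature^\slabs$. The main obstacle I anticipate is this bookkeeping of source labels: one must check that evaluating a parse tree over $\alphabetParseOf{\slabs}$ never introduces a source label outside $\slabs$, so that the inclusion $\mathcal{L}\subseteq\btwgraphsof{\slabs}$ is genuinely exact, and that the sort-preservation clause of the Filtering Theorem is invoked precisely rather than informally.
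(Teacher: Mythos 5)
Your proposal is correct and follows essentially the same route as the paper's own proof: definability from $\gstruc{\mathcal{L}} = \dom{\parsefunc}$ being the domain of a definable transduction, recognizability in $\algebra$ via Theorem~\ref{thm:graphs-def-rec}, the inclusion $\mathcal{L} \subseteq \btwgraphsof{\slabs}$ obtained because every graph of $\mathcal{L}$ is the canonical evaluation of a parse tree over $\alphabetParseOf{\slabs}$, and the Filtering Theorem applied to $\btwgraphsof{\slabs} \cap \mathcal{L}$ together with Proposition~\ref{prop:btw-graphs-are-context-free}. Your closing paragraph spelling out why ``same sorts'' yields a grammar over $\hrsignature^\slabs$ is merely a more explicit rendering of a step the paper leaves implicit.
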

\begin{proof}
  Let $\mathcal{L}$ be a $\slabs$-parsable set of graphs, for a sort
  $\slabs\finsubseteq\sourcelabels$. Then, $\mathcal{L}$ is definable
  because $\parsefunc$ is definable, thus $\gstruc{\mathcal{L}} =
  \dom{\parsefunc}$ is definable, where $\parsefunc$ is the
  $(\relsiggraph{\slabs},~\relsigtree{}(\alphabetParseOf{\slabs}))$-transduction
  from \autoref{def:parsable}. By Theorem \ref{thm:graphs-def-rec},
  $\mathcal{L}$ is recognizable in $\algebra$. Since $\mathcal{L}$ is
  $\slabs$-parsable, we obtain that $\mathcal{L}$ is also recognizable
  in the subalgebra $\btwalgebraof{\slabs}$. This is because every
  graph in $\mathcal{L}$ can be built using only operations that are
  the interpretations of the function symbols from
  $\hrsignature^\slabs$, which occur on the edge labels of some parse
  tree from $\trees(\alphabetParseOf{\slabs})$. By
  Proposition~\ref{prop:btw-graphs-are-context-free}, we have that set
  of graphs $\btwgraphsof{\slabs}$ is context-free.  Hence,
  $\mathcal{L} = \mathcal{L} \cap \btwgraphsof{\slabs}$ is
  context-free by Theorem~\ref{thm:filtering-theorem}.
\end{proof}

\subsection{Parsing with Tree Decompositions}
\label{sec:parsing-tree-decompositions}

The definition of parsable sets of graphs requires a definable
transduction from graphs to trees that produces, for each input graph,
a parse tree of that graph relative to some grammar, that does not
depend on the input graph. A candidate for such a parse tree is any
tree decomposition that witnesses the tree-width of a graph. We
recover such an optimal tree decomposition from a seminal result of
Boja\'{n}czyk and Pilipczuk that states the existence of a definable
transduction which computes some optimal tree decomposition of a given
graph. The following theorem combines the results of \cite[Theorem
  2.4]{10.1145/2933575.2934508} and \cite[Theorem
  2.1]{journals/lmcs/BojanczykP22}:

\begin{thmC}[\cite{10.1145/2933575.2934508,journals/lmcs/BojanczykP22}]\label{thm:td-mso}
  For every $k \in \nat$, there exists a definable
  $(\relsiggraph{\emptyset},\relsigdecomp{\emptyset})$-transduction
  $\mathcal{I}$, such that the following holds: \begin{enumerate}
  \item\label{it1:td-mso} $\astruc \in \dom{\mathcal{I}}$ iff $\astruc =
    \gstruc{\graph}$ for some graph $\graph$, such that $\twd{\graph} \le
    k$,
  \item\label{it2:td-mso} if $(\gstruc{\graph},\astruc) \in
    \mathcal{I}$ for some graph $\graph$, then $\astruc =
    \gstruc{\graph,\tree,\beta}$ for some tree decomposition
    $(\tree,\beta)$ of $\graph$ of width at most $k$.
  \end{enumerate}
\end{thmC}

We show next that each encoding of some tree decomposition of a graph
$\graph$ can be mapped to a parse tree that evaluates to $\graph$ via the canonical evaluation:

\begin{lem}\label{lem:mapping-tree-decompositions-back-to-dervation-trees}
For all sorts $\slabs \finsubseteq \sourcelabels$ and $\slabs'
\subseteq \slabs$, there is a definable
$(\relsigdecomp{\slabs'},~\relsigtree{}(\alphabetParseOf{\slabs}))$-transduction
$\mathcal{J}$, such that:\begin{enumerate}
\item\label{it1:lem:mapping-tree-decompositions-back-to-dervation-trees}
  $\astruc \in \dom{\mathcal{J}}$ iff $\astruc = \gstruc{\graph,D,\beta}$ for
  some graph $\graph$ with $\twd{\graph} \le \cardof{\slabs}-1$,
  witnessed by a tree decomposition $(D,\beta)$, and
\item\label{it2:lem:mapping-tree-decompositions-back-to-dervation-trees}
  if $(\gstruc{\graph,D,\beta},\gstruc{\tree}) \in \mathcal{J}$ then
  $\tree\in\trees(\alphabetParseOf{\slabs})$ and $\hval(\tree) =
  \graph$.
\end{enumerate}
\end{lem}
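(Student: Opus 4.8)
The plan is to realize the classical construction that turns a width-$(\cardof{\slabs}-1)$ tree decomposition into a hyperedge-replacement term and then to observe that each of its ingredients is \mso-definable from $\gstruc{\graph,D,\beta}$, so that the whole construction becomes a definable transduction $\mathcal{J}$. First I would fix a \emph{colouring} $\ell : \vertof{\graph} \to \slabs$ that is injective on every bag, i.e.\ $\ell(u) \neq \ell(v)$ whenever $u \neq v$ and $u,v \in \beta(n)$ for some node $n$ of $D$. Such a colouring exists: the sets $\set{\bagof{v}{D}}_{v \in \vertof{\graph}}$ are subtrees of $D$, so their intersection graph is chordal with clique number equal to the maximal bag size, which is at most $\cardof{\slabs}$ since $(D,\beta)$ has width at most $\cardof{\slabs}-1$; hence it is $\cardof{\slabs}$-colourable (equivalently, colour the vertices greedily top-down, each vertex keeping the colour it receives when its bag first appears). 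The role of $\ell$ is to name, at every node, the vertices of its bag by source labels from $\slabs$.

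Given $\ell$, I would build $\tree$ with backbone $D$ itself. For every parent--child pair $(n,c)$ of $D$ the corresponding edge of $\tree$ is labelled $\restrictSymb{\ell(\beta(n) \cap \beta(c))}$, which under $\hval$ forgets exactly the sources of the subgraph built below $c$ that leave the bag of $n$. At each node $n$ I attach, as unary (leaf) edges composed by $\pop$, one symbol $\emptygraphSymb_{\ell(\beta(n))}$ (introducing all bag vertices as sources) together with one symbol $\sgraphSymb{a}_{(\ell(v_1),\dots,\ell(v_{\arityof{a}}))}$ for every edge $e \in \edgeof{\graph}$ that is \emph{placed} at $n$, where $a$ and $v_1,\dots,v_{\arityof{a}}$ are the label and the attached vertices of $e$; I place $e$ at the unique topmost node whose bag contains all of $v_1,\dots,v_{\arityof{a}}$, which exists by condition~\ref{it1:tree-decomposition} of \autoref{def:tree-decomposition} and is well defined because the candidate nodes form a subtree. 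Finally, above the root of $D$ I prepend two edges, labelled $\renameSymb{\alpha}$ and then $\restrictSymb{\slabs'}$ with $\alpha(s) = \ell(\sourceof{\graph}(s))$ for $s \in \slabs'$, so that the sources of the result are exactly those of $\graph$; this uses condition~\ref{it3:tree-decomposition}, which guarantees that all sources of $\graph$ lie in $\beta(\aroot)$.

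To prove $\hval(\tree) = \graph$ I would show, by bottom-up induction on $D$, that the subtree rooted at $n$ evaluates to the subgraph of $\graph$ consisting of the edges placed in the subtree of $n$ together with all vertices appearing in its bags, with sources $\beta(n)$ named by $\ell$. Because $\ell$ is a \emph{global} colouring, a vertex shared by $\beta(n)$ and $\beta(c)$ carries the same label in both, so the composition $\pop$ fuses the two occurrences correctly; and because each $\bagof{v}{D}$ is connected (condition~\ref{it2:tree-decomposition}), a vertex forgotten by the restriction on the edge $(n,c)$ occurs nowhere outside the subtree of $c$, so forgetting it is sound. Each edge of $\graph$ is contributed exactly once, at its placement node, and no vertex is lost because every bag vertex is introduced by its $\emptygraphSymb$ leaf. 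The top rename and restriction then adjust the sources to match $\graph$, giving $\hval(\tree)=\graph$; since by construction every symbol uses only source labels from $\slabs$, we have $\tree \in \trees(\alphabetParseOf{\slabs})$.

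The remaining, and main, task is to express this construction as a definable transduction. I would encode $\ell$ by $\cardof{\slabs}$ set parameters $\set{C_s}_{s \in \slabs}$, taking $\ell(v)=s$ iff $v \in C_s$; the selecting formula $\varphi$ states that the input is a valid encoding $\gstruc{\graph,D,\beta}$ of a tree decomposition and that $\set{C_s}_{s\in\slabs}$ is a partition of $\vertof{\graph}$ injective on every bag (this last clause simultaneously enforces width at most $\cardof{\slabs}-1$ and, by the colouring argument above, is satisfiable for every admissible input). The nodes of $\tree$ are the nodes of $D$ (plus two fresh copies of the root for the top rename and restriction), its unary edges are copies of the edges of $\graph$ and of the nodes of $D$, and its binary edges are copies of the non-root nodes of $D$, each $c$ representing the edge $(\,\parent(c),c\,)$; these layers are carved out by the $\psi_i$. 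The point that keeps the $\theta$-formulas finite is that $\slabs$, hence $\alphabetParseOf{\slabs}$, is \emph{finite}: for each of the finitely many labels $\restrictSymb{\slabs''}$, $\renameSymb{\alpha}$, $\emptygraphSymb_{\slabs''}$ and $\sgraphSymb{a}_{(s_1,\dots)}$ I write one \mso\ formula that tests, using the $C_s$, the $\bag$ relation and the (\mso-definable) ancestor order of $D$, whether that label decorates a given copied element; for instance $\restrictSymb{\slabs''}$ labels the edge below $c$ iff $\slabs'' = \set{s \mid C_s \cap \beta(\parent(c)) \cap \beta(c) \neq \emptyset}$, and an $\sgraphSymb{a}$ leaf sits at the topmost node covering its edge. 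I expect this bookkeeping of layers and labels, rather than any single hard idea, to be the most delicate part; correctness of $\mathcal{J}$ then follows from that of the underlying construction, and \autoref{prop:comp-restr-trans} handles the composition of the colour-guessing step with the structure-producing step.
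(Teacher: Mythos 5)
Your proposal is correct and follows essentially the same construction as the paper's proof: take $D$ as the backbone of the output tree, guess a bag-injective colouring of $\vertof{\graph}$ by source labels via \mso{} set parameters, label each parent--child edge with a restriction to the colours of the vertices shared by the two bags, attach one $\emptygraphSymb$ leaf per node and one $\sgraphSymb{a}$ leaf per graph edge at its topmost covering node, and establish $\hval(\tree)=\graph$ by a bottom-up induction identical in substance to the paper's Fact. The only deviations are minor: the paper constrains the colouring so that sources already carry their own names (rendering your top-level $\renameSymb{\alpha}$ and $\restrictSymb{\slabs'}$ edges unnecessary and setting all rename labels to false), and your restriction label $\ell(\beta(n)\cap\beta(c))$ — colours of \emph{shared} vertices rather than the intersection of the bags' colour sets — is in fact the formulation that the paper's own inductive correctness argument relies on.
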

\begin{proof}
Let $k = \cardof{\slabs} - 1$.
The idea of the transduction $\mathcal{J}$
is to use the tree $D$, encoded by the interpretation of the $\node$
and $\parent$ relation symbols from $\relsigdecomp{\slabs}$, as the
skeleton for the output tree $\tree$. In order to label the edges of
$\tree$ with unary and binary edge labels, we guess a coloring of the
vertices in the input graph, using the parameters $\set{X_s}_{s \in
  \slabs}$, such that every vertex is labeled by exactly one color
$X_s$. Given a node $n \in \vertof{D}$, let $\colorof{n} \isdef \{s
\mid \text{ there is a vertex } v \text{ colored by } X_s \text{ and }
\bag(v,n) \text{ holds}\}$ be the colors of the vertices in the bag
$\beta(n)$.  Moreover, for every edge $e \in \edgeof{\graph}$, we let
$\nodefunc(e)$ be the closest node $n$ to the root with
$\edgerelof{\graph}(e)_i \in \beta(n)$, for all $1 \le i \le
\arityof{\labof{\graph}(e)}$.  Note that $\nodefunc(e)$ exists by
Definition \ref{def:tree-decomposition} (\ref{it1:tree-decomposition})
and that $\nodefunc(e)$ is unique, by Definition
\ref{def:tree-decomposition} (\ref{it2:tree-decomposition}) (i.e., if
there exist two distinct nodes containing all vertices attached to
$e$, then all these vertices must also belong to the bag of their
unique common ancestor).

We are going to use a transduction that creates three layers (i.e.,
copies of the input structure) indexed by the names $\vertexlabel$,
$\vertexsetlabel$ and $\edgelabel$, respectively. Then, $\mathcal{J}$ is the
transduction defined by the scheme:
\[\begin{array}{lr}
  \scheme \isdef & \large\langle
  \varphi, \psi_\vertexlabel, \psi_\vertexsetlabel, \psi_\edgelabel,
  \set{\theta_{\restrictSymb{\slabs'}}}_{\slabs' \subseteq \slabs}, \set{\theta_{\renameSymb{\alpha}}}_{\alpha \text{ is $\slabs$-permutation}},
  \\[1mm]  &  % \hspace*{16mm}
  \set{\theta_{\sgraphSymb{a}(s_1, \ldots, s_{\arityof{a}})}}_{a \in \alphabet, s_1,\ldots,s_{\arityof{a}} \in \slabs},
  \set{\theta_{\emptygraphSymb_{\slabs'}}}_{\slabs'\subseteq \slabs} \large\rangle
\end{array}\]
where:
\begin{itemize}[label=$\triangleright$]
\item $\varphi(\set{X_s}_{s \in \slabs})$ defines the domain of the
  transduction, by checking that the following hold:
  \begin{enumerate}[1.]
  \item the sets $\set{X_s}_{s \in
  \slabs}$ forms a partition of the vertices of $\vertof{\graph}$, i.e., every vertex is labeled of $\vertof{\graph}$ is labelled by exactly one color $X_s$,
  \item the bags of the tree decomposition are all of size at most $k+1$,
  \item for each edge $e \in \edgeof{\graph}$, there is a node $n \in
  \vertof{D}$ whose bag $\beta(n)$ contains all vertices from
  $\vertof{\graph}$ attached to $e$,
  \item the set of nodes $\set{n \in \vertof{D} \mid v \in \beta(n)}$ is
  non-empty and connected in $D$,
  \item the sets $\set{X_s}_{s \in \slabs}$ form a partition of
  $\vertof{\graph}$ that is consistent with the tree decomposition,
  i.e., that in each bag there is at most one vertex labelled by
  $X_s$, for all $s \in \slabs$, and
  \item for each $s$-source $v\in\vertof{\graph}$, the color of $v$
  is indeed $X_s$ and $s$ belongs to the bag associated with the root
  of $D$.
  \end{enumerate}
\item $\psi_\vertexlabel(x_1) \isdef \node(x_1)$ represents the nodes
  of the output tree $\tree$.
\item $\psi_\vertexsetlabel(x_1) \isdef \node(x_1)$ represents the
  unary edges with labels $\emptygraphSymb_{\slabs'}$ of $\tree$.
\item $\psi_\edgelabel(x_1)$ holds for those elements where
  $\node(x_1)$ holds, except for the root of the tree; these elements
  represent the binary $\restrictSymb{\slabs'}$-labeled edges of $\tree$.
  Moreover, $\psi_\edgelabel(x_1)$ holds also for the elements that
  encode the edges of $\graph$; these elements represent the unary
  $\sgraphSymb{a}_{(s_1, \ldots, s_{\arityof{a}})}$-labeled edges of $\tree$.
\item $\theta_{\restrictSymb{\slabs'}}(x_1,x_2,x_3,\set{X_s}_{s \in
  \slabs})$ defines the interpretation of the ternary relation symbol
  $\arel_{\restrictSymb{\slabs'}}$ in $\gstruc{\tree}$, i.e., all
  triples $\tuple{(n_1,\edgelabel), (n_2,\vertexlabel),
    (n_3,\vertexlabel)} \in \vertof{\tree}^3$, such that $n_3$ is the
  parent of $n_2$, $n_1 = n_2$ and $ \slabs'= \colorof{n_2} \cap
  \colorof{n_3}$.
\item $\theta_{\renameSymb{\alpha}}(x_1,x_2,x_3,\set{X_s}_{s \in
  \slabs})$ is set to false for all $\slabs$-permutations $\alpha$ as
  the rename operation is not needed for the construction of graphs
  from tree decompositions,
\item $\theta_{\sgraphSymb{a}_{(s_1, \ldots,
    s_{\arityof{a}})}}(x_1,x_2,\set{X_s}_{s\in\slabs})$ defines the
  interpretation of the binary relation\linebreak symbol
  $\arel_{\sgraphSymb{a}_{(s_1, \ldots, s_{\arityof{a}})}}$ in
  $\gstruc{\tree}$, i.e., all pairs $\tuple{(n_1,\edgelabel),
    (n_2,\vertexlabel)} \in \vertof{\tree}^2$, such that $n_1 \in
  \edgeof{\graph}$ is an edge with label $\labof{\graph}(n_1)=a$ and
  incident vertices $\edgerelof{\graph}(n_1) =
  \tuple{v_1,\ldots,v_{\arityof{a}}}$ colored by $X_{s_1}, \ldots,
  X_{s_{\arityof{a}}}$, respectively, and $n_2 = \nodefunc(n_1)$.
\item
  $\theta_{\emptygraphSymb_{\slabs'}}(x_1,x_2,\set{X_s}_{s\in\slabs})$
  defines the interpretation of the binary relation symbol
  $\arel_{\emptygraphSymb_{\slabs'}}$ in $\gstruc{\tree}$, i.e., all
  pairs $\tuple{(n_1,\vertexsetlabel), (n_2,\vertexlabel)} \in
  \vertof{\tree}^2$, such that $n_1 = n_2$ and $\slabs' =
  \colorof{n_1}$.
\end{itemize}
Note that the formula $\varphi$ ensures that $\astruc \in \dom{\mathcal{J}}$ iff
$\astruc = \gstruc{(\graph,D,\beta)}$ for some graph $\graph$
with $\twd{\graph} \le k$, witnessed by a tree decomposition
$(D,\beta)$.

Let $(\gstruc{(\graph,D,\beta)},\gstruc{\tree}) \in \mathcal{J}$ for
some $\graph$ with $\twd{\graph} \le k$, witnessed by a tree
decomposition $(D,\beta)$. It is easy to verify that
$\tree\in\trees(\alphabetParseOf{\slabs})$.  We note that every node
of $n$ of the tree $\tree$ is also a node of $D$ and vice versa, by
the definition of the transduction scheme $\scheme$. For each $n \in
\vertof{\tree}$, let $\tree_n$ denote the subtree of $\tree$ rooted in
$n$. Let $\set{X_s}_{s \in \slabs}$ be the coloring guessed by the
transduction. For a node $n \in \vertof{D}$, we denote by
$\graphfunc(n)$ the subgraph of $\graph$ consisting of all vertices
that appear in bags of descendants of $n$ and edges $e$, such that
$\nodefunc(e)$ is a descendant of $n$. We mark a vertex $v$ of
$\graphfunc(n)$ as an $s$-source iff $v$ appears in the bag associated
with $n$ and is colored by $X_s$. We show the following:

\begin{fact}
  $\hval(\tree_n) = \graphfunc(n)$, for all $n \in \vertof{\tree}$.
\end{fact}
\proof{ By induction on the structure of $\tree$, let
  $n\in\vertof{\tree}$ be a node with children $n_1, \ldots, n_l$. By
  the inductive hypothesis, we have that
  $\hval(\tree_{n_i})=\graphfunc(n_i)$, for all $i \in \interv{1}{l}$.
  We denote by $\slabs_i \isdef \set{s \in \slabs \mid \exists v \in
    \vertof{\graph} ~.~ v \text{ is colored by } X_s,~ \bag(v,n)
    \text{ and } \bag(v,n_i) \text{ hold}}$ the set of labels that
  occur simultaneously in the bags of $n$ and $n_i$, for all $i \in
  \interv{1}{l}$. We consider $\graph_0$ to be the subgraph of
  $\graph$ consisting of the vertices of $\graph$ from the bag of $n$
  and the edges $e\in\edgeof{\graph}$ such that $n=\nodefunc(e)$.  Let
  $\slabs_0 \isdef \set{s \in \slabs \mid \exists v \in
    \vertof{\graph_0} ~.~ v \text{ is colored by } X_s}$.  Then, we
  have: \[\graph_0 = \left( \emptygraph_{\slabs_0} \pop ( \pop_i
  \sgraph{a_i}(s_1^i, \ldots, s_{n_i}^i)) \right)^\algebra =
  \emptygraph^\algebra_{\slabs_0} \pop^\algebra ( \pop^\algebra_i
  \sgraph{a_i}(s_1^i, \ldots, s_{n_i}^i)^\algebra ) \] for suitably chosen
  edge labels $a_i\in\alphabet$ and source labels $s_j^i \in
  \slabs$. We observe that: \[\graphfunc(n) = \graph_0 \pop^\algebra (
  \pop^\algebra_{i=1..l} \restrict{\slabs_i}^\algebra(\graphfunc(n_i))
  ) ~~~~\text{(*)}\] By the definition of the canonical evaluation, we
  have: \[\hval(\tree_n) = \emptygraph_{\slabs_0}^\algebra
  \pop^\algebra (\pop^\algebra_i \sgraph{a_i}(s_1^i, \ldots,
  s_n^i)^\algebra) \pop^\algebra (\pop^\algebra_{i=1..l}
  \restrict{\slabs_i}^\algebra(\hval(\tree_{n_i})))\] The claim
  follows by the above equation, (*) and the inductive
  hypothesis. \qed}

The proof is concluded by choosing $n$ as the root of $\tree$ in the
above fact, which leads to $\hval(\tree) = \graph$, as required.
\end{proof}

The next corollary of Theorem \ref{thm:td-mso} provides a powerful
result.
By instantiating its statement for $\slabs=\emptyset$, we
obtain that every graph $\graph \in \graphsof{\emptyset}$ of
tree-width $k$ is the image of a parse tree under an \mso-transduction
and, moreover, this parse tree uses exactly $k+1$ source labels:

\begin{cor}
\label{cor:mapping-bounded-tree-width-graphs-back-to-dervation-trees}
  For all sorts $\slabs\finsubseteq\sourcelabels$ and $\slabs' \subseteq \slabs$, there exists a
  definable\linebreak
  $(\relsiggraph{\slabs'},~\relsigtree{}(\alphabetParseOf{\slabs}))$-transduction
  $\mathcal{K}$, such that:
  \begin{enumerate}
  \item\label{it1:mapping-bounded-tree-width-graphs-back-to-dervation-trees}
    $\astruc \in \dom{\mathcal{K}}$ iff $\astruc = \gstruc{\graph}$,
    for some graph $\graph$ that admits a tree decomposition
    $(D,\beta)$ of width at most $\cardof{\slabs}-1$, and
  \item\label{it2:mapping-bounded-tree-width-graphs-back-to-dervation-trees}
    if $(\gstruc{\graph},\gstruc{\tree}) \in \mathcal{K}$ then
    $\tree\in\trees(\alphabetParseOf{\slabs})$ and $\hval(\tree) =
    \graph$.
  \end{enumerate}
\end{cor}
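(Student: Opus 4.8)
The plan is to obtain $\mathcal{K}$ as a composition $\mathcal{J} \circ \mathcal{I}'$, where $\mathcal{J}$ is the transduction supplied by Lemma~\ref{lem:mapping-tree-decompositions-back-to-dervation-trees} and $\mathcal{I}'$ is a definable $(\relsiggraph{\slabs'},~\relsigdecomp{\slabs'})$-transduction that equips each input graph $\graph$ of sort $\slabs'$ with an encoding of a tree decomposition of width at most $k \isdef \cardof{\slabs}-1$ respecting all three conditions of Definition~\ref{def:tree-decomposition}, in particular condition~(\ref{it3:tree-decomposition}) that the root bag contains all sources. Since both $\mathcal{I}'$ and $\mathcal{J}$ are definable, so is their composition by Proposition~\ref{prop:comp-restr-trans}(\ref{it1:comp-restr-trans}). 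Conditions (\ref{it1:mapping-bounded-tree-width-graphs-back-to-dervation-trees}) and (\ref{it2:mapping-bounded-tree-width-graphs-back-to-dervation-trees}) of the corollary then follow from the corresponding guarantees on $\mathcal{I}'$ and $\mathcal{J}$, once I check that $\mathcal{I}'$ is defined on exactly those $\gstruc{\graph}$ that admit a width-$\le k$ decomposition and that every output of $\mathcal{I}'$ lies in $\dom{\mathcal{J}}$.

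Building $\mathcal{I}'$ is the crux. When $\slabs' = \emptyset$ the inputs are sourceless, condition~(\ref{it3:tree-decomposition}) is vacuous, and I can simply take $\mathcal{I}'$ to be the transduction $\mathcal{I}$ of Theorem~\ref{thm:td-mso} with parameter $k$; this is the instantiation actually used in the sequel. For nonempty $\slabs'$ the difficulty is that $\mathcal{I}$ handles only sourceless graphs and produces decompositions that need not gather the sources of $\graph$ into a single bag, hence may violate condition~(\ref{it3:tree-decomposition}). I would resolve this with a gadget: let $\graph^+$ be the sourceless graph obtained from $\graph$ by forgetting its sources and adding one fresh hyperedge $e^+$ incident to all $\cardof{\slabs'}$ source vertices (working over $\alphabet$ temporarily enlarged by a single label of arity $\cardof{\slabs'}$, which is harmless since Theorem~\ref{thm:td-mso} holds over any fixed finite edge-label alphabet and the gadget is discarded at the end). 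The key observation, valid because $\slabs' \subseteq \slabs$, is that the (sorted) tree-width of $\graph$ equals the sourceless tree-width of $\graph^+$: a sorted decomposition of $\graph$ already covers $e^+$ by its root bag, and conversely any width-$\le k$ decomposition of $\graph^+$ has, by condition~(\ref{it1:tree-decomposition}) applied to $e^+$, a bag containing all sources, which can be promoted to the root without changing the width.

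Accordingly, $\mathcal{I}'$ is assembled from definable steps: (i) add the gadget edge $e^+$, a copyless transduction introducing one universe element and its incidences; (ii) apply $\mathcal{I}$ of Theorem~\ref{thm:td-mso} with parameter $k$ to $\graph^+$, obtaining a width-$\le k$ decomposition $(D,\beta)$; (iii) re-root $D$ at the unique topmost node whose bag contains all sources (such a node exists, as the nodes containing all sources form a nonempty connected subtree by condition~(\ref{it2:tree-decomposition}) and the Helly property), which establishes condition~(\ref{it3:tree-decomposition}) — the path from the old root to that node is \mso-definable in a tree, so reversing the $\parent$-edges along it is a definable operation; and (iv) delete $e^+$ from the edge relations and re-instate the source relations $\arel_s$, which is possible since the source vertices of $\graph$ survive as graph vertices in the decomposition structure and are known from the input over $\relsiggraph{\slabs'}$, yielding a $\relsigdecomp{\slabs'}$-structure. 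The domains line up so that $\mathcal{I}'$ is defined precisely on the $\gstruc{\graph}$ with $\twd{\graph} \le k$, and its outputs are valid encodings $\gstruc{\graph,D,\beta}$ feeding into $\dom{\mathcal{J}}$.

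I expect the main obstacle to be exactly this source handling for nonempty $\slabs'$: securing condition~(\ref{it3:tree-decomposition}) while keeping the width at $\cardof{\slabs}-1$, which forces the gadget-and-re-rooting detour together with the equality of sorted and sourceless tree-widths, and checking that gadget insertion/deletion and re-rooting (edge reversal along a definable path) are genuinely definable transductions whose composition with $\mathcal{I}$ and $\mathcal{J}$ has matching domains. By contrast, the $\slabs'=\emptyset$ case reduces to an immediate composition of Theorem~\ref{thm:td-mso} and Lemma~\ref{lem:mapping-tree-decompositions-back-to-dervation-trees}.
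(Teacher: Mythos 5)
Your proposal is correct and follows essentially the same route as the paper's own proof: the gadget hyperedge over all $\slabs'$-sources, the application of Theorem~\ref{thm:td-mso}, the definable re-rooting of the decomposition at a bag containing all sources, the deletion of the gadget with re-introduction of the source relations, and finally Lemma~\ref{lem:mapping-tree-decompositions-back-to-dervation-trees} are exactly the paper's transductions $A$, $\mathcal{I}$, $B$ and $\mathcal{J}$, merely repackaged as $\mathcal{I}'$ followed by $\mathcal{J}$. The only (harmless) difference is that you treat $\slabs'=\emptyset$ as a separate immediate case and make explicit the width-preservation argument (sorted tree-width of $\graph$ equals sourceless tree-width of the gadget graph), which the paper leaves implicit.
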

\begin{proof}
  The transduction $\mathcal{K}$ is the result of composing the
  transductions $\mathcal{I}$ (Theorem~\ref{thm:td-mso}) and
  $\mathcal{J}$
  (Lemma~\ref{lem:mapping-tree-decompositions-back-to-dervation-trees}).
  However, as $\mathcal{I}$ only inputs graphs with no sources, we need to add some pre-processing in order to encode and then remove the sources.
  We will define two further definable transductions $A$ and $B$ such
  that the desired transduction $\mathcal{K}$ is the result of
  composing $A$, $\mathcal{I}$, $B$ and $\mathcal{J}$, in this order,
  starting with $A$.
  The pre-processing requires the use of some fresh
  (temporary) edge label $a$ of arity $\cardof{\slabs'}$.
  We now define $A$ as the transduction that outputs the encoding of an input graph $\graph \in \btwgraphsof{\slabs'}$, removes the encoding of the sources $\arel_s$, for $s \in \slabs$, and adds an $a$-labelled edge between all $s$-sources, with $s \in \slabs'$.
  This has the effect that for every tree-decomposition of
  $\graph$ (in particular for the output of the composition of
  transductions $A$ and $\mathcal{I}$) there is a node such that the
  bag associated with this node contains all $s$-sources, for $s \in
  \slabs'$, of $\graph$.
  Next, we would like to apply transduction $\mathcal{J}$.
  However, in order to do so, we need to ensure that the $s$-sources in fact appear in the root of the tree decomposition and that we reintroduce the sources $s \in \slabs'$ at the appropriate places.
  We do so by defining the transduction $B$ that inputs
  the encoding a graph $\graph$ and a tree decomposition $(D,\beta)$
  and outputs an encoding of $\graph$ and a tree decomposition
  $(D',\beta')$, which is obtained from $(D,\beta)$ by rotating the
  tree-decomposition, such that the node of $D$ that contains the
  $s$-sources becomes the root of the tree decomposition. The rotation
  operation is implemented by guessing the root of the rotated tree
  and reversing the order of the pairs from the interpretation of the
  of the $\parent$ relation along the unique path between the root of
  the input tree and the new root.
  Further, $B$ deletes the $a$-labelled edge that has been added by $A$ and reintroduces the predicates $\arel_s$, for $s \in \slabs$.
  It is now easy to verify that the composition of $A$, $\mathcal{I}$, $B$ and $\mathcal{J}$ has the desired properties.
\end{proof}

We are now ready to prove Conjecture \ref{conj:courcelle2}:

\begin{thm}
  \label{thm:bounded-tree-width-is-strongly-context-free}
  For each sort $\slabs \finsubseteq \sourcelabels$, the set
  $\btwgraphsof{\slabs}$ is $\slabs$-parsable.
\end{thm}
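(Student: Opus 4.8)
The plan is to take as the parsing transduction $\parsefunc$ the very transduction provided by \autoref{cor:mapping-bounded-tree-width-graphs-back-to-dervation-trees}, instantiated with $\slabs' = \slabs$. This yields a definable $(\relsiggraph{\slabs},~\relsigtree{}(\alphabetParseOf{\slabs}))$-transduction whose domain is the set of encodings of graphs (of sort included in $\slabs$) admitting a tree decomposition of width at most $\cardof{\slabs}-1$, and such that whenever $(\gstruc{\graph},\gstruc{\tree})\in\parsefunc$ one has $\tree\in\trees(\alphabetParseOf{\slabs})$ and $\hval(\tree)=\graph$. I would then verify the two clauses of \autoref{def:parsable} for this $\parsefunc$.

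Clause~(\ref{it2:parsable}) holds verbatim, being the output guarantee of the corollary. The substance is clause~(\ref{it1:parsable}), i.e., $\gstruc{\btwgraphsof{\slabs}}=\dom{\parsefunc}$. By the domain characterization, $\dom{\parsefunc}$ is the set of encodings of graphs $\graph\in\graphsof{\slabs}$ with $\twd{\graph}\le\cardof{\slabs}-1$, so it remains to show that this set equals $\gstruc{\btwgraphsof{\slabs}}$. The inclusion $\gstruc{\btwgraphsof{\slabs}}\subseteq\dom{\parsefunc}$ is immediate from \autoref{lemma:cf-twb}, which guarantees $\twd{\graph}\le\cardof{\slabs}-1$ for every $\graph\in\btwgraphsof{\slabs}$.

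For the reverse inclusion I would reuse the output guarantee: if $\gstruc{\graph}\in\dom{\parsefunc}$ then there is a parse tree $\tree\in\trees(\alphabetParseOf{\slabs})$ with $\hval(\tree)=\graph$. Since $\tree$ uses only edge labels from $\alphabetParseOf{\slabs}$, its canonical evaluation applies only operations of $\hrsignature^\slabs$, exhibiting $\graph$ as a term-generated element of $\algebra^\slabs$, i.e., $\graph\in\btwgraphsof{\slabs}$. This establishes the domain equality and concludes the argument, with $\parsefunc$ witnessing the $\slabs$-parsability of $\btwgraphsof{\slabs}$.

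I expect no real obstacle remaining at this layer: the hard work has been pushed into \autoref{cor:mapping-bounded-tree-width-graphs-back-to-dervation-trees}, and through it into the Boja\'{n}czyk--Pilipczuk definable optimal tree decomposition (\autoref{thm:td-mso}) and \autoref{lem:mapping-tree-decompositions-back-to-dervation-trees}. The only spot deserving attention is the reverse domain inclusion: one must recognize that a graph of sort included in $\slabs$ with tree-width at most $\cardof{\slabs}-1$ is term-generated in $\algebra^\slabs$ --- the converse of \autoref{lemma:cf-twb} --- which, conveniently, is delivered for free by the output property of $\parsefunc$ rather than requiring an independent construction.
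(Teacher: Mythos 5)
Your proposal is correct and is essentially the paper's own proof: the paper likewise takes $\parsefunc \isdef \mathcal{K}$ from Corollary~\ref{cor:mapping-bounded-tree-width-graphs-back-to-dervation-trees}, gets $\gstruc{\btwgraphsof{\slabs}} \subseteq \dom{\mathcal{K}}$ from Lemma~\ref{lemma:cf-twb}, and derives the reverse inclusion from the output guarantee that any image tree lies in $\trees(\alphabetParseOf{\slabs})$ and evaluates to the input graph, hence the graph is term-generated in $\algebra^\slabs$. Your write-up just makes the reverse-inclusion step slightly more explicit than the paper does.
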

\begin{proof}
  By \autoref{lemma:cf-twb}, we have $\twd{\graph} \le
  \cardof{\slabs}-1$ for every graph $\graph \in \btwgraphsof{\slabs}$, witnessed by a tree decomposition $(D,\beta)$ such that every $s$-source of $\graph$, with $s \in \slabs$, appears in the bag associated with the root of $D$.
  Hence, $\gstruc{\btwgraphsof{\slabs}} \subseteq
  \dom{\mathcal{K}}$, where $\mathcal{K}$ is the definable
  transduction given by
  Corollary~\ref{cor:mapping-bounded-tree-width-graphs-back-to-dervation-trees}.
  Moreover, we have that $(\gstruc{\graph},\gstruc{\tree}) \in K$
  implies $\tree\in\trees(\alphabetParseOf{\slabs})$ and $\hval(\tree)
  = \graph$, hence $\gstruc{\btwgraphsof{\slabs}} \supseteq
  \dom{\mathcal{K}}$.
  To see that $\btwgraphsof{\slabs}$ is
  $\slabs$-parsable, we can take $\parsefunc \isdef \mathcal{K}$ as required by Definition~\ref{def:parsable}.
\end{proof}

\subsection{Two Characterizations of Definable Context-Free Sets of Graphs}
\label{sec:characterization}

We combine the previously obtained results in a characterization of
the intersection between the classes of context-free and definable
graph languages. We state these characterization results in two
versions.  In the first version, we explicitly keep track of the set
of sources $\slabs \finsubseteq \sourcelabels$ witnessing that a set
of graphs is context-free:

\begin{thm}\label{thm:cf-def}
  For every set $\mathcal{L} \subseteq \graphs$ of graphs and sort
  $\slabs \finsubseteq \sourcelabels$, the following are equivalent:
  \begin{enumerate}
  \item\label{it1:thm:cf-def} $\mathcal{L}$ is definable and
    context-free, for a grammar over $\hrsignature^\slabs$-operations,
  \item\label{it2:thm:cf-def} $\mathcal{L}$ is recognizable in
    $\algebra$ and $\mathcal{L} \subseteq \btwgraphsof{\slabs}$,
  \item\label{it2a:thm:cf-def} $\mathcal{L}$ is recognizable in
    $\algebra^{\slabs}$ and $\mathcal{L} \subseteq
    \btwgraphsof{\slabs}$,
  \item\label{it3:thm:cf-def} $\mathcal{L}$ is $\slabs$-parsable,
\end{enumerate}
\end{thm}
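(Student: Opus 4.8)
The plan is to establish the four-way equivalence as a single cycle of implications
$(\ref{it3:thm:cf-def}) \Rightarrow (\ref{it1:thm:cf-def}) \Rightarrow (\ref{it2:thm:cf-def}) \Rightarrow (\ref{it2a:thm:cf-def}) \Rightarrow (\ref{it3:thm:cf-def})$, since each step is already supported by a result proved earlier in the paper. The implication $(\ref{it3:thm:cf-def}) \Rightarrow (\ref{it1:thm:cf-def})$ is exactly Proposition~\ref{prop:parsable}: an $\slabs$-parsable set is both definable and context-free via a grammar over $\hrsignature^\slabs$-operations.

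For $(\ref{it1:thm:cf-def}) \Rightarrow (\ref{it2:thm:cf-def})$ I would split the two conjuncts. Definability of $\mathcal{L}$ yields recognizability in $\algebra$ directly by Theorem~\ref{thm:graphs-def-rec}. For the inclusion $\mathcal{L} \subseteq \btwgraphsof{\slabs}$, I observe that a grammar over $\hrsignature^\slabs$-operations can only produce graphs that are values of ground $\hrsignature^\slabs$-terms; by the definition of $\btwgraphsof{\slabs}$ as the term-generated subalgebra of $\algebra^\slabs$, every component of the least solution lies in $\btwgraphsof{\slabs}$.

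The step $(\ref{it2:thm:cf-def}) \Rightarrow (\ref{it2a:thm:cf-def})$ is an application of Lemma~\ref{lemma:sub-rec}: since $\algebra^\slabs$ is a subalgebra of $\algebra$ and $\mathcal{L} \subseteq \btwgraphsof{\slabs} \subseteq \graphsof{\slabs}$ is contained in the universe of $\algebra^\slabs$, recognizability in $\algebra$ transfers to recognizability in $\algebra^\slabs$, while the inclusion $\mathcal{L} \subseteq \btwgraphsof{\slabs}$ is carried over unchanged. Finally, for $(\ref{it2a:thm:cf-def}) \Rightarrow (\ref{it3:thm:cf-def})$ I would invoke Theorem~\ref{thm:bounded-tree-width-is-strongly-context-free}, which states that $\btwgraphsof{\slabs}$ is itself $\slabs$-parsable, together with the closure property of Lemma~\ref{lemma:parsable-closure}: taking $\mathcal{L}$ as a subset of the $\slabs$-parsable set $\btwgraphsof{\slabs}$ that is recognizable in $\algebra^\slabs$ immediately yields that $\mathcal{L}$ is $\slabs$-parsable.

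Since all the substantive work has already been carried out in the preceding sections---most notably the construction of a definable transduction from graphs to parse trees underlying Theorem~\ref{thm:bounded-tree-width-is-strongly-context-free}, and the preservation of parsability under recognizable subsets in Lemma~\ref{lemma:parsable-closure}---I do not expect a genuine obstacle in assembling the cycle. The only point requiring care is verifying in $(\ref{it1:thm:cf-def}) \Rightarrow (\ref{it2:thm:cf-def})$ that the context-free grammar truly confines $\mathcal{L}$ to the \emph{term-generated} subalgebra $\btwgraphsof{\slabs}$, rather than merely to $\graphsof{\slabs}$, so that the bridge to $\slabs$-parsability through $\btwgraphsof{\slabs}$ remains available at the end of the cycle.
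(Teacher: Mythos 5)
Your proposal is correct and follows essentially the same route as the paper: the same cycle of implications, using Theorem~\ref{thm:graphs-def-rec} and the term-generated least-solution argument for (\ref{it1:thm:cf-def})~$\Rightarrow$~(\ref{it2:thm:cf-def}), Lemma~\ref{lemma:sub-rec} for (\ref{it2:thm:cf-def})~$\Rightarrow$~(\ref{it2a:thm:cf-def}), Theorem~\ref{thm:bounded-tree-width-is-strongly-context-free} with Lemma~\ref{lemma:parsable-closure} for (\ref{it2a:thm:cf-def})~$\Rightarrow$~(\ref{it3:thm:cf-def}), and Proposition~\ref{prop:parsable} for (\ref{it3:thm:cf-def})~$\Rightarrow$~(\ref{it1:thm:cf-def}). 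The point you flag as requiring care (confinement to the term-generated subalgebra rather than merely $\graphsof{\slabs}$) is precisely the argument the paper makes explicit.
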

\begin{proof}
  (\ref{it1:thm:cf-def}) $\Rightarrow$ (\ref{it2:thm:cf-def}) By
  Theorem~\ref{thm:graphs-def-rec} every definable set of graphs
  is recognizable in the algebra $\algebra$. Let $\grammar$ be a
  grammar such that $\mathcal{L}=\langof{X}{\grammar}$, for some
  nonterminal $X$ of $\grammar$.  By assumption, $\grammar$ uses only
  $\hrsignature^\slabs$-operations, hence $\mathcal{L} \subseteq
  \graphsof{\slabs}$.  Moreover, since $\langof{X}{\grammar}$ is the
  least solution of $\grammar$, every graph $\graph \in
  \langof{X}{\grammar}$ is term-generated, thus $\mathcal{L} \subseteq
  \btwgraphsof{\slabs}$.

  \vspace*{\baselineskip}
  \noindent(\ref{it2:thm:cf-def}) $\Rightarrow$ (\ref{it2a:thm:cf-def}) By
  Lemma~\ref{lemma:sub-rec}, since $\algebra^{\slabs}$ is a subalgebra
  of $\algebra$.

  \vspace*{\baselineskip}
  \noindent(\ref{it2a:thm:cf-def}) $\Rightarrow$ (\ref{it3:thm:cf-def}) By
  Theorem~\ref{thm:bounded-tree-width-is-strongly-context-free}, the
  set of graphs $\btwgraphsof{\slabs}$ is $\slabs$-parsable, and by
  Lemma~\ref{lemma:parsable-closure}, the restriction of a
  $\slabs$-parsable set to a recognizable set in $\algebra^{\slabs}$
  is $\slabs$-parsable.

  \vspace*{\baselineskip}
  \noindent(\ref{it3:thm:cf-def}) $\Rightarrow$ (\ref{it1:thm:cf-def}) By
  Proposition~\ref{prop:parsable}.
\end{proof}

The second theorem is more coarse, in that we quantify out
existentially the sort and the bound on the tree-width, in each item:

\begin{thm}\label{thm:cf-def-v2}
  For every set $\mathcal{L} \subseteq \graphsof{\emptyset}$ of graphs
  with no sources, the following are
  equivalent: \begin{enumerate}
  \item\label{it1:thm:cf-def-v2} $\mathcal{L}$ is definable and
    context-free,
  \item\label{it2:thm:cf-def-v2} $\mathcal{L}$ is recognizable and has
    bounded tree-width,
  \item\label{it3:thm:cf-def-v2} $\mathcal{L}$ is parsable,
  \item\label{it4:thm:cf-def-v2} There exists a finite set
    $\alphabetTwo$ of edge labels, a definable
    $(\relsigtree{}(\alphabetTwo),~\relsiggraph{\emptyset})$-transduction
    $F$ and a definable
    $(\relsiggraph{\emptyset},~\relsigtree{}(\alphabetTwo))$-transduction
    $H$, such that~
    (a) $\dom{F \circ H} = \gstruc{\mathcal{L}}$, and
    (b) $F \circ H$ is the identity on $\gstruc{\mathcal{L}}$.
  \end{enumerate}
\end{thm}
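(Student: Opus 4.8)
The strategy is to obtain the equivalence of~(\ref{it1:thm:cf-def-v2}), (\ref{it2:thm:cf-def-v2}) and~(\ref{it3:thm:cf-def-v2}) by existentially quantifying the sort $\slabs$ in the fine-grained Theorem~\ref{thm:cf-def}, and then to fold the logical condition~(\ref{it4:thm:cf-def-v2}) into the cycle by proving $(\ref{it3:thm:cf-def-v2}) \Rightarrow (\ref{it4:thm:cf-def-v2}) \Rightarrow (\ref{it1:thm:cf-def-v2})$. For the first part I would fix $\slabs$ and read off from Theorem~\ref{thm:cf-def} the link between ``definable and context-free over $\hrsignature^\slabs$'', ``recognizable in $\algebra$ with $\mathcal{L} \subseteq \btwgraphsof{\slabs}$'', and ``$\slabs$-parsable'', then quantify $\slabs$ existentially and match the three resulting statements to~(\ref{it1:thm:cf-def-v2})--(\ref{it3:thm:cf-def-v2}). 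A context-free $\mathcal{L}$ is generated by a finite grammar, which uses finitely many function symbols and hence finitely many source labels; collecting them into a finite $\slabs$ shows that ``definable and context-free'' is precisely $\exists\slabs$ of condition~(1) of Theorem~\ref{thm:cf-def}, giving~(\ref{it1:thm:cf-def-v2}). Condition~(\ref{it3:thm:cf-def-v2}) is literally $\exists\slabs$ $\slabs$-parsable, by Definition~\ref{def:parsable}. For~(\ref{it2:thm:cf-def-v2}), the clause ``recognizable in $\algebra$'' does not mention $\slabs$, so it remains to show that $\mathcal{L} \subseteq \graphsof{\emptyset}$ has bounded tree-width iff $\mathcal{L} \subseteq \btwgraphsof{\slabs}$ for some $\slabs$: the forward direction is Theorem~\ref{lemma:cf-twb}, and for the converse, if the tree-width is bounded by $k$ I would take $\cardof{\slabs}=k+1$ and apply Corollary~\ref{cor:mapping-bounded-tree-width-graphs-back-to-dervation-trees} with $\slabs'=\emptyset$ to write every $\graph\in\mathcal{L}$ as $\hval(\tree)$ for a parse tree $\tree\in\trees(\alphabetParseOf{\slabs})$, so $\graph$ is the value of an $\hrsignature^\slabs$-term and thus lies in $\btwgraphsof{\slabs}$.

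For $(\ref{it3:thm:cf-def-v2}) \Rightarrow (\ref{it4:thm:cf-def-v2})$, let $\parsefunc$ witness $\slabs$-parsability. I would set $\alphabetTwo \isdef \alphabetParseOf{\slabs}$ (finite), take $H$ to be $\parsefunc$ and $F$ to be $\proj{\hval}{\trees(\alphabetParseOf{\slabs})}$, which is definable by Lemma~\ref{lemma:evaluation-homomorphism-transduction}; since the graphs in $\mathcal{L}$ are sourceless I compose with the trivial definable transductions that add, resp.\ drop, the empty source relations, so that $H$ reads $\relsiggraph{\emptyset}$-structures and $F$ outputs them. Then $\dom{H}=\dom{\parsefunc}=\gstruc{\mathcal{L}}$ and $\img{H}\subseteq\trees(\alphabetParseOf{\slabs})=\dom{F}$, whence $\dom{F\circ H}=\gstruc{\mathcal{L}}$, establishing~(a); and for $\graph\in\mathcal{L}$ every $\tree\in H(\graph)$ satisfies $\hval(\tree)=\graph$ by Definition~\ref{def:parsable}(\ref{it2:parsable}), so $(F\circ H)(\gstruc{\graph})=\set{\gstruc{\graph}}$, establishing~(b).

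For $(\ref{it4:thm:cf-def-v2}) \Rightarrow (\ref{it1:thm:cf-def-v2})$, definability is immediate: by Proposition~\ref{prop:comp-restr-trans} the transduction $F\circ H$ is definable and so is its domain, and $\gstruc{\mathcal{L}}=\dom{F\circ H}$ by~(a). For context-freeness I would apply Corollary~\ref{cor:cf-unranked} with the given $\alphabetTwo$ and $F$, using the input set of trees $\mathcal{K}\isdef F^{-1}(\gstruc{\mathcal{L}})$. Since $\mathcal{L}$ is definable, $\mathcal{K}$ is definable by the Backwards Translation Theorem~\ref{thm:back-trans}, hence recognizable in $\treealgebra(\alphabetTwo)$ by Theorem~\ref{thm:unranked-trees-rec-def}. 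It then remains to check $F(\gstruc{\mathcal{K}})=\gstruc{\mathcal{L}}$: the inclusion $F(F^{-1}(\gstruc{\mathcal{L}}))\subseteq\gstruc{\mathcal{L}}$ holds for any relation, while for the reverse the identity condition~(b) gives $\gstruc{\graph}\in(F\circ H)(\gstruc{\graph})$ for each $\graph\in\mathcal{L}$, so some intermediate $\tree\in H(\graph)$ has $\gstruc{\graph}\in F(\gstruc{\tree})$, whence $\gstruc{\tree}\in\mathcal{K}$ and $\gstruc{\graph}\in F(\gstruc{\mathcal{K}})$. Corollary~\ref{cor:cf-unranked} then yields that $\mathcal{L}$ is context-free, closing the cycle.

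I expect the main subtlety to lie in this last step, namely in reading hypothesis~(b) not merely as a statement about $F\circ H$ but as guaranteeing, for each $\graph\in\mathcal{L}$, an \emph{intermediate} tree that $F$ maps back onto $\graph$; this is exactly what makes $F^{-1}(\gstruc{\mathcal{L}})$ surject onto $\gstruc{\mathcal{L}}$ and hence the correct recognizable set of trees feeding Corollary~\ref{cor:cf-unranked}. The remaining effort is routine bookkeeping of the signatures $\relsiggraph{\emptyset}$, $\relsiggraph{\slabs}$ and the parse-tree alphabets, together with the observation that every context-free set is generated by a grammar over $\hrsignature^\slabs$ for a finite $\slabs$.
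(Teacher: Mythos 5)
Your proposal is correct and takes essentially the same route as the paper: both derive the equivalence of (\ref{it1:thm:cf-def-v2})--(\ref{it3:thm:cf-def-v2}) by reducing to Theorem~\ref{thm:cf-def} over an existentially chosen sort $\slabs$ (using Lemma~\ref{lemma:cf-twb} in one direction and Corollary~\ref{cor:mapping-bounded-tree-width-graphs-back-to-dervation-trees} in the other to translate ``bounded tree-width'' into $\mathcal{L} \subseteq \btwgraphsof{\slabs}$), prove (\ref{it3:thm:cf-def-v2})$\Rightarrow$(\ref{it4:thm:cf-def-v2}) by taking $H$ to be the parsing transduction and $F \isdef \proj{\hval}{\trees(\alphabetParseOf{\slabs})}$ via Lemma~\ref{lemma:evaluation-homomorphism-transduction}, and prove (\ref{it4:thm:cf-def-v2})$\Rightarrow$(\ref{it1:thm:cf-def-v2}) with $\mathcal{K} \isdef F^{-1}(\gstruc{\mathcal{L}})$, Theorem~\ref{thm:back-trans}, Theorem~\ref{thm:unranked-trees-rec-def} and Corollary~\ref{cor:cf-unranked}. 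Your explicit verification that $F(\gstruc{\mathcal{K}})=\gstruc{\mathcal{L}}$, extracting the intermediate tree guaranteed by condition~(b), is a detail the paper leaves implicit, but it is the same argument.
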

\begin{proof}
  (\ref{it1:thm:cf-def-v2}) $\Rightarrow$ (\ref{it2:thm:cf-def-v2}) As
  every grammar uses only $\hrsignature^\slabs$-operations for the set
  of sources $\slabs \finsubseteq \sourcelabels$ that appear in the
  grammar, we obtain that $\mathcal{L}$ is recognizable in $\algebra$
  and $\mathcal{L} \subseteq \btwgraphsof{\slabs}$ from
  Theorem~\ref{thm:cf-def} (\ref{it1:thm:cf-def} $\Rightarrow$
  \ref{it2:thm:cf-def}). Then, $\twd{\graph} \leq \cardof{\slabs} - 1$,
  for every graph $\graph \in \btwgraphsof{\slabs}$, by
  Lemma~\ref{lemma:cf-twb}.

  \vspace*{\baselineskip}
  \noindent(\ref{it2:thm:cf-def-v2}) $\Rightarrow$
  (\ref{it3:thm:cf-def-v2}) Let $k \ge 1$ be the least integer such
  that $\twd{\graph} \le k$, for all graphs $\graph \in \mathcal{L}$,
  and let $\graph$ be such a graph.  Since $\mathcal{L} \subseteq
  \graphsof{\emptyset}$, the graph $\graph$ does not have
  sources. Hence,
  condition~(\ref{it1:mapping-bounded-tree-width-graphs-back-to-dervation-trees})
  of
  Corollary~\ref{cor:mapping-bounded-tree-width-graphs-back-to-dervation-trees}
  is satisfied, for any sort $\slabs \finsubseteq \sourcelabels$ with
  $k \le \cardof{\slabs}-1$, hence $\gstruc{\graph} \in
  \dom{\mathcal{K}}$, where $\mathcal{K}$ is the transduction whose
  existence is stated by
  Corollary~\ref{cor:mapping-bounded-tree-width-graphs-back-to-dervation-trees}.
  Then, there exists a tree $\tree\in\trees(\alphabetParseOf{\slabs})$
  such that $(\gstruc{\graph},\gstruc{\tree}) \in \mathcal{K}$ such
  that $\hval(\tree) = \graph$, by
  condition~(\ref{it2:mapping-bounded-tree-width-graphs-back-to-dervation-trees})
  of
  Corollary~\ref{cor:mapping-bounded-tree-width-graphs-back-to-dervation-trees}.
  Hence, $\mathcal{L} \subseteq \btwgraphsof{\slabs}$ and, by
  Theorem~\ref{thm:cf-def}, $\mathcal{L}$ is $\slabs$-parsable.

  \vspace*{\baselineskip}
  \noindent(\ref{it3:thm:cf-def-v2}) $\Rightarrow$
  (\ref{it4:thm:cf-def-v2}) We have that $\mathcal{L}$ is
  $\slabs$-parsable for some sort $\slabs \finsubseteq
  \sourcelabels$. Let the alphabet be $\alphabetTwo \isdef
  \alphabetParseOf{\slabs}$.  Then, there exists a definable
  $(\relsiggraph{\slabs},~\relsigtree{}(\alphabetTwo))$-transduction
  $H$ that witnesses the $\slabs$-parsability of $\mathcal{L}$.  Since
  $\mathcal{L} \subseteq \graphsof{\emptyset}$, we can assume
  w.l.o.g. that $H$ is a
  $(\relsiggraph{\emptyset},~\relsigtree{}(\alphabetTwo))$-transduction
  (using a transduction that simply removes the predicates that encode the sources).
  Moreover, $F \isdef \proj{\hval}{\trees(\alphabetTwo)}$ is
  definable, by Lemma~\ref{lemma:evaluation-homomorphism-transduction}
  and $F \circ H$ is the identity on $\gstruc{\mathcal{L}}$, as
  required.

  \vspace*{\baselineskip}
  \noindent(\ref{it4:thm:cf-def-v2}) $\Rightarrow$
  (\ref{it1:thm:cf-def-v2}) By Proposition~\ref{prop:comp-restr-trans}
  (\ref{it1:comp-restr-trans}), $F \circ H$ is a definable
  transduction, hence $\dom{F \circ H} = \gstruc{\mathcal{L}}$ is
  definable, by Proposition~\ref{prop:comp-restr-trans}
  (\ref{it3:comp-restr-trans}). Then, $\mathcal{K} \isdef
  F^{-1}(\mathcal{L})$ is definable, by Theorem~\ref{thm:back-trans}.
  By Theorem~\ref{thm:unranked-trees-rec-def}, $\mathcal{K}$ is
  recognizable in $\treealgebra(\alphabetTwo)$ and $\mathcal{L}$ is
  context-free, by Corollary \ref{cor:cf-unranked}.
\end{proof}
We note that item (\ref{it4:thm:cf-def-v2}) is missing from
Theorem~\ref{thm:cf-def} because there is no easy way of computing an
upper bound on the tree-width of $\mathcal{L}$. That is, we identify
the following problem for future work: given \mso-definable
transductions $F$ and $H$ as stated in item~\ref{it4:thm:cf-def-v2} of
Theorem~\ref{thm:cf-def-v2}, compute a bound on the tree-width of
$\mathcal{L}$ based on $F$ and $H$. We remark that the construction
of~\cite{CourcelleEngelfriet95} can be used to derive an upper bound,
but that this bound is likely not the optimal one. On the other hand,
item (\ref{it2a:thm:cf-def}) of Theorem~\ref{thm:cf-def} could be
added to Theorem~\ref{thm:cf-def-v2} (we omit it for conciseness
reasons).

We further note that the problem of whether one of the conditions from
Theorem~\ref{thm:cf-def} (resp.\  Theorem~\ref{thm:cf-def-v2}) holds for
the set of graphs generated by a given graph grammar is
undecidable.
In fact, even the problem of whether a given context-free
word grammar defines a recognizable (and hence definable) word
language is undecidable, according to a result by
Greibach~\cite{Greibach68}.

The impossibility of having an algorithm that decides whether a given
grammar produces a definable set of graphs motivated a number of
definitions that provide sufficient conditions.
Among them proposals for regular graph grammars~\cite{CourcelleV,LPAR2024:Tree_Verifiable_Graph_Grammars,LICS2025:Reg_Grammars_TW_2} over and regular expressions for graphs of tree-width at most~2~\cite{DBLP:conf/icalp/Doumane22} (for more details see the related work section in the beginning of this paper).
We highlight that our characterization provides new tools for the
definition of such sets.
For instance, the definable transductions from Example \ref{ex:tll} fit point (\ref{it4:thm:cf-def-v2}) of Theorem \ref{thm:cf-def-v2}, meaning that adding edges between the lexicographical successors of the leaves in a binary tree taken from a recognizable set produces a context-free and
definable set of graphs (i.e., trees with linked leaves).

\section{Finite versus Locally Finite Recognizability of Graph Sets}
\label{sec:recognizability}

Theorem \ref{thm:cf-def} (\ref{it2:thm:cf-def}-\ref{it2a:thm:cf-def})
proves the equivalence between the recognizability via locally finite
algebras and recognizability via finite algebras, for tree-width
bounded sets of graphs. This is the case because a set $\mathcal{L}
\subseteq \btwgraphsof{\slabs}$ of graphs is recognizable in
$\algebra$ iff it is the inverse image of a homomorphism into a
locally finite recognizer algebra having infinitely many (possibly
non-empty) sorts (\autoref{def:rec}), whereas recognizability of
$\mathcal{L}$ in $\algebra^{\slabs}$ means that the locally finite
recognizer algebra has finitely many non-empty sorts, being thus
finite. Consequently, bounded tree-width sets of graphs can be
recognized using finite algebras, just like
terms\footnote{Recognizability by finite automata, as in the case of
words and terms, is one step further, because automata typically
require a canonical order in which the input is
traversed.}~\cite{Doner70}. This has been initially proved by
Courcelle and Lagergren \cite{courcelle_lagergren_1996}, using a
different argument.

In this section, we prove that locally finite recognizability for
graphs is the limit of recognizability in an infinite increasing
sequence of finite underapproximations (Theorem
\ref{thm:rec-limit}). This means that the equivalence between locally
finite and finite recognizability for bounded tree-width sets of graphs
(points (\ref{it2:thm:cf-def}) and (\ref{it2a:thm:cf-def}) of Theorem
\ref{thm:cf-def}) is actually a cut-off in this infinite increasing
sequence.

The following lemma is an equivalent characterization of the syntactic
congruence (\autoref{def:syntactic-congruence}), that uses only terms
of a restricted form:

\begin{lem}
\label{lem:congr-graphs-characterization}
Let $\mathcal{L} \subseteq \graphsof{}$ be a set of graphs.
Then, $\graph_1 \congrec{\algebra}{\mathcal{L}} \graph_2$
iff $\typeof{\graph_1} = \typeof{\graph_2}$ and
$\rename{\alpha}^\algebra \circ \restrict{\slabs}^\algebra(\graph_1
\pop \graph) \in \mathcal{L} \Leftrightarrow \rename{\alpha}^\algebra
\circ \restrict{\slabs}^\algebra(\graph_2 \pop \graph) \in
\mathcal{L}$, for all graphs $\graph, \graph_1, \graph_2$, finite
permutations $\alpha : \sourcelabels \rightarrow \sourcelabels$ and
sorts $\slabs\finsubseteq\sourcelabels$.
\end{lem}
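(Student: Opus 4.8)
The plan is to prove the lemma as the conjunction of two inclusions between the syntactic congruence $\congrec{\algebra}{\mathcal{L}}$ and the relation $R$ defined by the right-hand side (equal sort, plus the displayed membership equivalences ranging over all $\graph$, all finite permutations $\alpha$ and all sorts $\slabs$). The direction $\congrec{\algebra}{\mathcal{L}} \subseteq R$ is immediate: each of the restricted conditions is just the instance of \autoref{def:syntactic-congruence} for the single first-order context $t(x,y) = \rename{\alpha}(\restrict{\slabs}(x \pop y))$ with $y$ bound to $\graph$, and the sort requirement is shared by both definitions. The substance lies in the converse $R \subseteq \congrec{\algebra}{\mathcal{L}}$. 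Since $\congrec{\algebra}{\mathcal{L}}$ is the \emph{coarsest} congruence on $\algebra$ saturating $\mathcal{L}$ (the remark after \autoref{def:syntactic-congruence}), it suffices to show that $R$ is \emph{itself} a congruence saturating $\mathcal{L}$; the inclusion then follows by maximality. That $R$ saturates $\mathcal{L}$ is clear by taking $\alpha = \mathrm{id}$, $\slabs = \typeof{\graph_1}$ and $\graph = \emptygraph_\emptyset$, for which $\rename{\mathrm{id}}(\restrict{\typeof{\graph_1}}(\graph_i \pop \emptygraph_\emptyset)) = \graph_i$, using that $\emptygraph_\emptyset$ is the unit of $\pop$ and restriction to the own sort is the identity.

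It remains to check that $R$ (which clearly relates only equal-sorted graphs and is an equivalence) is preserved by the three non-constant operations of $\hrsignature$. Throughout I will use the elementary identities that follow directly from the definitions in \autoref{sec:hr}: associativity and commutativity of $\pop$, $\restrict{\slabs}\circ\restrict{\slabs'} = \restrict{\slabs\cap\slabs'}$, the distributivity $\rename{\gamma}(g \pop h) = \rename{\gamma}(g) \pop \rename{\gamma}(h)$, the commutation $\restrict{\slabs}\circ\rename{\gamma} = \rename{\gamma}\circ\restrict{\gamma(\slabs)}$, and $\rename{\alpha}\circ\rename{\gamma} = \rename{\gamma\circ\alpha}$. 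For composition, congruence is the easy case: replacing $\graph_1$ by $\graph_2$ inside $\graph_1 \pop d$ amounts, by associativity, to testing the contexts $\rename{\alpha}(\restrict{\slabs}(\graph_1 \pop (d \pop \graph)))$, which are again of the required restricted form with context graph $d \pop \graph$, so $\graph_1 \mathrel{R} \graph_2$ applies; the left argument follows by commutativity. For renaming, testing a restricted context on $\rename{\gamma}(\graph_1)$ reduces, via distributivity, the commutation identity and $\rename{\alpha}\circ\rename{\gamma} = \rename{\gamma\circ\alpha}$, to a restricted context applied directly to $\graph_1$, with the data $(\alpha,\slabs,\graph)$ replaced by $(\gamma\circ\alpha,\ \gamma(\slabs),\ \rename{\gamma^{-1}}(\graph))$, so again the hypothesis applies.

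The main obstacle is closure under restriction, i.e.\ showing $\restrict{\slabs''}(\graph_1) \mathrel{R} \restrict{\slabs''}(\graph_2)$. The difficulty is that here the restriction is applied \emph{before} the test composition with $\graph$: the labels $D \isdef \typeof{\graph_1}\setminus\slabs'' = \typeof{\graph_2}\setminus\slabs''$ are forgotten first and are thereby prevented from fusing with $\graph$, so one cannot simply push the composition inside the restriction. The plan is to reinstate this ``protection'' by relabeling rather than forgetting: since $\sourcelabels$ is infinite, choose a finite permutation $\pi$ that swaps each $s \in D$ with a fresh label $s^\ast \notin \typeof{\graph}\cup\slabs\cup\slabs''\cup\typeof{\graph_1}$ and fixes every other label. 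One then verifies, for both $i \in \set{1,2}$, the identity
\[
  \restrict{\slabs''}(\graph_i)\pop\graph \;=\; \restrict{\slabs''\cup\typeof{\graph}}\big(\rename{\pi}(\graph_i)\pop\graph\big),
\]
since the fresh labels lie outside $\typeof{\graph}$ and so fuse with nothing, while the outer restriction discards them again, recreating exactly the effect of forgetting $D$ up front; crucially the kept sources $\typeof{\graph_i}\cap\slabs''$ and the labels shared with $\graph$ are untouched, so the two sides coincide as graphs, and because $\typeof{\graph_1}=\typeof{\graph_2}$ the same $\pi$ and the same outer set serve both $i$. Substituting this identity into the test context $\rename{\alpha}(\restrict{\slabs}(\cdots))$, absorbing $\restrict{\slabs}\circ\restrict{\slabs''\cup\typeof{\graph}} = \restrict{\slabs\cap(\slabs''\cup\typeof{\graph})}$, and then eliminating the inner $\rename{\pi}$ by the renaming reduction just established, rewrites the whole expression as a restricted context applied directly to $\graph_i$, with data ($\pi\circ\alpha$, the new sort, and $\rename{\pi^{-1}}(\graph)$) that do not depend on whether $\graph_1$ or $\graph_2$ is plugged in. The hypothesis $\graph_1 \mathrel{R} \graph_2$ then yields the required equivalence, completing the proof that $R$ is a congruence and hence the lemma.
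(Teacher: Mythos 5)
Your proposal is correct and follows essentially the same route as the paper: both prove the statement by showing that the relation defined by the restricted contexts saturates $\mathcal{L}$ and is itself a congruence (hence coincides with the coarsest saturating congruence, i.e., the syntactic one), handling composition via associativity, renaming by pulling the permutation out of the context, and restriction by exploiting the infinite supply of source labels to block unwanted fusion with fresh names. The only cosmetic difference is in the restriction case, where you rename the forgotten sources of $\graph_i$ to fresh labels and discard them with an outer restriction, while the paper equivalently renames the clashing sources of the test graph $\graph$ instead.
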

\begin{proof}
We define the equivalence relation $\requiv{}$ by setting $\graph_1
\requiv{} \graph_2$ iff $\typeof{\graph_1} = \typeof{\graph_2}$ and
$\rename{\alpha}^\algebra \circ \restrict{\slabs}^\algebra(\graph_1
\pop \graph) \in \mathcal{L} \Leftrightarrow \rename{\alpha}^\algebra
\circ \restrict{\slabs}^\algebra(\graph_2 \pop \graph) \in
\mathcal{L}$, for all graphs $\graph, \graph_1, \graph_2$, finite
permuations $\alpha : \sourcelabels \rightarrow \sourcelabels$ and
sets of sources $\slabs\finsubseteq\sourcelabels$.
We now consider some congruence $\equiv$ that saturates $\mathcal{L}$.
Then,
$\graph_1 \equiv \graph_2$ implies that $\graph_1 \requiv{} \graph_2$,
i.e., $\equiv \ \subseteq \ \requiv{}$ (*).
This is because, if $\graph_1 \equiv \graph_2$ then
$\rename{\alpha}^\algebra \circ \restrict{\slabs}^\algebra(\graph_1
\pop \graph) \equiv \rename{\alpha}^\algebra \circ
\restrict{\slabs}^\algebra(\graph_2 \pop \graph)$ (as
$\equiv$ is some congruence), and hence $\rename{\alpha}^\algebra \circ
\restrict{\slabs}^\algebra(\graph_1 \pop \graph) \in \mathcal{L}
\Leftrightarrow \rename{\alpha}^\algebra \circ
\restrict{\slabs}^\algebra(\graph_2 \pop \graph) \in \mathcal{L}$ (as
$\equiv$  saturates
$\mathcal{L}$).

We will establish below that $\requiv{}$ saturates $\mathcal{L}$. By
definition of $\requiv{}$, we have that $\graph_1 \requiv{} \graph_2$
implies that $\graph_1 \in \mathcal{L}\iff\graph_2 \in \mathcal{L}$,
since we can choose $\alpha$ as the identity, $\slabs =
\typeof{\graph_1}$ and $\graph = \emptygraph_{\typeof{\graph_1}}$. By
(*) we obtain that $\requiv{}$ is the coarsest relation that saturates
$\mathcal{L}$. It remains to establish that $\requiv{}$ is a
congruence. Let us consider some graphs $\graph_1 \requiv{} \graph_2$.
We show the closure under the operations of the graph algebra, by a
case distinction: \begin{itemize}[label=$\triangleright$]
\item $\graph_1 \pop^\algebra \graph \requiv{} \graph_2 \pop^\algebra
  \graph$ and $\graph \pop^\algebra \graph_1 \requiv{} \graph
  \pop^\algebra \graph_2$, for all graphs $\graph$: By the
  commutativity of $\pop^\algebra$, it is sufficient to show one of
  the implications. Let us assume $\graph_1 \requiv{} \graph_2$ and
  let $\graph$ be some graph. To establish that $\graph_1
  \pop^\algebra \graph \requiv{} \graph_2 \pop^\algebra \graph$,
  consider some graph $\graph'$, bijective function $\alpha$ and
  $\slabs \finsubseteq\sourcelabels$ such that
  $\rename{\alpha}^\algebra\circ\restrict{\slabs}^\algebra((\graph_1
  \pop^\algebra \graph) \pop^\algebra \graph') \in \mathcal{L}$. Then,
  $\rename{\alpha}^\algebra\circ\restrict{\slabs}^\algebra((\graph_2
  \pop^\algebra \graph) \pop^\algebra \graph') \in \mathcal{L}$
  follows from $\graph_1 \requiv{} \graph_2$, by the definition of
  $\requiv{}$, using the associativity of $\pop^\algebra$ and chosing
  the graph $\graph'' = \graph \pop \graph'$.
\item $\restrict{\slabs^\circ}(\graph_1) \requiv{}
  \restrict{\slabs^\circ}(\graph_2)$, for all $\slabs^\circ
  \finsubseteq \sourcelabels$: We consider some graph $\graph$, finite
  permutation $\alpha$ and $\slabs \finsubseteq\sourcelabels$ such
  that
  $\rename{\alpha}^\algebra\circ\restrict{\slabs}^\algebra(\restrict{\slabs^\circ}(\graph_1)
  \pop^\algebra \graph) \in \mathcal{L}$. We need to show that
  $\rename{\alpha}^\algebra\circ\restrict{\slabs}^\algebra(\restrict{\slabs^\circ}(\graph_2)
  \pop^\algebra \graph) \in \mathcal{L}$. We now verify that we can
  choose $\graph'$, $\alpha'$ and $\slabs'$ such that
  $\rename{\alpha'}\circ\restrict{\slabs'}(\graph_i \pop^\algebra
  \graph') =
  \rename{\alpha}^\algebra\circ\restrict{\slabs}^\algebra(\restrict{\slabs^\circ}(\graph_i)
  \pop^\algebra \graph)$ for $i=1,2$. Indeed, we can choose $\graph' =
  \rename{\beta}(\graph)$, for some permutation $\beta$ that renames
  the sources $\typeof{\graph} \setminus \slabs^\circ$ to some fresh
  sources $\slabs''$, choosing $\slabs' = \slabs \cup \beta(\slabs)$
  and setting $\alpha'$ as the permutation that does all renamings of
  $\alpha$ and $\beta^{-1}$.  The claim then follows from $\graph_1
  \requiv{} \graph_2$.
\item $\rename{\beta}(\graph_1) \requiv{} \rename{\beta}(\graph_2)$,
  for all finite permutations $\beta$: We consider a graph $\graph$,
  finite permutation $\alpha$ and $\slabs \finsubseteq\sourcelabels$,
  such that
  $\rename{\alpha}^\algebra\circ\restrict{\slabs}^\algebra(\rename{\beta}(\graph_1)
  \pop^\algebra \graph) \in \mathcal{L}$. We need to show that
  $\rename{\alpha}^\algebra\circ\restrict{\slabs}^\algebra(\rename{\beta}(\graph_2)
  \pop^\algebra \graph) \in \mathcal{L}$. We now observe that
  $\rename{\alpha \circ
    \beta}\circ\restrict{\beta^{-1}(\slabs)}(\graph_i \pop^\algebra
  \rename{\beta^{-1}}(\graph)) =
  \rename{\alpha}^\algebra\circ\restrict{\slabs}^\algebra(\rename{\beta}(\graph_i)
  \pop^\algebra \graph)$ for $i=1,2$.  The claim follows from
  $\graph_1 \requiv{} \graph_2$. \qedhere
\end{itemize}
\end{proof}

Specializing Lemma~\ref{lem:congr-graphs-characterization} to graphs of empty sort, we obtain the following result that has appeared in previous work:

\begin{corC}[{\cite[Theorem 4.34]{courcelle_engelfriet_2012}}]
\label{cor:congr-graphs-characterization-simp}
Let $\mathcal{L} \subseteq \graphsof{\emptyset}$ be a set of graphs.
Then, $\graph_1 \congrec{\algebra}{\mathcal{L}} \graph_2$
iff $\typeof{\graph_1} = \typeof{\graph_2}$ and
$\restrict{\slabs}^\algebra(\graph_1
\pop \graph) \in \mathcal{L} \Leftrightarrow \restrict{\slabs}^\algebra(\graph_2 \pop \graph) \in
\mathcal{L}$, for all graphs $\graph, \graph_1, \graph_2$ and
sorts $\slabs\finsubseteq\sourcelabels$.
\end{corC}
\begin{proof}
Follows from Lemma~\ref{lem:congr-graphs-characterization} for some $\mathcal{L} \subseteq \graphsof{\emptyset}$ by the observation that for any graph $\graph$ we have $\graph \in \mathcal{L}$ iff $\rename{\alpha}^\algebra(\graph) \in \mathcal{L}$ for any finite
permutation $\alpha$.
\end{proof}
The next step is proving that the syntactic congruences of a language
of graphs of empty sort agree over the algebras $\algebra$ and
$\algebra^\slabs$, for any sort $\slabs \finsubseteq \sourcelabels$:

\begin{lem}\label{lemma:agreement}
  Let $\mathcal{L} \subseteq \graphsof{\emptyset}$ be a language, $\slabs\finsubseteq\sourcelabels$ be a sort, and $\graph_1, \graphs_2 \in \graphsof{\slabs}$ be some graphs.
  Then, $\graph_1 \congrec{\algebra}{\mathcal{L}} \graph_2$ iff $\graph_1
  \congrec{\algebra^{\slabs}}{\mathcal{L}} \graph_2$.
\end{lem}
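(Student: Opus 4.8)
The plan is to prove the two implications separately, treating the forward direction as routine and reserving the real work for the converse, where I would invoke the characterization of $\congrec{\algebra}{\mathcal{L}}$ from Corollary~\ref{cor:congr-graphs-characterization-simp} (applicable because $\mathcal{L} \subseteq \graphsof{\emptyset}$). For the direction $\graph_1 \congrec{\algebra}{\mathcal{L}} \graph_2 \Rightarrow \graph_1 \congrec{\algebra^\slabs}{\mathcal{L}} \graph_2$, I would argue directly from Definition~\ref{def:syntactic-congruence}: since $\algebra^\slabs$ is a subalgebra of $\algebra$, every first-order $\hrsignature^\slabs$-term is a first-order $\hrsignature$-term, every $c \in \graphsof{\slabs}$ lies in the universe $\graphs$ of $\algebra$, and the interpretations of the operations agree, so $t^{\algebra^\slabs}(\graph_i,\vec{c}) = t^{\algebra}(\graph_i,\vec{c})$. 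Hence the defining condition of $\congrec{\algebra}{\mathcal{L}}$, which quantifies over strictly more terms and elements, specialises to that of $\congrec{\algebra^\slabs}{\mathcal{L}}$; together with $\typeof{\graph_1} = \typeof{\graph_2}$ this settles the easy direction (it is just the general fact that a subalgebra induces a coarser syntactic congruence).

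For the converse $\graph_1 \congrec{\algebra^\slabs}{\mathcal{L}} \graph_2 \Rightarrow \graph_1 \congrec{\algebra}{\mathcal{L}} \graph_2$, by Corollary~\ref{cor:congr-graphs-characterization-simp} it suffices to show, given $\typeof{\graph_1} = \typeof{\graph_2} \subseteq \slabs$, that $\restrict{\slabs'}^{\algebra}(\graph_1 \pop \graph) \in \mathcal{L} \iff \restrict{\slabs'}^{\algebra}(\graph_2 \pop \graph) \in \mathcal{L}$ for every graph $\graph$ and every sort $\slabs' \finsubseteq \sourcelabels$. I would first reduce the relevant contexts: if $\slabs'$ meets $\typeof{\graph_i \pop \graph} = \typeof{\graph_1} \cup \typeof{\graph}$, then $\restrict{\slabs'}^{\algebra}(\graph_i \pop \graph)$ has nonempty sort and so lies outside $\mathcal{L} \subseteq \graphsof{\emptyset}$ for both $i$ (the sort being the same for $i=1,2$), so the biconditional holds trivially; and if $\slabs'$ is disjoint from that set, then $\restrict{\slabs'}^{\algebra}(\graph_i \pop \graph)$ coincides with $\restrict{\emptyset}^{\algebra}(\graph_i \pop \graph)$. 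Thus it remains only to handle the contexts $\restrict{\emptyset}^{\algebra}(\cdot \pop \graph)$.

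The crux is to replace the arbitrary context graph $\graph$ (which may use source labels outside $\slabs$ and may have unbounded tree-width) by a graph lying in the universe $\graphsof{\slabs}$ of $\algebra^\slabs$. I would set $\graph' \isdef \restrict{\slabs}^{\algebra}(\graph) \in \graphsof{\slabs}$. Because $\typeof{\graph_i} \subseteq \slabs$, the common source labels that govern the fusion in $\graph_i \pop \graph$ are unchanged by passing to $\graph'$, since $\typeof{\graph_i} \cap \typeof{\graph'} = \typeof{\graph_i} \cap (\typeof{\graph} \cap \slabs) = \typeof{\graph_i} \cap \typeof{\graph}$ and restriction drops source designations without moving vertices. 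Hence $\graph_i \pop \graph$ and $\graph_i \pop \graph'$ share the same vertices, edges, labels and incidences and differ only in which vertices carry source labels outside $\slabs$; applying $\restrict{\emptyset}^{\algebra}$ erases all source designations, so $\restrict{\emptyset}^{\algebra}(\graph_i \pop \graph) \iso \restrict{\emptyset}^{\algebra}(\graph_i \pop \graph')$ for $i=1,2$. Since $\restrict{\emptyset}(x \pop y)$ is a first-order $\hrsignature^\slabs$-term and $\graph' \in \graphsof{\slabs}$, the hypothesis $\graph_1 \congrec{\algebra^\slabs}{\mathcal{L}} \graph_2$ instantiated with this term and $c_1 = \graph'$ gives $\restrict{\emptyset}^{\algebra^\slabs}(\graph_1 \pop \graph') \in \mathcal{L} \iff \restrict{\emptyset}^{\algebra^\slabs}(\graph_2 \pop \graph') \in \mathcal{L}$; as the operations of $\algebra^\slabs$ agree with those of $\algebra$, combining this with the isomorphism above closes the argument.

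I expect the main obstacle to be the structural verification that $\restrict{\emptyset}^{\algebra}(\graph_i \pop \graph) \iso \restrict{\emptyset}^{\algebra}(\graph_i \pop \graph')$, that is, carefully tracking that the fusion induced by $\pop$ depends only on the common source labels, all of which lie in $\slabs$ because $\typeof{\graph_i} \subseteq \slabs$. This is precisely where the hypothesis $\mathcal{L} \subseteq \graphsof{\emptyset}$ is essential: it is what permits the reduction to the sourceless restriction $\restrict{\emptyset}$ and thereby lets us discard the source labels of the context graph that fall outside $\slabs$.
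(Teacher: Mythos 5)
Your proof is correct and follows essentially the same route as the paper's: the forward direction is the generic subalgebra observation (the restriction of $\congrec{\algebra}{\mathcal{L}}$ to $\graphsof{\slabs}\times\graphsof{\slabs}$ is a congruence of $\algebra^\slabs$ saturating $\mathcal{L}$, equivalently your direct term-quantification argument), and the converse uses Corollary~\ref{cor:congr-graphs-characterization-simp} together with replacing the arbitrary context graph $\graph$ by one lying in $\graphsof{\slabs}$. In fact, your explicit empty-sort case analysis and the substitution $\graph' = \restrict{\slabs}^\algebra(\graph)$ spell out---and arguably clean up---the paper's terser reduction step, which rests on the same two facts you isolate: membership in $\mathcal{L} \subseteq \graphsof{\emptyset}$ forces all source designations to be erased, and the fusion in $\graph_i \pop \graph$ only involves labels in $\typeof{\graph_i} \subseteq \slabs$.
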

\begin{proof}
  ``$\Rightarrow$''
  Because $\congrec{\algebra}{\mathcal{L}} ~\cap~ (\graphsof{\slabs} \times \graphsof{\slabs})$ is a congruence that saturates $\mathcal{L}$ w.r.t the algebra $\algebra^{\slabs}$ and $\congrec{\algebra^{\slabs}}{\mathcal{L}}$ is the greatest such
  congruence.
  ``$\Leftarrow$''
  By Corollary~\ref{lem:congr-graphs-characterization}, we need to show that $\restrict{\slabs'}^\algebra(\graph_1
  \pop^\algebra \graph) \in \mathcal{L}$ implies  $\restrict{\slabs'}^\algebra(\graph_2
  \pop^\algebra \graph) \in \mathcal{L}$.
  Because of $\graph_1,
  \graph_2 \in \graphsof{\slabs}$, it suffices to prove that
  $\restrict{\slabs' \cap \slabs}^{\algebra^{\slabs}}(\graph_1 \pop^\algebra  \restrict{\slabs' \setminus \slabs}(\graph))^{\algebra^{\slabs}} \in
  \mathcal{L}$ implies $\restrict{\slabs' \cap    \slabs}^{\algebra^{\slabs}}(\graph_2 \pop^\algebra \restrict{\slabs' \setminus \slabs }^{\algebra^{\slabs}}(\graph)) \in \mathcal{L}$.
  However, this follows from $\graph_1  \requiv{\mathcal{L}}^{\algebra^{\slabs}} \graph_2$.
\end{proof}

Finally, we relate recognizability of a set of graphs of empty sort in
the graph algebra $\algebra$ and any of its subalgebras
$\algebra^\slabs$:

\begin{thm}\label{thm:rec-limit}
Let $\mathcal{L}$ be a set of graphs with no sources. Then,
$\mathcal{L}$ is recognizable in the graph algebra $\algebra$ iff for
each sort $\slabs \finsubseteq \sourcelabels$, the set $\mathcal{L}$
is recognizable in the algebra $\algebra^{\slabs}$.
\end{thm}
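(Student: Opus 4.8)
The plan is to route both directions through the syntactic congruence, using the fact recorded after Definition~\ref{def:syntactic-congruence} that a set is recognizable in an algebra exactly when its syntactic congruence there is locally finite, together with the observation (Definition~\ref{def:recognizability}) that local finiteness is a \emph{per-sort} condition; write $\congrec{\algebra}{\mathcal{L}}$ and $\congrec{\algebra^\slabs}{\mathcal{L}}$ for the syntactic congruences of $\mathcal{L}$ in $\algebra$ and in $\algebra^\slabs$. The left-to-right implication is then immediate from Lemma~\ref{lemma:sub-rec}: since $\mathcal{L} \subseteq \graphsof{\emptyset} \subseteq \graphsof{\slabs}$ and $\algebra^\slabs$ is a subalgebra of $\algebra$ with universe $\graphsof{\slabs}$, recognizability of $\mathcal{L}$ in $\algebra$ yields recognizability of $\mathcal{L}$ in $\algebra^\slabs$ for every sort $\slabs \finsubseteq \sourcelabels$.

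The right-to-left implication is the substantive one. Here it suffices to show that $\congrec{\algebra}{\mathcal{L}}$ is locally finite, i.e.\ that for every sort $\slabs \finsubseteq \sourcelabels$ there are only finitely many $\congrec{\algebra}{\mathcal{L}}$-classes of sort $\slabs$. Fixing such a $\slabs$, I would observe that every graph of sort $\slabs$ belongs to $\graphsof{\slabs}$, so by Lemma~\ref{lemma:agreement} the relations $\congrec{\algebra}{\mathcal{L}}$ and $\congrec{\algebra^\slabs}{\mathcal{L}}$ coincide on all such graphs; consequently they induce the very same partition of the graphs of sort $\slabs$, hence the same number of classes of that sort. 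By hypothesis $\mathcal{L}$ is recognizable in $\algebra^\slabs$, so $\congrec{\algebra^\slabs}{\mathcal{L}}$ is locally finite and has finitely many classes of sort $\slabs$, and therefore so does $\congrec{\algebra}{\mathcal{L}}$. Since $\slabs$ was arbitrary, $\congrec{\algebra}{\mathcal{L}}$ is locally finite and $\mathcal{L}$ is recognizable in $\algebra$.

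The conceptual content has already been discharged by Lemma~\ref{lemma:agreement}, which aligns the classes of the two syntactic congruences one sort at a time; given it, what remains is pure bookkeeping, and I do not foresee a genuine obstacle. The only point worth stressing---and the reason behind the ``limit'' phrasing of the statement---is that no single subalgebra $\algebra^\slabs$ can witness recognizability in $\algebra$: one needs the finite class-count of $\congrec{\algebra^\slabs}{\mathcal{L}}$ for \emph{every} $\slabs$ simultaneously in order to conclude local finiteness over the infinite set of sorts. One should also take care that ``class of sort $\slabs$'' is read consistently on both sides (graphs of sort exactly $\slabs$), which is precisely the regime covered by Lemma~\ref{lemma:agreement}.
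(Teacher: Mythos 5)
Your proof is correct and follows essentially the same route as the paper: the substantive right-to-left direction is exactly the paper's argument, transferring local finiteness of the syntactic congruence sort-by-sort via Lemma~\ref{lemma:agreement} and the fact that recognizability is equivalent to local finiteness of $\congrec{\algebra}{\mathcal{L}}$. The only (immaterial) difference is that you dispatch the easy direction via Lemma~\ref{lemma:sub-rec}, whereas the paper reads both directions off the identity $\congrec{\algebra}{\mathcal{L}} \cap (\graphsof{\slabs} \times \graphsof{\slabs}) = \congrec{\algebra^{\slabs}}{\mathcal{L}}$.
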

\begin{proof}
By Lemma~\ref{lemma:agreement}, we have that $\congrec{\algebra}{\mathcal{L}} ~\cap~ (\graphsof{\slabs} \times \graphsof{\slabs}) = \ \congrec{\algebra^{\slabs}}{\mathcal{L}}$ for every $\slabs \finsubseteq \sourcelabels$.
In particular, $\congrec{\algebra}{\mathcal{L}}$ is locally finite iff $\congrec{\algebra^{\slabs}}{\mathcal{L}}$ is locally finite for every $\slabs \finsubseteq \sourcelabels$.
\end{proof}
Note that the condition of Theorem \ref{thm:rec-limit} is different
from the \emph{weak recognizability} notion introduced by
Boja\'{n}czyk~\cite{journals/corr/abs-2305-18039}, that considers an
infinite sequence of finitely-generated \emph{term-generated}
algebras.  Weak recognizability and recognizability are in fact not
the same, as~\cite[Example 11]{journals/corr/abs-2305-18039}
shows. This points to the crucial difference between term-generated
and non term-generated algebras (Lemma \ref{lemma:cf-twb}).

\section{Conclusions}

We have given a characterization of definable context-free sets of
graphs, by showing their equivalence with bounded tree-width and
recognizable sets (where recognizability is understood either in
locally finite or finite algebras), parsable sets (where the parse
trees can be recovered from the graph by a definable transduction) and
images of recognizable unranked sets of trees under definable
transductions whose inverses are definable as well. We finalize our
study with a discussion on recognizability and a proof that locally
finite recognizer algebras are limits of infinite sequences of finite
recognizer algebras.

\bibliographystyle{alphaurl}
\bibliography{refs}

\end{document}